\newcommand{\mytitle}{Non-Archimedean Electrostatics}
\newcommand{\keywords}{non-archimedean analysis, thermodynamics, statistical physics, particle models, partition function, canonical ensemble, grand canonical ensemble, point processes, cylinder sets, local zeta functions}
\newcommand{\msc}{
60B20, 
60G55,  
82B23, 
11C08  
11R42  
11R04  
}
\newtheorem{thm}{Theorem}[section]
\newtheorem{cor}[thm]{Corollary}
\newtheorem{lemma}[thm]{Lemma}
\newcommand{\qq}[1]{\qquad \mbox{#1} \qquad}
\newcommand{\BB}[1]{\ensuremath{\mathbb{#1}}}
\newcommand{\N}{\ensuremath{\BB{N}}}
\newcommand{\Q}{\ensuremath{\BB{Q}}}
\newcommand{\R}{\ensuremath{\BB{R}}}
\newcommand{\Z}{\ensuremath{\BB{Z}}}
\newcommand{\C}{\ensuremath{\BB{C}}}
\newcommand{\bs}{\ensuremath{\boldsymbol}}
\newcommand{\mf}{\ensuremath{\mathfrak}}
\newcommand{\wh}[1]{\widehat{#1}}
\newcommand{\rp}[1]{\mathrm{Re}(#1)}
\newcommand*\foo{\vcenter{\hbox{\includegraphics{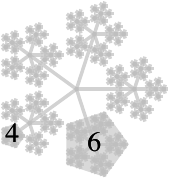}}}}
\numberwithin{equation}{section}
\numberwithin{equation}{section}
\begin{document}
\title{\bfseries\sffamily \mytitle}
\author{{\sc Christopher D.~Sinclair}\footnote{Supported by Simons Foundation Collaboration Grant.}}
\maketitle

\begin{abstract}
We introduce ensembles of repelling charged particles restricted to a ball in a non-archimedean field (such as the $p$-adic rational numbers) with interaction energy between pairs of particles proportional to the logarithm of the ($p$-adic) distance between them. In the {\em canonical ensemble}, a system of $N$ particles is put in contact with a heat bath at fixed inverse temperature $\beta$ and energy is allowed to flow between the system and the heat bath. Using standard axioms of statistical physics, the relative density of states is given by the $\beta$ power of the ($p$-adic) absolute value of the Vandermonde determinant in the locations of the particles. The partition function is the normalizing constant (as a function of $\beta$) of this ensemble, and we identify a recursion that allows this to be computed explicitly in finite time. Probabilities of interest, including the probabilities that fixed subsets will have a prescribed number of particles, and the conditional distribution of particles within a subset given a prescribed occupation number, are given explicitly in terms of the partition function. We then turn to the {\em grand canonical ensemble} where both the energy and number of particles are variable. We compute similar probabilities to those in the canonical ensemble and show how these probabilities can be given in terms the canonical and grand canonical partition functions. Finally, we briefly consider the multi-component ensemble where particles are allowed to take different integer charges, and we connect basic properties of this ensemble to the canonical and grand canonical ensembles.
\end{abstract}

{\bf MSC2010:} \msc

{\bf Keywords:} \keywords
\vspace{1cm}

\section{Introduction}

This paper lies at the intersection of number theory, probability and mathematical statistical physics.

We consider a collection of charged particles confined to a compact region of a complete non-archimedean field ({\em e.g.} $\Q_p$) at a fixed temperature. We may think of this as a non-archimedean ($p$-adic) {\em plasma}, and since the particles have identical charges, they have a tendency to repel. What might we want to know about this plasma? For starters we might want to know how many particles there are. This actually introduces two models: the {\em canonical ensemble} with a fixed number of particles, and the {\em grand canonical ensemble} where the number of particles is variable (but in a specific way suggested by physical laws). In either of these settings we might want to know the probability of finding a specified number of particles in a specified subregion of our domain. More specifically: Given a disjoint union of subregions, and an occupation number for each of these regions, what is the probability of each set having the specified number of particles? A specific, but particularly salient example, follows when we ask for the probability that a given subregion contains no particles. This latter probability is called a {\em gap probability} and for now we focus on this quantity as a proxy for more nuanced statistical information about counts of particles.

What might be a reasonable answer? Certainly a formula for this probability would be ideal. Moreover, if the subregion can be described using a finite amount of data, the ideal formula would require only a finite number of maneuvers to calculate this probability. Another possible solution would be to describe a recurrence for gap probabilities that terminate in a finite number of steps, given a finitely-described subregion. We will be aiming for the latter, and in the canonical ensemble, we will provide such recursions for gap probabilities and other common statistical quantities (like the free energy, partition function, etc).

While we believe our results are new, some of the results here appear in other guises in the literature. The  authors of \cite{MR2229382} investigate the probability that a polynomial with $p$-adic coefficients splits completely (has all roots) in the $p$-adic integers. The roots of such polynomials behave like our $p$-adic electrons at a very specific temperature. One of their main results gives a functional equation for the generating function (over degree of polynomials) of these probabilities. Here we report a similar functional equation for the grand canonical partition function, generalized to all temperatures (and with a new, different proof).

We point the reader to the recent preprint which gives certain important expectations in the canonical ensemble \cite{webster2020logcoulomb}.

In another direction, Igusa studied local zeta functions \cite{MR1743467} of which our canonical partition function (as a function of temperature) is a very specific examples. Examples of Igusa zeta functions similar to the canonical partition function appearing here can be found in \cite{MR1608309}.

Alternate titles for this paper include ``The $p$-adic Selberg Integral'' or ``$p$-adic Random Matrix Theory'' due to the appearance of a non-archimedean version of the Selberg Integral which appears as the partition function of the canonical ensemble. The Selberg Integral is an important special function \cite{MR0018287} and the fact that our partition function is a $p$-adic analog is reason enough to study it. See \cite{MR2434345} for a more comprehensive look at the importance of the Selberg integral. We also point out the recent preprint which considers more direct $p$-adic analogies of Selberg's integral \cite{fu2018selberg}.

The connection to random matrix theory is (currently) more tentative, since there are no random matrices introduced in this paper. However, in Hermitian random matrix theory, a Selberg-like integral appears as the normalization constant for certain ensembles of matrices \cite{MR2129906}. In some instances, determining a closed-form for those Selberg-like integrals leads to the solvability of the related ensemble of random matrices. The results we present here suggest that if this analogy holds for $p$-adic random matrices, then those ensembles are solvable in the sense that we can determine probabilities of interest about the locations and behaviors of the eigenvalues using the techniques outlined here.

The connection between random matrix theory and one and two-dimensional electrostatics is well-known, and indeed understanding the electrostatics provides insight into the eigenvalues of random matrices. This perspective was introduced by Dyson in a series of papers \cite{MR0143558} and explored in detail in the archimedean (real and complex) setting by Forrester in \cite{MR2641363}.

The current work is also connected to potential theory on non-archimedean spaces (See \cite{MR2482347} and the reference therein). Expressions similar to the potential energy of our particle system appear in that domain, where much work is done to investigate low-energy configurations especially as connected to problems in number theory \cite{MR3373827,MR3710764}. For low-temperatures we expect our particles to `jostle' around these low-energy configurations, and it would be worthwhile to explore the implications of our results to fluctuations about the ground state as it arises in potential theory.

For a broader survey of $p$-adic mathematical physics, see \cite{MR3652559}. $p$-adic integrals related to our canonical partition function can be found in \cite{MR1608309}.

We provide a brief introduction to both non-archimedean fields and statistical physics as we go. A more complete (and well-written) introduction to $p$-adic numbers can be found here \cite{MR1488696}. Likewise an approachable introduction to statistical physics can be found in David Tong's lecture notes on the subject \cite{tong}.

\section{Non-Archimedean Fields}

\subsection{Absolute Values}

An absolute value $| \cdot |$ on $\Q$ satisfies the axioms:
\begin{enumerate}
\item $| x | \in [0, \infty)$ with $|x| = 0$ iff $x=0$;
\item $|x y| = |x| |y|$;
\item \label{triangle inequality} $|x + y| \leq |x| + |y|$.
\end{enumerate}
The absolute value with $|0|=0$ and $|x|=1$ for all other $x$ is called the {\em trivial} absolute value and we will exclude it from all consideration. The usual absolute value $|x|_{\infty} := \mathrm{sign}(x) x$ is, of course, an absolute value. Given a prime integer $p$, we may factor $x \in \Q$ as
\[
x = p^b \frac{m}{n}
\]
where $b,m,n$ are integers with $m$ and $n$ relatively prime to $p$. The $p${\em-adic absolute value} is then specified by
\[
\left|p^b \frac{m}{n}\right|_p = p^{-b}.
\]
It is easily varified that $|\cdot|_p$ is an absolute value, which satisfies a stronger version of \ref{triangle inequality}, called the {\em strong triangle inequality}
\begin{equation}
  |x + y|_p \leq \max\{|x|_p, |y|_p\}.
\end{equation}
Absolute values which satisfy the strong triangle inequality are called {\em non-archimedean} absolute values.

Two absolute values are equivalent if one is a power of the other, and an equivalence class of absolute values is called a {\em place} of $\Q$. A celebrated theorem of Ostrowski \cite{Ostrowski1916} shows that any non-trivial absolute value on $\Q$ is equivalent to either the usual absolute value $|\cdot|_{\infty}$ or to $|\cdot|_p$ for some prime $p$.

\subsection{Completions}

The real numbers are constructed from the rational numbers by completing $\Q$ with respect to $|\cdot|_{\infty}$.  Recall the construction: two Cauchy sequences of rational numbers $(x_n)$ and $(y_n)$ are equivalent if $(x_n - y_n)$ converges to 0. The real numbers are then defined to be the set of equivalence classes of Cauchy sequences, and the algebraic operations of addition and multiplication are given by coordinate-wise addition and multiplication of equivalence class representatives.  The rational numbers can be represented by constant sequences, and these are dense in the completion, and using this fact we may extend the absolute value $|\cdot|_{\infty}$ to the usual absolute value on $\R$.

The $p$-adic numbers $\Q_p$ are constructed in the same way, except that the notions of convergence are with respect to the $p$-adic absolute value.  That is, $(x_n)$ is Cauchy if given $\epsilon > 0$ there exists $M$ such
\[
\sup_{n,m > M} |x_n - x_m|_p < \epsilon
\]
and we say $(x_n)$ is equivalent to $(y_n)$ if there exists $M$ such that
\[
\sup_{m > M} |x_m - y_m|_p < \epsilon.
\]
As with the real numbers, $\Q$ is dense in $\Q_p$ and the absolute value $|\cdot|_p$ extends to an absolute value on $\Q_p$.

\subsection{Differences and Similarities Between $\R$ and $\Q_p$}

Despite the similarity in construction, $\Q_p$ has important structural differences from $\R$. Here is a brief compendium of facts about $\Q_p$ which illustrate such differences:
\begin{enumerate}
\item $|\cdot|_p$ takes values in the discrete set $\{p^{n} : n \in \Z\}$.
\item If $n \in \Z$, then $|n|_p \leq 1$.
\item The completion of $\Z$ in $\Q_p$ is given by $\Z_p := \{x \in \Q_p : |x| \leq 1\}$ and called the $p$-adic integers. $\Z$ is dense in $\Z_p$.
\item $\Z_p$ is a ring with a unique maximal ideal ${\mf m}_p := \{x \in \Z_p : |x| < 1\}$. Moreover ${\mf m}_p$ is a principal ideal generated by $p$ (i.e.~$\mf m_{\mf p} = p\Z_p$).
\item $\Z_p / {\mf m}_p \cong \Z/p\Z$.
\end{enumerate}

In spite of these differences, there are also similarities which we will exploit.  Perhaps most important is that both $\R$ and $\Q_p$ are locally compact abelian groups under addition, and $\R^{\times}$ and $\Q_p^{\times}$ are locally compact abelian groups under multiplication. This is useful because it means that $\Q_p$, like $\R$, has a Haar measure.

\subsection{Haar Measure on $\Q_p$}

A Haar measure on a locally compact abelian group is a Borel measure which is invariant under the action of the group on itself. For instance, Lebesgue measure is a Haar measure on $\R$, since the Lebesgue measure of an interval (and hence any Lebesgue measurable set) is invariant under translation. Haar measures are not unique, though once one specifies the measure of a compact set containing an open set, the measure is completely specified.  Thus we may contruct a unique Borel measure $\mu_p$ on $\Q_p$ with the following properties: $\mu_p(\Z_p) = 1$ and for any $x \in \Q_p$ and Borel subset $B$, $\mu_p(x + B) = \mu_p(B)$. This measure also behaves nicely with respect to multiplication $\mu_p(x B) = |x|_p \mu_p(B)$. In particular, $\mu_p({\mf m}_p) = \mu_p(p \Z_p) = 1/p$. We remark that $\Z_p$ is a compact abelian group, and $\mu_p$ restricted to $\Z_p$ is the unique Haar probability measure on this group.

We will have limited need for a Haar measure on $\Q_p^{\times}$, but for the record, it is absolutely continuous with respect to $\mu_p$ on $\Q_p^{\times}$, and a natural Haar measure is given by $\mu_p^{\times}(dx) = \mu_p(dx)/|x|_p$.

\subsection{Non-Archimedean Completions of Number Fields}

We may generalize the previous discussion somewhat by letting $K$ be a number field with ring of integers ${\mf o}$ and a chosen prime ideal $\mf p \subset \mf o$. $\mf o/\mf p$ is a finite field, say $\mathbb{F}_q$ where $q$ is a power of a rational prime.  Each element $x$ of $K$ lies in $\mf p^n/\mf p^{n+1}$ for some rational integer $n$, and we define $|x|_{\mf p} = 1/q^n$. Completing $K$ with respect to $|\cdot|_{\mf p}$ produces the field $K_{\mf p}$. As before, we define the ring of integers of $K_{\mf p}$ and its unique maximal ideal by
\[
\mf o_{\mf p} = \{x \in K_{\mf p} : |x|_{\mf p} \leq 1\} \qq{and}
\mf m_{\mf p} = \{x \in \mf o_{\mf p} : |x|_{\mf p} < 1 \}.
\]
The units in $\mf o_{\mf p}$ are given by $U_{\mf p} = \{x \in \mf o_{\mf p} :  |x|_{\mf p} = 1 \} = \mf o_{\mf p} \setminus \mf m_{\mf p}$.

By general principles, $\mf o_{\mf p}/\mf m_{\mf p}$ is a finite field which can be shown to be isomorphic to $\mathbb{F}_q$.  That is, $\mf m_{\mf p}$ has $q$ cosets, and we denote these by $\mf m_{\mf p}, 1 + \mf m_{\mf p}, \ldots, q - 1 + \mf m_{\mf p}$. It can be shown that $\mf m_{\mf p}$ is a principal ideal, and if $\pi$ is a generator (or {\em uniformizer}) for $\mf m_{\mf p}$ then $|\pi|_{\mf p} = 1/q$.

There is a unique Haar measure $\mu_{\mf p}$ on $\mf o_{\mf p}$ satisfying $\mu_{\mf p}(\mf o_{\mf p}) = 1$. Since $\mf o_{\mf p}$ is the disjoint union of $q$ cosets of $\mf m_{\mf p}$ each of which is a translation of $\mf m_{\mf p}$, we have that $\mu_{\mf p}(\mf m_{\mf p}) = 1/q$.

\begin{figure}[h!]
\centering
\includegraphics[scale=.4]{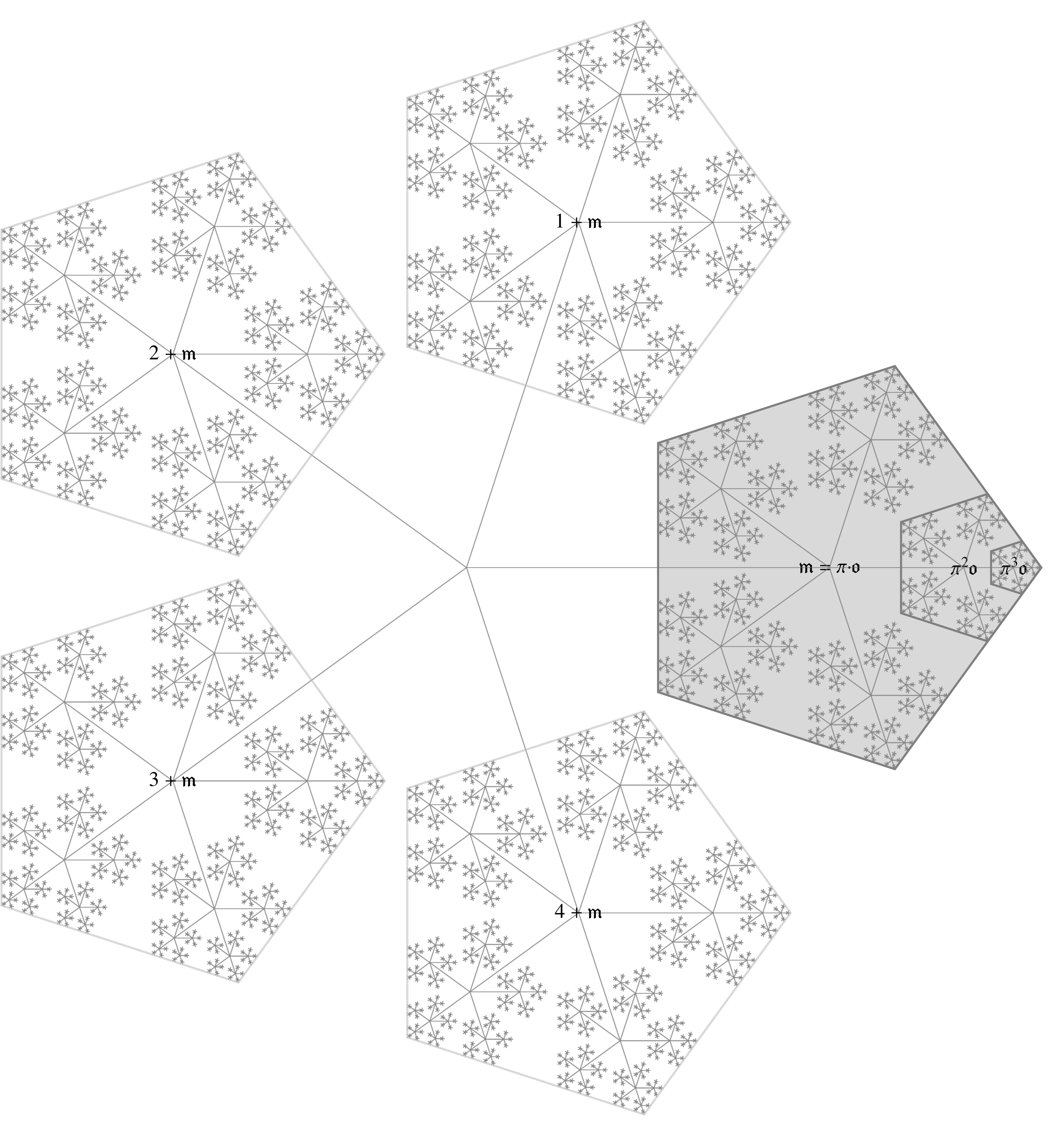}
\caption{A schematic diagram of $\mf o = \mathbb Z_5$. Cosets and powers of the maximal ideal $\mf m$ are represented. The $5$-adic numbers are the fractal boundary of the tree, {\em not} the interstitial edges and vertices. Every ball in $\mathbb Z_5$ is the fractal boundary of one of the naturally appearing `pentagons' in the tree (at all scales, assuming it had infinite resolution), and each is naturally in bijective equivalence with $\mathbb Z_5$ itself. Addition by a fixed $\alpha$ in $\mathbb Z_5$ can be thought of as an epicyclic rotation where the cosets of $\mf m$ are rotated, as are their neighborhoods of the next smallest radius, and so on down the tree. Addition by a fixed rational integer can be distinguished from addition by an arbitrary $\alpha$ since in the former situation there is some radius below which the epicyclic rotations are trivial. Multiplication by a unit (any element not in $\mf m$) stabilizes $\mf m$ and permutes in some manner the non-trivial cosets of $\mf m$. Multiplication by an element of $\mf m$ sends $\mf o$ onto some smaller ideal $\pi^r \mf o$; a contraction of sorts onto a smaller ball. There are many ways to embed $\mathbb Z$ into this diagram, but one simple model has $0$ as the right-most point on the diagram, and given any rational integer there is some `pentagon' for which that integer is its right-most point. This provides a heuristic verification of the density of the rational integers in $\mathbb Z_5$. Haar measure corresponds to areal measure: the largest `pentagon' (all of $\mathbb Z_5$) is assumed to have measure/area 1, and then the measure of any smaller `pentagon' is its area. }
\end{figure}

\subsection{Notation}

We will be working in a single non-archimedean field of characteristic 0. We may take this to be the completion of a number field with respect to an absolute value induced by a prime ideal, but we need not burden our notation with explicit dependence on the particular number field or the prime ideal.

Thus, we set $\mathbb K$ to be a field complete with respect to non-archimedean absolute value $|\cdot|$, with ring of integers $\mf o$, maximal ideal $\mf m$ with $q$ (a prime power) cosets generated by uniformizer $\pi$, and Haar measure $\mu$ normalized so that $\mu(\mf m) = 1/q$. We will often be integrating over the cartesian product of a number of copies of a subset of $\mf o$. We will denote the product measure on $\mf o^N$ by $\mu^N$ (this is the unique Haar probability measure on $\mf o^N$). There is ambiguity interpreting $\mf m^N$ and we will interpret it as the $N$-fold copy of the maximal ideal $\mf m$. If we need to denote the {\em ideal} given by the $n$th power of $\mf m$--all such ideals in $\mf o$ are of this form--we will write $\pi^n \mf o$.

\section{Electrostatics}

Imagine two like charged particles identified with points $\alpha$ and $\alpha'$ in $\mf o$.  We define the {\em interaction energy} of this simple system by
\[
E(\alpha, \alpha') := -\log|\alpha - \alpha'|.
\]
Note that $E(\alpha, \alpha') \in [0,\infty]$, with $E(\alpha) = \infty$ if and only if $\alpha = \alpha'$.  Notice also that $E(\alpha, \alpha')$ takes its minimal value 0, exactly when $(\alpha - \alpha')$ is a unit---that is, when $\alpha$ and $\alpha'$ are in different cosets of $\mf m$.

Given a system of $N$ such particles, its potential energy is the sum of interaction energies over all $N \choose 2$ pairs of particles.  That is, if we identify the {\em state} of a system with $N$ charged particles by $\bs \alpha \in \mf o^N$, the potential energy of that state is given by
\begin{equation}
\label{eq:1}
E(\bs \alpha) = -\sum_{m < n} \log|\alpha_n - \alpha_m|.
\end{equation}
As defined, each physical state is overcounted by a factor of $N!$ since permuting the coordinates of $\bs \alpha$ does not alter the identity of the system.\footnote{We are implicitly deciding that the particles are indistinguishable, and the probability that two particles are co-located is zero---the first of these assumptions is definitional and the second will be justified in later sections.}  This overcounting will be adjusted for later.

\subsection{The Microcanonical Ensemble}

An {\em ensemble} is a probability measure on a set of states of a physical system.

The {\em microcanonical} ensemble is that given by our system conditioned so that the total energy of the system is some fixed value $E_{\ast}$.  Since our  absolute value is discrete, not all values are allowed for $E_{\ast}$, and we will assume that $E_{\ast}$ is an attainable value of the energy as specified by \eqref{eq:1}.  The set of attainable states of the microcanonical ensemble is then
\[
\Omega_N(E_{\ast}) := \{ \bs \alpha \in \mf o^N : E(\bs \alpha) = E_{\ast} \}.
\]
A first obvious question is what is the volume of the set of attainable states?  That is, what is $\mu^N(\Omega_N(E_{\ast}))$?

To answer this question, it will be convenient to define
\begin{equation}
\label{eq:2}
F_N(\xi) := \mu^N\{ \bs \alpha \in \mf o^N : E(\bs \alpha) \leq \log \xi\} \qquad \xi \geq 0.
\end{equation}
It follows then that the volume of accessible states for the energy $E_{\ast}$ is
\[
\mu^N(\Omega_N(E_{\ast})) = F_N(e^{E_\ast}) - \lim_{E \rightarrow E_{\ast}-} F_N(e^E).
\]
This volume is 0 if $E_{\ast}$ is not an allowable value of the energy.

The introduction of the $\log \xi$ term on the right-hand-side of \eqref{eq:2} will be convenient in the sequel, since we can rewrite $F_N(\xi)$ as
\[
F_N(\xi) = \mu^N\bigg\{ \bs \alpha \in \mf o^N : |\Delta_N(\bs \alpha)| < \xi \bigg\},
\]
where $\Delta_N$ is the Vandermonde determinant
\[
\Delta_N(\bs \alpha) := \prod_{m<n} (\alpha_n - \alpha_m).
\]
Written in this way, we see that $F_N$ is the cumulative distribution function for the random variable $|\Delta_N(\bs \alpha)|$ where the $\alpha_1, \ldots, \alpha_N$ are independent uniform random variables in $\mf o$.

\section{The Canonical Ensemble}

The canonical ensemble differs from the microcanonical ensemble in that we now allow the energy to vary.  Since energy is a conserved quantity, we do this by placing our system of particles in contact with another, typically much larger system, so that the energy of the aggregate system is constant, but energy is allowed to flow between our system of $N$ particles and the larger system.  The larger system is called a {\em heat reservoir} and we will view it as being at a fixed temperature $T$.\footnote{While temperature has a precise definition in thermodynamics, we need not dwell on it here. For the purposes of this paper, the temperature should be thought of as a parameter which controls how random the particles in the system are; when $T=0$ the particles are `frozen' in a low-energy configuration, when $T = \infty$ the thermal fluctuations overwhelm any electrostatic effects.}  It is usually more convenient to introduce the inverse temperature parameter $\beta = (kT)^{-1}$.  Here $k$ is Boltzmann's constant and $\beta$ is a unitless quantity.

The relative density of states is given by the {\em Boltzmann factor} $e^{-\beta E(\bs \alpha)}$.  That is, the probability density of finding the system in state $\bs \alpha$ is given by
\begin{equation}
\label{eq:3}
\frac{1}{Z} e^{-\beta E(\bs \alpha)} = \frac{1}{Z}\prod_{m < n} |\alpha_n - \alpha_m|^{\beta}
\end{equation}
where
\[
Z = \int_{\mf o^N} \prod_{m<n} |\alpha_n - \alpha_m|^{\beta} \, d\mu^N(\bs \alpha).
\]
$Z$ is called the {\em partition} function of the {\em canonical ensemble} of particles, and it is more than just a normalization constant necessary to make a probability measure.

It is sometimes useful to make explicit the variables on which $Z$ depends, and so we write
\[
Z(N, V, \beta) = \int_{V^N} |\Delta_N(\bs \alpha)|^{\beta} \, d\mu^N(\bs \alpha).
\]
Here $V$ is a finite measure subset of $K_{\mf p}$, whose measure plays the role of volume in traditional statistical physics.  For our purposes, $V$ will usually be either $\mf o$ or $\mf m$. We define $Z(0, V, \beta) := 1$ and, with the interpretation that an empty product is equal to 1, we see that $Z(1, V, \beta) = \mu(V)$. In particular, $Z(0, \mf o, \beta) = Z(1, \mf o, \beta) = 1$.

For those uncomfortable with the sudden introduction of temperature, or those unfamiliar with the derivation of the Boltzmann factor, we can take \eqref{eq:3} as an axiomatic relationship between the energy and density of states, and the temperature as represented by $\beta$.  Let us see, however that for the value $\beta = 0$, the density of states satisfies our intuition:  When $\beta = 0$ the temperature is infinite.  It is reasonable to suppose that in such a situation thermal fluctuations of particle positions will overwelm any repulsion stemming from electrical charge.  That is, when $\beta = 0$ the particles are independent and uniform over $\mf o$.  That is the relative density of states should be constant on $\mf o^N$.  The intuition in this case agrees with the result given by \eqref{eq:3}.

Similarly, as $\beta \rightarrow \infty$, the temperature is tending toward 0, and we expect the repulsion from the charge to overwhelm the thermal fluctuations. From a physical perspective, we therefore expect that the system will find itself in a state with minimal energy.  The minimal energy configurations correspond to states $\bs \alpha$ where $|\Delta(\bs \alpha)|$ is maximal (note that the maximum is attained because $|\Delta(\bs \alpha)| \leq 1$ and the absolute value is discrete).  Looking at the integrand in $Z(N,V,\beta)$ as $\beta \rightarrow \infty$, the contributions to the integral from low energy configurations (exponentially!) overwhelm higher energy configurations, and the resulting density of states becomes localized around the states with minimal energy.

\subsection{Relating the Microcanonical and Canonical Ensembles}

Our goal in the study of the microcanonical ensemble is to determine $\mu^N(\Omega_N(E_{\ast}))$.  This information is encoded into the distribution function $F_N$ and we will attempt to derive useful information about $F_N$ by considering its Mellin transform\footnote{For those unfamiliar with the Mellin transform, it is a multiplicative version of the Fourier transform and has the same utility.  In particular, there is an inversion formula which links a function and its Mellin transform.}
\[
\wh{F_N}(s) := \int_0^{\infty} \xi^{-s} F_N(\xi) \,\frac{d\mu(\xi)}{\xi}.
\]
The following lemma relates the Mellin transform of $F_N$ to $Z(N, \mf o, \beta)$.
\begin{lemma}
  \label{mellin lemma}
For $\rp{s} > 0$,
$\wh{F_N}(s) = -\frac{1}{s} Z(N, \mf o, s)$.
\end{lemma}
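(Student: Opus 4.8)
\emph{Proof sketch.} The statement is an instance of the classical fact that the Mellin transform of a distribution function equals $1/s$ times the corresponding Mellin--Stieltjes moment, applied to the law of the random variable $X := |\Delta_N(\bs\alpha)|$ when $\bs\alpha$ is distributed according to the Haar probability measure $\mu^N$ on $\mf o^N$. The first point is simply to recognize that $Z(N,\mf o,s)=\int_{\mf o^N}|\Delta_N(\bs\alpha)|^{s}\,d\mu^N(\bs\alpha)$ is exactly the moment $\E[X^{s}]$, and that this is finite --- indeed bounded by $1$ --- for $\rp s>0$, because $|\Delta_N(\bs\alpha)|\le 1$ pointwise; moreover $X>0$ off the $\mu^N$-null set where two coordinates coincide, so $X\in(0,1]$ almost surely. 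Thus $\rp s>0$ is precisely the half-plane in which the right-hand side makes sense.

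The computation then goes as follows. Unwinding $E(\bs\alpha)=-\log|\Delta_N(\bs\alpha)|$, the set defining $F_N(\xi)$ is a condition on $|\Delta_N(\bs\alpha)|$ alone, so $F_N(\xi)=\E\!\left[\mathbf 1\{\,X\ \text{in the corresponding interval}\,\}\right]$ is a function of $\xi$ through $X$ only. Substituting this into $\wh{F_N}(s)=\int_0^\infty \xi^{-s}F_N(\xi)\,\tfrac{d\mu(\xi)}{\xi}$ and interchanging the $\xi$-integral with the expectation (justified by Tonelli, since for real $s>0$ the integrand is nonnegative; the general $\rp s>0$ case follows by analytic continuation or directly by dominated convergence using $X\le 1$) reduces the whole thing to $\E\!\left[\int_{I(X)}\xi^{-s-1}\,d\xi\right]$, where $I(X)$ is the $\xi$-range cut out by the event. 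This inner integral is elementary: the antiderivative of $\xi^{-s-1}$ is $-\xi^{-s}/s$ --- this is where the factor $-\tfrac1s$ in the statement is born --- the boundary values contribute nothing since $\rp s>0$ and $0<X\le 1$, and one is left with a constant multiple of $X^{s}/s$. Taking expectations yields $\wh{F_N}(s)=-\tfrac1s\,\E[X^{s}]=-\tfrac1s\,Z(N,\mf o,s)$.

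An equivalent route is to integrate by parts directly in $\wh{F_N}(s)$, taking $dv=\xi^{-s-1}\,d\xi$ and $v=-\xi^{-s}/s$: the boundary terms $\big[-\tfrac{\xi^{-s}}{s}F_N(\xi)\big]_0^\infty$ vanish for $\rp s>0$ (at $\infty$ because $\xi^{-s}\to 0$ while $F_N$ is bounded; at $0$ because $F_N(\xi)\to 0$, with a rate supplied by the discreteness of $|\cdot|$ together with $|\Delta_N|\le 1$), and what remains, $\tfrac1s\int_0^\infty \xi^{-s}\,dF_N(\xi)$, is by the change of variables to the law of $X$ exactly $-\tfrac1s Z(N,\mf o,s)$ up to the sign fixed by the orientation of that change of variables. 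There is no deep obstacle here; the only points demanding care are the bookkeeping of the inequalities when passing between $E(\bs\alpha)$, $|\Delta_N(\bs\alpha)|$ and the law of $X$ (this is what pins down both the sign and the $1/s$), and the justification of the interchange of integrals together with the vanishing of the boundary terms --- all immediate consequences of $0<|\Delta_N(\bs\alpha)|\le 1$ almost surely and $\rp s>0$.
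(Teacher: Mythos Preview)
Your proposal is correct and follows essentially the same approach as the paper. The paper argues directly via Lebesgue--Stieltjes integration by parts (which you give as your ``equivalent route''), while you lead with the Fubini/Tonelli swap of the $\xi$-integral and the expectation; these are the two standard, interchangeable ways to establish this Mellin identity, and the key justifications you flag --- $0<|\Delta_N(\bs\alpha)|\le 1$ almost surely, boundedness of $F_N$, and $\rp s>0$ --- are exactly the ones the paper uses to kill the boundary terms.
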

\begin{proof}
Using Lebesgue-Stieltjes integration by parts,
\begin{align*}
\int_0^{\infty} \xi^{s} F_N(\xi) \,\frac{d\mu(\xi)}{\xi} &= \left. \frac{\xi^{s}}s F_N(\xi) \right|_0^{\infty} - \frac{1}s \int_0^{\infty} \xi^{s} dF_N(\xi).
\end{align*}
Since $F_N(0) = 0$ and $F_N(\xi) \leq 1$, the first term vanishes, and we find
\[
\wh{F_N}(s) = -\frac{1}s \int_0^{\infty} \xi^{s} dF_N(\xi) = -\frac1s \int_{\mf o^N} |\Delta(\bs \alpha)|^{s} \, d\mu^N(\bs \alpha)
\]
as claimed.
\end{proof}

\subsection{The Additivity of Energy over Cosets of $\mf m$}

By our previous remarks, if $\alpha$ and $\alpha'$ are particles in different cosets of $\mf m$, then this pair of particles does not contribute to the energy of the configuration. This is a primary observation: {\it Particles in different cosets can't `sense' each other}.
\begin{lemma}[Additivity of energy over cosets.]
Suppose $\bs \alpha$ is a state with $n^0$ particles in $\mf m$, $n^1$ particles in the coset $1 + \mf m$, etc. If we represent the state in the $r$th coset by $\bs \alpha^r$, then by reordering the coordinates if necessary, we can write $\bs \alpha = (\bs \alpha^0, \bs \alpha^1, \ldots, \bs \alpha^{q-1})$. Then,
\[
E(\bs \alpha) = \sum_{r=0}^{q-1} E(\bs \alpha^r).
\]
Alternatively,
\[
|\Delta(\bs \alpha)| = \prod_{r=0}^{q-1} |\Delta(\bs \alpha^r)|.
\]
\end{lemma}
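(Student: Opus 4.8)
The plan is to partition the set of unordered index pairs $\{m,n\}$ according to whether $\alpha_m$ and $\alpha_n$ lie in the same coset of $\mf m$ or in distinct cosets, and then to observe that the distinct-coset pairs are invisible to the energy. The whole lemma is really a bookkeeping consequence of the single fact highlighted just before the statement: particles in different cosets cannot sense each other.

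First I would record that fact precisely. If $\alpha \in a + \mf m$ and $\alpha' \in b + \mf m$ with $a, b \in \{0, 1, \ldots, q-1\}$ distinct coset representatives, then $\alpha - \alpha' \in (a-b) + \mf m$; since $a - b \notin \mf m$, the element $\alpha - \alpha'$ is a unit of $\mf o$, so $|\alpha - \alpha'| = 1$ and hence $-\log|\alpha - \alpha'| = 0$. Conversely, if $\alpha$ and $\alpha'$ lie in the same coset then $\alpha - \alpha' \in \mf m$ and the pair contributes a nonzero (possibly infinite) amount. I would also note that, within a single block $\bs\alpha^r$, permuting the coordinates leaves both $E(\bs\alpha^r)$ and $|\Delta(\bs\alpha^r)|$ unchanged, which is exactly what makes the phrase ``by reordering the coordinates if necessary'' harmless.

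Next, writing $\bs\alpha = (\bs\alpha^0, \ldots, \bs\alpha^{q-1})$ as in the statement, every unordered pair of indices is of one of two types: (i) both coordinates come from the same block $\bs\alpha^r$, or (ii) the two coordinates come from two different blocks. Splitting the sum $E(\bs\alpha) = -\sum_{m<n}\log|\alpha_n - \alpha_m|$ accordingly, the type-(ii) terms all vanish by the observation above, while the type-(i) terms, grouped by the block index $r$, assemble into precisely $\sum_{r=0}^{q-1} E(\bs\alpha^r)$. This proves the energy identity.

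Finally, the Vandermonde statement is the identical computation with the sum replaced by the product $|\Delta(\bs\alpha)| = \prod_{m<n}|\alpha_n - \alpha_m|$: the type-(ii) factors equal $1$ and the type-(i) factors for block $r$ multiply to $|\Delta(\bs\alpha^r)|$. Equivalently, it follows at once by exponentiating the energy identity, since $|\Delta(\bs\alpha)| = e^{-E(\bs\alpha)}$. I do not expect any genuine obstacle; the only point requiring a word of care is the degenerate case where two coordinates in the same coset coincide, in which case both sides are $+\infty$ (for the energy) or $0$ (for the Vandermonde), so the identity persists with the usual conventions.
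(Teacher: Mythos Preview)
Your proof is correct and follows exactly the approach the paper intends: the paper does not give a separate proof but simply prefaces the lemma with the observation that particles in different cosets contribute zero interaction energy, and your argument is just the careful bookkeeping that makes this precise. Your handling of the degenerate coincident-coordinate case and the remark that exponentiation yields the Vandermonde version are both appropriate elaborations of what the paper leaves implicit.
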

We will call $(\bs \alpha^0, \bs \alpha^1, \ldots, \bs \alpha^{q-1})$ a {\em factored state} of $\bs \alpha$ with occupancy vector $\mathbf n = (n^0, \ldots, n^{q-1})$ which sums to $N$. The set of factored states with occupancy vector $\mathbf n$ is given by
\[
U_{\mathbf n} = \mf m^{n^0} \times (1 + \mf m)^{n^1} \times \cdots \times (q - 1 + \mf m)^{n^{q-1}}
\]

\subsection{The Partition Function in the Canonical Ensemble}

Here we derive a way of expressing $Z(N, \mf o, \beta)$ in terms of $Z(n, \mf m, \beta)$ for $n < N$. This will provide a recursive way to determine $Z(N, \mf o, \beta)$.
\begin{thm}
  \label{recursionthm}
  For $N > 0$,
\[
  Z(N,\mf o, \beta)
  = N! \sum_{\mathbf n} \bigg\{\prod_{r=0}^{q-1} \frac{Z(n^r, \mf m, \beta)}{n^r!}
  \bigg\},
\]
where the sum is over all occupancy vectors $\mathbf n$ with $n^0 + \cdots + n^{q-1} = N$.
\end{thm}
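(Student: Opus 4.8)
The plan is to decompose the domain $\mf o^N$ according to which coset of $\mf m$ each coordinate occupies, factor the integrand on each piece via the additivity lemma, recognize each factor as a partition function over $\mf m$ using translation invariance, and finish with a multinomial count.

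First I would introduce, for each map $\sigma \colon \{1, \ldots, N\} \to \{0, 1, \ldots, q-1\}$, the set
\[
V_\sigma := \prod_{i=1}^N \bigl(\sigma(i) + \mf m\bigr) \subset \mf o^N .
\]
Since the $q$ cosets of $\mf m$ partition $\mf o$, the sets $\{V_\sigma\}_\sigma$ partition $\mf o^N$, so
\[
Z(N, \mf o, \beta) = \sum_{\sigma} \int_{V_\sigma} |\Delta_N(\bs\alpha)|^\beta \, d\mu^N(\bs\alpha).
\]
On $V_\sigma$ the coordinates fall into groups $\bs\alpha^r := (\alpha_i : \sigma(i) = r)$ with $\bs\alpha^r \in (r + \mf m)^{n^r}$, where $n^r = n^r(\sigma) := \#\sigma^{-1}(r)$. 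After the harmless reordering that places coordinates in coset order, the Lemma on additivity of energy over cosets gives $|\Delta_N(\bs\alpha)| = \prod_{r=0}^{q-1} |\Delta_{n^r}(\bs\alpha^r)|$ on $V_\sigma$, and Tonelli's theorem (the integrand is nonnegative) then yields
\[
\int_{V_\sigma} |\Delta_N(\bs\alpha)|^\beta \, d\mu^N(\bs\alpha) = \prod_{r=0}^{q-1} \int_{(r + \mf m)^{n^r}} |\Delta_{n^r}(\bs\alpha^r)|^\beta \, d\mu^{n^r}(\bs\alpha^r).
\]

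Next I would observe that $\Delta_{n}$ depends on its arguments only through their pairwise differences, so it is unchanged under the simultaneous translation $\bs\alpha^r \mapsto \bs\alpha^r - (r, \ldots, r)$; combined with the translation invariance of Haar measure, this identifies each factor on the right with $Z(n^r, \mf m, \beta)$, using the convention $Z(0, \mf m, \beta) = 1$ for empty cosets. Hence $\int_{V_\sigma} |\Delta_N|^\beta \, d\mu^N = \prod_{r=0}^{q-1} Z(n^r(\sigma), \mf m, \beta)$, which depends on $\sigma$ only through its occupancy vector $\mathbf n = (n^0, \ldots, n^{q-1})$. Grouping the sum over $\sigma$ by occupancy vector and using that exactly $N!/(n^0! \cdots n^{q-1}!)$ maps $\sigma$ realize a given $\mathbf n$ gives
\[
Z(N, \mf o, \beta) = \sum_{\mathbf n} \frac{N!}{n^0! \cdots n^{q-1}!} \prod_{r=0}^{q-1} Z(n^r, \mf m, \beta),
\]
which is the asserted identity.

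I do not expect a genuine obstacle here. The two points deserving care are that the $V_\sigma$ really do partition $\mf o^N$ (immediate from the coset decomposition of $\mf o$, with no measure-zero fuss needed) and the interchange of integral and product (legitimate by Tonelli). The only computational step is the multinomial bookkeeping that matches the count of maps $\sigma$ with the stated coefficient $N! \prod_r 1/n^r!$, and that is routine.
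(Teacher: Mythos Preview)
Your proposal is correct and follows essentially the same approach as the paper: decompose $\mf o^N$ by coset occupancy, apply the additivity-of-energy lemma to factor the integrand, use translation invariance to reduce each factor to $Z(n^r,\mf m,\beta)$, and finish with the multinomial count. Your parametrization by maps $\sigma$ makes the partition of $\mf o^N$ and the combinatorial bookkeeping slightly more explicit than the paper's phrasing in terms of ``factored states,'' but the argument is the same.
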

\begin{proof}
If $\alpha = \alpha'$ for some two coordinates of $\bs \alpha$, then $\Delta(\bs \alpha) = 0$. This means that such states make no contribution to $Z(N, \beta, \mf o)$ and we can safely ignore such inadmissable states.
Each admissible state with occupancy vector $\mathbf n$ corresponds to $n_0! \cdots n_{q-1}!$ factored states which differ by permuting the particles in each coset seperately. Moreover, each admissible factored state corresponds to $N!$ admissible states formed by permuting the coordinates indiscriminantly. Thus,
\[
Z(N,\mf o, \beta) = N! \sum_{\mathbf n} \bigg\{\prod_{r=0}^{q-1} \frac{1}{n_r!}\bigg\} \int_{U_{\mathbf n}} |\Delta(\bs \alpha_0, \ldots, \bs \alpha_{q-1} )|^{\beta} \, d\mu^{n_0}(\bs \alpha_0) \cdots d\mu^{n_{q-1}}(\bs \alpha_{q-1}).
\]
The additivity of energy over cosets implies that the integrand factors, and Fubini's theorem implies then that
\[
Z(N,\mf o, \beta) = N! \sum_{\mathbf n} \bigg\{\prod_{r=0}^{q-1} \frac{1}{n_r!} \int_{(r + \mf m)^{n_r}} |\Delta(\bs \alpha_{r} )|^{\beta} \, d\mu^{n_{r}}(\bs \alpha_{r})\bigg\}.
\]
As a final maneuver, we note that both the integrand and the measure are invariant under the change of variables $\bs \alpha_{r} \mapsto r + \bs \alpha_{r}$.  That is, the physics can't distinguish the identity of cosets so we can replace the integrals over individual cosets with integrals over independent copies of $\mf m$. In any event,
\[
Z(N,\mf o, \beta) = N! \sum_{\mathbf n} \bigg\{\prod_{r=0}^{q-1} \frac{1}{n_r!} \int_{\mf m^{n_r}} |\Delta(\bs \alpha)|^{\beta} \, d\mu^{n_{r}}(\bs \alpha)\bigg\}
= N! \sum_{\mathbf n} \bigg\{\prod_{r=0}^{q-1} \frac{Z(n_r, \mf m, \beta)}{n_r!}
\bigg\}. \qedhere
\]
\end{proof}

By rescaling the domain we can write $Z(n, \mf m, \beta)$ in term of $Z(n, \mf o, \beta)$ which leads to a recursive formula for $Z(N, \mf o, \beta)$.
\begin{lemma}
  \label{lemma:1}
If $B = \zeta + \pi^r \mf o$ then,
$
Z(n, B, \beta) = q^{-r\beta {n \choose 2} - rn} Z(n, \mf o, \beta).
$
\end{lemma}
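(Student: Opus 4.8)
\textbf{Proof proposal for Lemma~\ref{lemma:1}.}
The plan is to reduce the integral over $B^n$ to one over $\mf o^n$ by an affine change of variables, tracking separately the effect on the integrand $|\Delta(\bs\alpha)|^\beta$ and on the product measure $\mu^n$. First I would introduce new variables $\bs\gamma = (\gamma_1,\ldots,\gamma_n)$ via $\alpha_i = \zeta + \pi^r \gamma_i$. Since $B = \zeta + \pi^r\mf o$, the condition $\alpha_i \in B$ is equivalent to $\gamma_i \in \mf o$, so this is a bijection of $B^n$ onto $\mf o^n$.

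Next I would compute the effect on the Vandermonde determinant. Each difference transforms as $\alpha_n - \alpha_m = \pi^r(\gamma_n - \gamma_m)$, the additive shift by $\zeta$ cancelling; hence $\Delta(\bs\alpha) = \pi^{r\binom n2}\,\Delta(\bs\gamma)$ and, using $|\pi| = 1/q$, we get $|\Delta(\bs\alpha)| = q^{-r\binom n2}|\Delta(\bs\gamma)|$. Raising to the $\beta$ power yields the factor $q^{-r\beta\binom n2}$. (The edge cases $n=0,1$ are consistent, since $\binom 02 = \binom 12 = 0$ and the product is empty.)

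For the measure, I would invoke the two defining properties of Haar measure recorded earlier: translation invariance $\mu(x+B)=\mu(B)$ and the scaling relation $\mu(xB) = |x|\,\mu(B)$. Together these give that under $\alpha_i \mapsto \zeta + \pi^r\gamma_i$ one has $d\mu(\alpha_i) = |\pi^r|\,d\mu(\gamma_i) = q^{-r}\,d\mu(\gamma_i)$, so the product measure contributes $d\mu^n(\bs\alpha) = q^{-rn}\,d\mu^n(\bs\gamma)$. Combining the two computations,
\[
Z(n,B,\beta) = \int_{B^n} |\Delta(\bs\alpha)|^{\beta}\,d\mu^n(\bs\alpha)
= \int_{\mf o^n} q^{-r\beta\binom n2}|\Delta(\bs\gamma)|^{\beta}\, q^{-rn}\,d\mu^n(\bs\gamma)
= q^{-r\beta\binom n2 - rn}\,Z(n,\mf o,\beta),
\]
as claimed. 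I do not anticipate a real obstacle here; the only point requiring a little care is the justification of the measure-scaling factor $q^{-rn}$, i.e.\ making sure the affine substitution is handled correctly as a composition of a translation (measure-preserving) and a dilation by $\pi^r$ (scaling the measure by $|\pi^r| = q^{-r}$ in each coordinate), and noting that $B$ indeed has the stated form with $\zeta$ a coset representative so that the map is a genuine bijection onto $\mf o^n$.
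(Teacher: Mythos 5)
Your proof is correct and matches the paper's argument: the paper simply cites the more general Lemma~\ref{push down lemma}, whose proof is exactly your affine change of variables $\alpha_i = \zeta + \pi^r\gamma_i$, with $|\Delta|$ scaling by $q^{-r\binom n2}$ and the Haar measure by $q^{-rn}$.
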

\begin{proof}
This is a special case of Lemma~\ref{push down lemma} below.
\end{proof}

This lemma leads immediately to the following theorem.
\begin{thm}
  \label{recursion2thm}
For $N > 0$,
\begin{equation}
\label{eq:zn}
  Z(N, \mf o, \beta) = N! \sum_{\mathbf n}\bigg\{\prod_{r=0}^{q-1} \frac{q^{-\beta{n_r \choose 2} -n_r}}{n_r!} Z(n_r, \mf o, \beta)
  \bigg\}.
\end{equation}
Solving for $Z(N, \mf o, \beta)$ (which appears on both sides of \ref{eq:zn}),
\[
Z(N, \mf o, \beta) =
\left( q^N -  q^{1-\beta{N \choose 2}} \right)^{-1} N! \sum_{\mathbf n}{\big.}' \bigg\{ \prod_{j=0}^{q-1} \frac{q^{-\beta{n_r \choose 2}}}{n_r!} Z(n_j, \mf o, \beta)
\bigg\},
\]
where $\sum'_{\mathbf n}$ is over all occupancy vectors except those of the form $(0, \ldots, N, \ldots 0)$ (that is, except those that correspond to all particles being in the same coset of $\mf m$).
\end{thm}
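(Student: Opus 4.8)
The plan is to read this off from Theorem~\ref{recursionthm} and Lemma~\ref{lemma:1} and then do the bookkeeping needed to solve for $Z(N,\mf o,\beta)$. First I would observe that $\mf m = 0 + \pi\mf o$, so Lemma~\ref{lemma:1} with $\zeta = 0$ and $r = 1$ gives $Z(n,\mf m,\beta) = q^{-\beta\binom n2 - n}\,Z(n,\mf o,\beta)$ for every $n \ge 0$. Substituting this expression for each factor $Z(n_r,\mf m,\beta)$ appearing in the formula of Theorem~\ref{recursionthm} produces \eqref{eq:zn} immediately; no further work is needed for the first displayed identity.

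For the solved form I would split the sum in \eqref{eq:zn} according to whether the occupancy vector $\mathbf n$ is \emph{diagonal} (all $N$ particles in a single coset) or not. There are exactly $q$ diagonal occupancy vectors, one for each coset of $\mf m$; for such an $\mathbf n$ every component vanishes except one component equal to $N$, and since $Z(0,\mf o,\beta) = 1$, $0! = 1$, and $\binom 02 = 0$, the corresponding summand in \eqref{eq:zn} is $\frac{q^{-\beta\binom N2 - N}}{N!}\,Z(N,\mf o,\beta)$. Hence the $q$ diagonal terms contribute $N!\cdot q\cdot \frac{q^{-\beta\binom N2 - N}}{N!}\,Z(N,\mf o,\beta) = q^{\,1-N-\beta\binom N2}\,Z(N,\mf o,\beta)$ to the right side of \eqref{eq:zn}. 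Subtracting this from both sides isolates the unknown:
\[
\bigl(1 - q^{\,1-N-\beta\binom N2}\bigr)\,Z(N,\mf o,\beta) = N!\sum_{\mathbf n}{}'\,\prod_{r=0}^{q-1}\frac{q^{-\beta\binom{n_r}2 - n_r}}{n_r!}\,Z(n_r,\mf o,\beta).
\]
Since $n_0 + \cdots + n_{q-1} = N$ for every $\mathbf n$ in the primed sum, the factor $\prod_r q^{-n_r} = q^{-N}$ pulls out of the sum; multiplying both sides by $q^N$ turns the left-hand coefficient into $q^N - q^{\,1-\beta\binom N2}$ and strips the $-n_r$ from the exponents on the right, giving exactly the asserted identity.

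The one point that requires care is the legitimacy of dividing by $q^N - q^{\,1-\beta\binom N2}$, equivalently by $1 - q^{\,1-N-\beta\binom N2}$: this vanishes precisely when $1 - N - \beta\binom N2 = 0$, i.e.\ when $\beta = -2/N$, and it vanishes identically when $N = 1$ (consistently, for $N = 1$ the primed sum is empty and $Z(1,\mf o,\beta) = 1$ is already known). So the solved formula is to be understood for $N \ge 2$ and $\beta$ away from the exceptional value $-2/N$; in particular it is valid for every $\beta \ge 0$, where the geometric factor $q^{\,1-N-\beta\binom N2}$ has modulus $\le q^{-1} < 1$. Beyond this observation the argument is pure substitution and elementary algebra, and the only place I expect to slow down is keeping the exponent arithmetic honest through the extraction of $q^{-N}$.
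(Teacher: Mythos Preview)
Your proposal is correct and follows exactly the route the paper indicates: the paper simply states that the theorem follows immediately from Theorem~\ref{recursionthm} together with Lemma~\ref{lemma:1}, and your argument carries out precisely that substitution and the elementary algebra of isolating the diagonal terms. Your added remarks on the vanishing of $q^N - q^{1-\beta\binom{N}{2}}$ at $\beta = -2/N$ and the degenerate $N=1$ case are accurate and go slightly beyond what the paper makes explicit.
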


\begin{thm}
  \label{quad-rec}
The $\{ Z(n, \mf o, \beta): n=0,1,\ldots,N-1 \}$ satisfy
\[
\sum_{n=0}^N \frac{(N - (q+1)n)}{n!(N-n)!} q^{-{n \choose 2
} \beta} q^{-n} Z(n, \mf o, \beta) Z(N-n, \mf o, \beta) = 0.
\]
\end{thm}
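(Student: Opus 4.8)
The plan is to exploit the fact that the recursion in Theorem~\ref{recursion2thm} comes from summing a product of $q$ factors—one per coset of $\mf m$—and that such a $q$-fold product structure is precisely what a suitable generating function linearizes. Define the exponential generating function
\[
\mathcal Z(t) := \sum_{n=0}^{\infty} \frac{q^{-\beta\binom{n}{2}-n}}{n!}\, Z(n,\mf o,\beta)\, t^n,
\]
so that the sum over occupancy vectors $\mathbf n = (n^0,\dots,n^{q-1})$ with $\sum n^r = N$ of $\prod_r \frac{q^{-\beta\binom{n^r}{2}-n^r}}{n^r!}Z(n^r,\mf o,\beta)$ is exactly the coefficient of $t^N$ in $\mathcal Z(t)^q$. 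Theorem~\ref{recursion2thm}, in the form \eqref{eq:zn}, then says that for $N>0$ the coefficient of $t^N$ in $\mathcal Z(t)^q$ equals $\frac{1}{N!}\,q^{N}\cdot q^{\beta\binom{N}{2}+N}\cdot q^{-\beta\binom{N}{2}-N} Z(N,\mf o,\beta)$—wait, more cleanly: comparing \eqref{eq:zn} with the definition, the $N$-th coefficient of $\mathcal Z(t)^q$ is $q^{-\beta\binom N2 - N}\,\frac{Z(N,\mf o,\beta)}{N!}\cdot$(something); the point is that \eqref{eq:zn} is a closed identity relating $[t^N]\mathcal Z(t)^q$ to $[t^N]\mathcal Z(t)$ for every $N>0$, and it also holds at $N=0$ by the conventions $Z(0,\mf o,\beta)=1$. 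So I first record the clean statement: $\mathcal Z(t)^q$ and $\mathcal Z(t)$ have matching constant terms, and \eqref{eq:zn} forces a relation between all their higher coefficients.

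Next I would differentiate. The quadratic relation in Theorem~\ref{quad-rec} has the shape $\sum_{n=0}^N c_n a_n a_{N-n}=0$ with $a_n := \frac{q^{-\beta\binom n2 - n}}{n!}Z(n,\mf o,\beta)$ and a linear-in-$n$ coefficient $c_n = N-(q+1)n$ (after absorbing constants). A convolution $\sum_n a_n a_{N-n}$ with a coefficient linear in $n$ is exactly what appears when one compares $[t^N]$ of $\mathcal Z(t)^2$ against $[t^N]$ of $t\,\mathcal Z(t)\mathcal Z'(t)$: indeed $[t^N]\,t\mathcal Z\mathcal Z' = \sum_{n} n\, a_n a_{N-n}$ and $[t^N]\mathcal Z^2 = \sum_n a_n a_{N-n}$, so the claimed identity is equivalent to the ODE
\[
N\,\mathcal Z(t)^2 \;\text{vs.}\; (q+1)\,t\,\mathcal Z(t)\,\mathcal Z'(t)
\]
holding coefficientwise—i.e. equivalent to a first-order ODE of the form $t\,\frac{d}{dt}\bigl(\mathcal Z(t)^2\bigr) = \frac{2}{q+1}\cdot(\text{stuff})$, or after cleaning up, $\mathcal Z$ satisfying something like $(q+1)t\mathcal Z' = c\,\mathcal Z + (\text{correction from the }n=0,N\text{ endpoints})$. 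So the real task reduces to: show $\mathcal Z(t)$ satisfies a first-order linear ODE whose coefficients are polynomials in $t$, and read off Theorem~\ref{quad-rec} as the coefficient recursion of that ODE.

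To get that ODE, I would feed the recursion of Theorem~\ref{recursion2thm} back in: \eqref{eq:zn} expresses $a_N$ (for $N>0$) linearly in terms of $[t^N]\mathcal Z(t)^q$, which after the rescaling is $q^{\beta\binom N2 + N}$ times the $N$-th coefficient of $\mathcal Z^q$; schematically $\bigl(q^N - q^{1-\beta\binom N2}\bigr)$-type factors enter. The cleanest route is probably to avoid $\mathcal Z^q$ entirely: differentiate the functional equation $\mathcal Z(qt)$ or $\mathcal Z(t)^q = \mathcal Z(t) + (\text{lower order})$—whatever the precise clean form of Theorem~\ref{recursion2thm} is—and combine it with $q\mathcal Z^{q-1}\mathcal Z'$ to kill the $q$-th power and land on a relation purely between $\mathcal Z$ and $\mathcal Z'$. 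Concretely: from $\mathcal Z^q = $ (affine expression in $\mathcal Z$ coming from \eqref{eq:zn}), take $\frac{d}{dt}$, get $q\mathcal Z^{q-1}\mathcal Z' = (\text{affine})'$; multiply the original by $q\mathcal Z'/\mathcal Z$ and subtract to eliminate $\mathcal Z^{q-1}$ or $\mathcal Z^q$; the surviving identity is first-order linear in $\mathcal Z$ with polynomial coefficients, and its $t^N$-coefficient is precisely the asserted sum. I would then verify the boundary terms $n=0$ and $n=N$ (where $c_n = N$ and $c_n = N-(q+1)N = -qN$ respectively) are produced correctly by the low-order coefficients of $\mathcal Z$, using $Z(0,\mf o,\beta)=Z(1,\mf o,\beta)=1$.

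The main obstacle I anticipate is bookkeeping of the $\beta$-dependent powers of $q$: the weight $a_n = q^{-\beta\binom n2 - n}Z(n,\mf o,\beta)/n!$ has a $\binom n2$ in the exponent, so convolutions $a_n a_{N-n}$ carry $q^{-\beta(\binom n2 + \binom{N-n}2)}$, which is \emph{not} $q^{-\beta\binom N2}$—there is a cross term $q^{\beta n(N-n)}$. One must check that this cross term is exactly what is already displayed in Theorem~\ref{quad-rec} (it carries only $q^{-\binom n2\beta}$, not $q^{-\binom N2\beta}$, which is consistent with the convolution living "inside" rather than being globally normalized), and that the substitution of \eqref{eq:zn} to eliminate the $q$-th power does not reintroduce a spurious $\beta$-power mismatch. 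Getting these exponents to cancel cleanly—rather than differentiating—is likely the crux; once the generating-function identity is set up with the correct weights, extracting the coefficient recursion is mechanical.
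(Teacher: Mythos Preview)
Your instinct to pass to generating functions and differentiate the $q$-th power relation is exactly the paper's approach, but you have misread the structure of the target recurrence, and this sends you down a blind alley. You write the claim as $\sum_n c_n\, a_n a_{N-n}=0$ with $a_n = q^{-\beta\binom{n}{2}-n}Z(n,\mf o,\beta)/n!$, i.e.\ as a self-convolution of the coefficients of your $\mathcal Z$. It is not: the summand carries $q^{-\beta\binom{n}{2}-n}/\bigl(n!(N-n)!\bigr)$, with \emph{no} factor $q^{-\beta\binom{N-n}{2}-(N-n)}$. The recurrence is asymmetric in $n$ and $N-n$; it is a cross-convolution of the coefficients of $\mathcal Z$ with those of a second series $f(t):=\sum_N Z(N,\mf o,\beta)\,t^N/N! = Z(t,\mf o,\beta)$. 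The ``cross term $q^{\beta n(N-n)}$'' you worry about at the end is a symptom of this misidentification; it is not present in the theorem and will not cancel if you insist on viewing the sum as $[t^N]$ of something quadratic in $\mathcal Z$ alone.

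Once you introduce $f$ explicitly, Theorem~\ref{recursion2thm} (equivalently Theorem~\ref{gcz}) says precisely $f=\mathcal Z^{\,q}$, and the proof is one line (Cauchy's observation): differentiate to get $f'=q\,\mathcal Z^{\,q-1}\mathcal Z'$, multiply by $\mathcal Z$ to obtain $\mathcal Z f' = q\,f\,\mathcal Z'$, and extract the coefficient of $t^{N-1}$. With $b_n=a_n$ the coefficients of $\mathcal Z$ and $A_m=Z(m,\mf o,\beta)/m!$ those of $f$, this reads $\sum_{n=0}^N\bigl((N-n)-qn\bigr)b_n A_{N-n}=0$, which is exactly the stated identity. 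There is no need to eliminate $\mathcal Z^{\,q}$ or to seek an ODE for $\mathcal Z$ alone; the bilinear relation between $\mathcal Z$ and $f$ is already the goal.
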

See the remark after Theorem~\ref{gcz} for the proof.

Theorem~\ref{quad-rec} gives us an easy way to compute $Z(N, \mf o, \beta)$ for small values of $N$. These are increasingly complicated rational functions in $q^{-\beta}$.
\begin{align*}
  Z(0, \mf o, \beta) & = 1 \\
  Z(1, \mf o, \beta) & = 1 \\
  Z(2, \mf o, \beta) & = \frac{(q-1) q^{\beta }}{q^{\beta +1}-1} \\
  Z(3, \mf o, \beta) & = \frac{(q-1) q^{3 \beta } \left(-2 q^{\beta +1}+q^{\beta +2}+2 q-1\right)}{\left(q^{\beta
   +1}-1\right) \left(q^{3 \beta +2}-1\right)} \\
  Z(4, \mf o, \beta) & = \frac{-\frac{(q-1)^2 (4-2 (q+1)) q^{\beta -2}}{4 \left(q^{\beta
   +1}-1\right)^2}-\frac{(3-q) (q-1) \left(-2 q^{\beta +1}+q^{\beta +2}+2 q-1\right) q^{3
   \beta -1}}{6 \left(q^{\beta +1}-1\right) \left(q^{3 \beta +2}-1\right)}-\frac{(q-1)
   (4-3 (q+1)) \left(-2 q^{\beta +1}+q^{\beta +2}+2 q-1\right)}{6 q^3 \left(q^{\beta
   +1}-1\right) \left(q^{3 \beta +2}-1\right)}}{\frac{1}{24} (4-4 (q+1)) q^{-6 \beta
   -4}+\frac{1}{6}}.
\end{align*}

\subsection{The distribution of energies in the microcanonical ensemble}

The observation that $Z(N, \mf o, \beta)$ is expressible in terms of the Mellin transform of $F_N$ (Lemma~\ref{mellin lemma}) means that analytic information, viewing $Z(N, \mf o, \beta)$ as a function of a complex variable $\beta$, can provide information about the nature of the distribution of allowable energies in the microcanonical ensemble. In order to distinguish $\beta$ the complex variable in this expression from the inverse temperature (which must necessarily be real and positive) we will write $s = \sigma + i t$ for a complex variable and write $Z(N, \mf o, s)$ for the partition function as a function of a complex variable.

A first observation is that $Z(N, \mf o, s)$ is analytic in the right half plane $\{\sigma + it : \sigma > 0\}$. To see this suppose $\sigma > 0$ and with the usual absolute value on $\C$ denoted by $|\cdot|_{\infty}$,
\begin{equation}
  \label{Zbound}
\int_{\mf o^N} \left| |\Delta(\bs \alpha)|^{\sigma + it} \right|_{\infty} \, d\mu^N(\bs \alpha) = \int_{\mf o^N} |\Delta(\bs \alpha)|^{\sigma} \, d\mu^N(\bs \alpha) \leq \int_{\mf o^N} \, d\mu^N(\bs \alpha) = 1.
\end{equation}
It follows that, if $T$ is an oriented triangle in $\{ \sigma > 0 \}$ then
\[
\int_T Z(N, \mf o, s) \, ds = \int_{\mf o^N} \bigg\{\int_T |\Delta(\bs \alpha)|^{s} ds \bigg\} \, d\mu^N(\bs \alpha),
\]
where we used \eqref{Zbound} to justify the use of Fubini's Theorem, and the $ds$ integrals are (complex) line integrals around $T$. But, for  every fixed $\bs \alpha$, the function $s \mapsto |\Delta(\bs \alpha)|^{s}$ is analytic and hence
\[
\int_T |\Delta(\bs \alpha)|^{s} ds = 0 \qq{and thus} \int_T Z(N, \mf o, s) \, ds = 0.
\]
Since this is true for all triangles in the right half-plane, Morera's Theorem implies $Z(N, \mf o, s)$ is analytic there. In fact, we will see below that the domain of convergence of $Z(N, \mf o, s)$ is the half-plane $\sigma > -2/N$.
\begin{thm}
\label{abscissa of convergence} The integral defining $Z(N, \mf o, s)$ converges to an analytic function of $s = \sigma + it$ in the half-plane $\{ \sigma > -2/N\}$ and is absolutely divergent on $\{\sigma \leq -2/N\}$.
\end{thm}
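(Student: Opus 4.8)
We may assume $N\ge2$; for $N\in\{0,1\}$ one has $Z(N,\mf{o},s)\equiv1$, which is entire. Write $s=\sigma+it$. Since $\bigl|\,|\Delta_N(\bs{\alpha})|^{s}\,\bigr|_\infty=|\Delta_N(\bs{\alpha})|^{\sigma}$, both statements concern the real quantity
\[
\mathcal I(\sigma):=\int_{\mf{o}^N}|\Delta_N(\bs{\alpha})|^{\sigma}\,d\mu^N(\bs{\alpha})\in(0,\infty],
\]
which is nonincreasing in $\sigma$ because $|\Delta_N|\le1$. As $-\log_q|\Delta_N(\bs{\alpha})|\in\Z_{\ge0}$ for $\mu^N$-a.e.\ $\bs{\alpha}$, we have $Z(N,\mf{o},s)=\sum_{j\ge0}p_N(j)(q^{-s})^{j}$ with $p_N(j):=\mu^N\{|\Delta_N|=q^{-j}\}$, a power series in $q^{-s}$ with nonnegative coefficients, so ``abscissa of convergence $-2/N$'' says precisely that its radius of convergence in $q^{-s}$ is $q^{2/N}$ and that it diverges at $q^{-s}=q^{2/N}$. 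For the analyticity assertion it suffices to prove $\mathcal I(\sigma_0)<\infty$ for every $\sigma_0>-2/N$: then on $\{\sigma\ge\sigma_0\}$ the integrand is dominated by the fixed $\mu^N$-integrable function $|\Delta_N|^{\sigma_0}$, and the Morera-plus-Fubini argument already used above for $\sigma>0$ applies verbatim on $\{\sigma>\sigma_0\}$, hence on $\{\sigma>-2/N\}$. So everything reduces to: (a) $\mathcal I(\sigma)<\infty$ for $\sigma>-2/N$, and (b) $\mathcal I(-2/N)=\infty$.

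For (a) I would prove the tail bound $G_N(j):=\mu^N\{|\Delta_N|\le q^{-j}\}\le C_N\,q^{-2j/N}$ for all $j\ge0$ by induction on $N$, from which $\mathcal I(\sigma)=\sum_j p_N(j)q^{-j\sigma}\le C_N\sum_j q^{-j(\sigma+2/N)}<\infty$ follows at once. The base $N=2$ is $G_2(j)=\mu^2\{|\alpha_1-\alpha_2|\le q^{-j}\}=q^{-j}$. For the step, decompose $\mf{o}^N$ by the coset--occupancy vector $\mathbf n=(n^0,\dots,n^{q-1})$ (of $\mu^N$-measure $\tfrac{N!}{n^0!\cdots n^{q-1}!}q^{-N}$), exactly as in the proof of Theorem~\ref{recursionthm}: by additivity of energy over cosets and the substitution $\bs{\alpha}^r\mapsto r+\pi\bs{\gamma}^r$, conditionally on $\mathbf n$ the within-coset energy exponents $E_r:=-\log_q|\Delta_{n^r}(\bs{\gamma}^r)|$ are independent (each $E_r$ distributed as the energy exponent of $n^r$ independent uniform points of $\mf{o}$, hence with tail $G_{n^r}$, and $E_r\equiv0$ when $n^r\le1$), and $-\log_q|\Delta_N(\bs{\alpha})|=\sum_r\bigl(\tbinom{n^r}{2}+E_r\bigr)$. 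Isolating the $q$ occupancy vectors that put all particles in one coset gives the self-referential recursion
\[
G_N(j)=q^{1-N}\,G_N\bigl((j-\tbinom{N}{2})^{+}\bigr)+R_N(j),
\]
where $R_N(j)$ collects the contribution of the occupancy vectors with at least two nonempty cosets.

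To estimate $R_N$, fix $\lambda$ with $2/N<\lambda<2/(N-1)$. For an occupancy vector with $\ge2$ nonempty cosets, every $n^r\le N-1$, so by the inductive hypothesis $\int_{\mf{o}^{n^r}}q^{\lambda E_r}\,d\mu^{n^r}\le C_{n^r}\sum_{u\ge0}q^{u(\lambda-2/n^r)}<\infty$ (the exponent is negative since $\lambda<2/(N-1)\le2/n^r$); multiplying over $r$, applying Markov's inequality, using $\sum_r\tbinom{n^r}{2}\le\tbinom{N}{2}$, and summing over the finitely many occupancy vectors gives $R_N(j)\le B\,q^{-\lambda j}$ for a constant $B=B(N,\lambda)$. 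Iterating the recursion down to $G_N(0)=1$ yields $G_N(j)\le\sum_{\ell\ge0}q^{\ell(1-N)}R_N\bigl((j-\ell c)^{+}\bigr)+q^{-2j/N}$ with $c=\tbinom{N}{2}$; inserting $R_N(m)\le Bq^{-\lambda m}$ produces a geometric series of ratio $q^{\,1-N+\lambda c}$, which exceeds $1$ precisely because $\lambda c>\tfrac2N\tbinom{N}{2}=N-1$, so the sum is controlled by its last term, and the identity $\tbinom{N}{2}\cdot\tfrac2N=N-1$ collapses the bound to $C_N\,q^{-2j/N}$, proving (a). For (b), let $W_k\subset\mf{o}^N$ be the set of states all of whose coordinates lie in a single common ball $\zeta+\pi^k\mf{o}$; there are $q^k$ such balls, so $\mu^N(W_k)=q^{k}\cdot q^{-kN}=q^{-k(N-1)}$, and every pairwise difference on $W_k$ has absolute value $\le q^{-k}$, whence $|\Delta_N|\le q^{-k\binom{N}{2}}$ on $W_k$. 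The sets $W_k':=W_k\setminus W_{k+1}$ are pairwise disjoint with $\mu^N(W_k')=(1-q^{-(N-1)})q^{-k(N-1)}$, and on $W_k'$, $|\Delta_N|^{-2/N}\ge q^{(2/N)k\binom{N}{2}}=q^{k(N-1)}$, so
\[
\mathcal I(-2/N)\ge\sum_{k\ge0}\mu^N(W_k')\,q^{k(N-1)}=(1-q^{-(N-1)})\sum_{k\ge0}1=\infty .
\]
By monotonicity $\mathcal I(\sigma)=\infty$ for all $\sigma\le-2/N$, i.e.\ the integral defining $Z(N,\mf{o},s)$ is absolutely divergent on $\{\sigma\le-2/N\}$.

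\emph{Main obstacle.} The delicate point is the bound on $R_N(j)$. A crude union bound on the event $\sum_rE_r\ge v$ loses a constant factor in the exponential rate and would give only $G_N(j)\lesssim q^{-j/(N-1)}$, useless for $N\ge3$ since $1/(N-1)<2/N$; and even an optimal union bound introduces a spurious factor $j$ --- equivalently a $\log(1/\xi)$ in $F_N(\xi)$ --- because the homogeneous part of the recursion has neutral gain, $q^{1-N}q^{(2/N)\binom{N}{2}}=1$. Extracting a rate strictly better than $q^{-2j/N}$, which is exactly what the finite exponential moment furnished by the inductive hypothesis provides, is what removes that factor. One should also note that the recursion of Theorem~\ref{recursionthm} by itself gives only the meromorphic continuation of $Z(N,\mf{o},s)$ to $\{\sigma>-2/N\}$, not convergence of the defining integral there; hence the direct tail estimate is genuinely needed.
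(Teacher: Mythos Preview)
The paper announces that ``the proof of this theorem will come after the development of a handful of lemmas,'' but in the version supplied the proof is never actually given; the only related result that follows is the lemma that $Z(N,\mf o,s)$ is a rational function of $q^{-s}$, which furnishes meromorphic continuation but says nothing about where the \emph{defining integral} converges. So there is no proof in the paper to compare against, and your proposal has to be judged on its own merits.

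Your argument is correct. The divergence half is clean: the nested sets $W_k$ have $\mu^N(W_k)=q^{-k(N-1)}$ and carry $|\Delta_N|\le q^{-k\binom{N}{2}}$, and the identity $(2/N)\binom{N}{2}=N-1$ makes each shell $W_k\setminus W_{k+1}$ contribute a fixed positive amount to $\mathcal I(-2/N)$. For convergence, your inductive tail bound $G_N(j)\le C_N\,q^{-2j/N}$ is exactly what is needed, and the mechanism you use to obtain it---the coset recursion $G_N(j)=q^{1-N}G_N\bigl((j-\binom{N}{2})^+\bigr)+R_N(j)$ together with a finite exponential moment for the ``at least two cosets occupied'' remainder---is sound. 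Two small points worth tightening when you write it out: (i) the passage from $G_{n^r}(u)\le C_{n^r}q^{-2u/n^r}$ to a bound on $\mathbb E[q^{\lambda E_r}]$ uses summation by parts (or equivalently $\mathbb E[q^{\lambda E_r}]=1+(1-q^{-\lambda})\sum_{u\ge1}G_{n^r}(u)q^{\lambda u}$), not the cruder $p_{n^r}(u)\le G_{n^r}(u)$ as your one-line display suggests; (ii) in the iteration, the homogeneous term after $L=\lceil j/c\rceil$ steps is $q^{L(1-N)}\le q^{-2j/N}$ because $L\ge j/c$ and $(N-1)/c=2/N$, and the geometric sum $\sum_{\ell<L}q^{\ell(\lambda c-(N-1))}$ is dominated by its last term since $\lambda c>N-1$; combining $-\lambda j+L\lambda c\in[0,\lambda c)$ with $-L(N-1)\le -2j/N$ gives the claimed $O(q^{-2j/N})$. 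Your ``main obstacle'' paragraph correctly isolates the subtlety: the homogeneous recursion has neutral gain $q^{1-N}\cdot q^{(2/N)\binom{N}{2}}=1$, so one genuinely needs the remainder to decay at a rate strictly faster than $q^{-2j/N}$, which is precisely what choosing $\lambda\in(2/N,\,2/(N-1))$ and invoking the inductive hypothesis on the smaller blocks provides. Your closing remark---that the rational-function lemma and Theorem~\ref{recursionthm} by themselves give only meromorphic continuation, not convergence of the integral on $\{\sigma>-2/N\}$---is exactly right and is, in effect, the gap the paper leaves open.
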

The proof of this theorem will come after the development of a handful of lemmas.

One importance of recognizing the Mellin transform of $F_N(\xi)$ in terms of
$Z(N, \mf o, s)$ is that information from any analytic/meromorphic  continuation
of $Z(N, \mf o, s)$ beyond the initial domain of convergence gives us new
information about $F_N(\xi)$.

Our first observation is that $Z(N, \mf o, s)$ analytically continues to a meromorphic function. In fact it is a rational function in $q^{-s}$. We will provide a proof of this fact, but it is also the consequence of a much deeper theorem of Igusa on the continuation of certain types of $L$-functions, of which $Z(N, \mf o, s)$ is a particular example.
\begin{lemma}
For $N \geq 2$, there exists a non-constant rational function $R_N$ with rational coefficients so that $Z(N, \mf o, s) = R_N(q^{-s})$ for all $s = \sigma + it$ with $\sigma > 0$.
\end{lemma}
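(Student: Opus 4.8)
The plan is to prove the two assertions — that $Z(N,\mf o,s)$ is a rational function of $q^{-s}$ with rational coefficients, and that this function is non-constant — separately. For rationality I would first work with real inverse temperature $\beta>0$ and induct on $N$. The base cases $N=0,1$ hold since $Z(0,\mf o,\beta)=Z(1,\mf o,\beta)=1$ is a constant, hence a rational function of $q^{-\beta}$ with rational coefficients. For $N\ge2$ I would feed the inductive hypothesis into the solved form of Theorem~\ref{recursion2thm},
\[
Z(N,\mf o,\beta)=\Bigl(q^N-q^{1-\beta\binom{N}{2}}\Bigr)^{-1}N!\sum'_{\mathbf n}\prod_{r=0}^{q-1}\frac{q^{-\beta\binom{n_r}{2}}}{n_r!}\,Z(n_r,\mf o,\beta).
\]
Writing $x=q^{-\beta}$: every $q^{-\beta\binom{n_r}{2}}=x^{\binom{n_r}{2}}$ is a monomial in $x$ with integer coefficient; each $N!$ and $1/n_r!$ is rational; and $q^N-q^{1-\beta\binom{N}{2}}=q^N-q\,x^{\binom{N}{2}}$ is a nonzero polynomial in $x$ with integer coefficients (its constant term is $q^N$), so its reciprocal lies in $\Q(x)$. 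The decisive observation is that the primed sum omits exactly the occupancy vectors with all $N$ particles in a single coset, and $q\ge2$, so every $\mathbf n$ that occurs has all components $n_r\le N-1$; the inductive hypothesis then gives $Z(n_r,\mf o,\beta)=R_{n_r}(x)$ with $R_{n_r}\in\Q(x)$. Since $\Q(x)$ is a field, the right-hand side is $R_N(x)$ for some $R_N\in\Q(x)$, so $Z(N,\mf o,\beta)=R_N(q^{-\beta})$ for all real $\beta>0$. I would then extend this to all $s$ with $\sigma>0$: both $Z(N,\mf o,s)$ and $R_N(q^{-s})$ are analytic on $\{\sigma>0\}$ — the former by the Morera argument in the excerpt, the latter because a pole would force $q^N-q^{1-s\binom{N}{2}}=0$, impossible for $\sigma>0$ since then $|q^{1-s\binom{N}{2}}|=q^{1-\sigma\binom{N}{2}}<q\le q^N$ — and, agreeing on the ray $\beta>0$, they agree on $\{\sigma>0\}$ by the identity theorem.

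For non-constancy I would show $\sigma\mapsto Z(N,\mf o,\sigma)$ is strictly decreasing on $(0,\infty)$. For $0<\sigma_1<\sigma_2$ and every $\bs\alpha$ one has $|\Delta_N(\bs\alpha)|^{\sigma_1}-|\Delta_N(\bs\alpha)|^{\sigma_2}\ge0$ because $|\Delta_N(\bs\alpha)|\le1$, with strict inequality wherever $0<|\Delta_N(\bs\alpha)|<1$. Now $\{\bs\alpha:\alpha_1-\alpha_2\in\mf m\}$ has $\mu^N$-measure $1/q$ and lies inside $\{|\Delta_N|<1\}$, while $\{|\Delta_N|=0\}=\bigcup_{i<j}\{\alpha_i=\alpha_j\}$ is $\mu^N$-null; hence $\{0<|\Delta_N|<1\}$ has positive measure, and integrating the strict inequality over it gives $Z(N,\mf o,\sigma_1)>Z(N,\mf o,\sigma_2)$. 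Thus $\sigma\mapsto R_N(q^{-\sigma})$ is non-constant on $(0,\infty)$, and since $\sigma\mapsto q^{-\sigma}$ maps $(0,\infty)$ bijectively onto $(0,1)$, the rational function $R_N$ is non-constant.

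I expect no serious obstacle: the recursion carries the rationality, and the monotonicity step is soft. The one point that wants care is the bookkeeping in the inductive step — checking that every occupancy vector contributing to the primed sum has all entries strictly below $N$, which is precisely why the vectors with all particles in one coset are excluded and why $q\ge2$ is used, so that the inductive hypothesis genuinely applies to each $Z(n_r,\mf o,\beta)$ on the right. As an alternative, Theorem~\ref{quad-rec} likewise expresses $Z(N,\mf o,\beta)$ as a rational combination of the $Z(n,\mf o,\beta)$ with $n<N$, and the same induction goes through verbatim with it in place of Theorem~\ref{recursion2thm}.
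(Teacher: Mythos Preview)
Your proof is correct and follows essentially the same route as the paper: induction on $N$ using the solved recursion of Theorem~\ref{recursion2thm}, with the base cases $Z(0,\mf o,s)=Z(1,\mf o,s)=1$. Your version is in fact more complete than the paper's, which omits both the non-constancy argument and the passage from real $\beta$ to complex $s$; your monotonicity argument for non-constancy and the identity-theorem extension are clean additions that fill those gaps.
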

\begin{proof}
The proof is an easy consequence of Theorem~\ref{recursion2thm}. By definition $Z(0, \mf o, s) = 1$ and a trivial calculation shows $Z(1, \mf o, s) = 1$ as well. The second equation in Theorem~\ref{recursion2thm} then implies that $Z(N, \mf o, s)$ can be expressed as a ratio where the numerator is a rational linear combination of products of $Z(n, \mf o, s)$ with $n < N$,  and the denominator is a polynomial in $q^{-s}$ with rational coefficients. The strong inductive hypothesis is that $Z(n, \mf o, s)$ is a rational function in $q^{-s}$ with rational coefficients for all  $2 \leq n < N$, and the result follows.
\end{proof}

\subsection{The $\sigma$-algebra of symmetrized sets}

An ensemble is merely a probability space, and we here we set up the formal  machinery to compute probabilities of events of physical interest.

We set $\mathcal B$ and $\mathcal B^N$ to be the Borel $\sigma$-algebras on $\mf o$ and $\mf o^N$ as usual. We think of $\mf o$ as the one-particle space and $\mf o^N$ as the state space of our system of $N$ particles. The $\sigma$-algebra $\mathcal B^N$, however, is too large for our purposes in the sense that it contains events that would be out of reach of an observer of the system. Consider, for instance the set $A \times \mf o \times \cdots \times \mf o \subseteq \mf o^N$ for some $A \in \mathcal B$. This event is equivalent to knowing whether or not the particle in the first coordinate of the vector $\bs \alpha$ is in $A$.
However, since our particles are indistinguishable, we can't discern whether or not the {\em first} particle is in $A$; the closest we can come is to discern whether or not {\em one} of the particles is in $A$.

This reasoning suggests we should only consider events in $\mf o^N$ which are stabilized by the natural action of the symmetric group $S_N$. We denote the $\sigma$-algebra generated by all such {\em symmetrized} Borel subsets by $\mathcal S_N$. We explicitly define our probability space by $(\mf o^N, \mathcal S_N, \mathbb P_N)$ where,
\[
\mathbb P_N(B) = \frac{1}{Z(N,\beta,\mf o)} \int_B \prod_{m < n} |\alpha_n - \alpha_m|^{\beta} \, d\mu^N(\bs \alpha).
\]
$\mathbb P_N(B)$ depends implicitly on $\beta$. If we need to make this dependence explicit, we will write $\mathbb P(N, B, \beta)$.

To see what events in $\mathcal S_N$ look like, consider a rectangle $A = A_1 \times A_2 \times \cdots \times A_N$ in $\mathcal B^N$. Given a permutation $\tau \in S_N$ we define $\tau \cdot A = A_{\tau(1)} \times A_{\tau(2)} \times \cdots \times A_{\tau(N)}$. Using this, we define
\[
S_N \cdot A = \bigcup_{\tau \in S_N} \tau \cdot A.
\]
$S_N \cdot A$ is in $\mathcal S_N$ and is the {\em symmetrized} rectangle formed from $A$. $\mathcal S_N$ is generated by all symmetrized rectangles---in fact, since $\mathbb P_N$ is a Borel measure, we may restrict our attention to symmetrized rectangles $A$ where the $A_m$ are all balls.

\begin{lemma}
\label{push down lemma}
Given $E \subseteq \mathcal S_n$, $\zeta\in \mf o$ and $r \in \N$, define
\[
\zeta + \pi^r E = \{ (\zeta + \pi^r \alpha_1, \ldots, \zeta + \pi^r \alpha_n) : \bs \alpha \in E \}.
\]
Then,
\[
\mathbb P_n(\zeta + \pi^r E) = q^{-r\beta {n \choose 2} - rn} \mathbb P_n(E)
\]
\end{lemma}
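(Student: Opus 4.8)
The plan is to unwind the definition
\[
\mathbb P_n(B) = \frac{1}{Z(n,\mf o,\beta)}\int_B |\Delta_n(\bs\alpha)|^{\beta}\,d\mu^n(\bs\alpha)
\]
and perform the affine change of variables $\bs\alpha = \zeta + \pi^r\bs\gamma$ on the integral $\int_{\zeta+\pi^r E}|\Delta_n(\bs\alpha)|^{\beta}\,d\mu^n(\bs\alpha)$, tracking how each ingredient transforms. Before doing so I would note that $\zeta+\pi^r E$ is genuinely a set in $\mathcal S_n$, so that the left-hand side is defined: the map $\phi\colon \mf o^n\to\mf o^n$, $\phi(\bs\gamma)=\zeta+\pi^r\bs\gamma$, is a homeomorphism of $\mf o^n$ onto the clopen subset $(\zeta+\pi^r\mf o)^n$, and $\phi(\tau\cdot\bs\gamma)=\tau\cdot\phi(\bs\gamma)$ for every $\tau\in S_N$, so $\phi$ carries symmetrized Borel sets to symmetrized Borel sets.

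The computation then rests on three transformation rules. First, for Haar measure: translations preserve $\mu$ and dilation by $\pi^r$ scales it by $|\pi^r|=q^{-r}$, so $\mu(\zeta+\pi^r B)=q^{-r}\mu(B)$ for Borel $B\subseteq\mf o$, whence $d\mu^n(\zeta+\pi^r\bs\gamma)=q^{-rn}\,d\mu^n(\bs\gamma)$ on $\mf o^n$. Second, for the Vandermonde: the pairwise differences satisfy $(\zeta+\pi^r\gamma_j)-(\zeta+\pi^r\gamma_i)=\pi^r(\gamma_j-\gamma_i)$, and since $\Delta_n$ is a product of $\binom n2$ such differences, $\Delta_n(\zeta+\pi^r\bs\gamma)=\pi^{r\binom n2}\Delta_n(\bs\gamma)$, hence $|\Delta_n(\zeta+\pi^r\bs\gamma)|^{\beta}=q^{-r\beta\binom n2}|\Delta_n(\bs\gamma)|^{\beta}$. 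Third, $\bs\alpha\in\zeta+\pi^r E\iff\bs\gamma\in E$ by the very definition of $\zeta+\pi^r E$. Assembling these,
\[
\int_{\zeta+\pi^r E}|\Delta_n(\bs\alpha)|^{\beta}\,d\mu^n(\bs\alpha) = q^{-r\beta\binom n2-rn}\int_{E}|\Delta_n(\bs\gamma)|^{\beta}\,d\mu^n(\bs\gamma),
\]
and dividing through by $Z(n,\mf o,\beta)$ yields $\mathbb P_n(\zeta+\pi^r E)=q^{-r\beta\binom n2-rn}\mathbb P_n(E)$.

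There is no serious obstacle here: the entire content is the scaling behavior of non-archimedean Haar measure under dilation by the uniformizer, combined with the homogeneity of $\Delta_n$ of degree $\binom n2$ and its translation invariance. The only point meriting a sentence of care is the measurability and $S_N$-invariance of $\zeta+\pi^r E$ addressed above; the change-of-variables step itself is of the same routine type already used in the proof of Theorem~\ref{recursionthm}. As a consistency check, taking $E=\mf o^n$ gives $\mathbb P_n((\zeta+\pi^r\mf o)^n)=q^{-r\beta\binom n2-rn}$, and multiplying by $Z(n,\mf o,\beta)$ recovers exactly Lemma~\ref{lemma:1}, which is indeed stated there as the special case.
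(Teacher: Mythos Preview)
Your proof is correct and follows essentially the same route as the paper's: both arguments perform the affine change of variables $\bs\alpha\mapsto\zeta+\pi^r\bs\alpha$, compute $|\Delta_n(\zeta+\pi^r\bs\alpha)|=q^{-r\binom{n}{2}}|\Delta_n(\bs\alpha)|$ from the $\binom{n}{2}$ pairwise differences, invoke the scaling $d\mu^n(\zeta+\pi^r\bs\alpha)=q^{-rn}\,d\mu^n(\bs\alpha)$, and combine. Your version is slightly more careful in that you explicitly verify $\zeta+\pi^r E\in\mathcal S_n$ and track the normalizing constant $Z(n,\mf o,\beta)$, points the paper leaves implicit; otherwise the two proofs are interchangeable.
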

\begin{proof}
  Each point in $\zeta + \pi^r E$ can be written in the form $\zeta + \pi^r \bs \alpha$ where $\bs \alpha \in E$. Moreover this map is a bijection. We may thus write
  \[
  \mathbb P_n(\zeta + \pi^r E) = \int_{E} |\Delta_n(\zeta + \pi^r \bs \alpha)|^{\beta} \, d\mu^n(\zeta + \pi \bs \alpha),
  \]
  where
  \[
  |\Delta_n(\zeta + \pi^r \bs \alpha)| = \prod_{\ell < m} |\zeta + \pi^r \alpha_m - (\zeta + \pi^r \alpha_{\ell})| = |\pi|^{r{n \choose 2}} \prod_{\ell < m}|\alpha_m - \alpha_{\ell}| = q^{-r{n \choose 2}} |\Delta_n(\bs \alpha)|.
  \]
  The translation invariance of $\mu$ implies that
  \[
  d\mu^n(\zeta + \pi^r \bs \alpha) = |\pi|^{rn} d \mu^n(\bs \alpha) = q^{-rn} d \mu^n(\bs \alpha).
  \]
  Putting this all together, we see that $\mathbb P_n(\zeta + \pi^r E) = q^{-r\beta {n \choose 2} - rn} \mathbb P_n(E)$.
\end{proof}

\subsection{The $\sigma$-algebra of cylinder sets}
Given $B \in \mathcal B(\mf o)$ we define $N_B: \mf o^N \rightarrow \Z_{\geq 0}$ by $N_B(\bs \alpha) = \# \{\alpha_1, \ldots, \alpha_N\} \cap B$. That is $N_B(\bs \alpha)$ is the number of coordinates of $\bs \alpha$ in $B$. Put another way, $N_B$ is a random variable counting the number of particles in $B$.
We call $N_B^{-1}(n) \subseteq \mf o^N$ a {\em simple} cylinder set. Clearly $N_B^{-1}(n)$ is in $\mathcal S_N$. The $\sigma$-algebra generated by all simple cylinder sets is called the {\em cylinder} $\sigma$-algebra and denoted $\mathcal C_N$. More specifically,
\[
\mathcal C_N = \sigma\{N_B : B \in \mathcal B(\mf o)\} \subseteq \mathcal S_N.
\]
We also define $\mathcal C_{N}(B) = \sigma(N_B) \subseteq \mathcal C_N$ which contains information (only) about the number of particles in $B$. Note that $\mathcal C_{N}(B) = \mathcal C_{N}(B^c)$ is a finite $\sigma$-algebra.

To get a feeling for what cylinder sets look like, fix $B \subseteq \mf o$ and note that
\[
\{N_B = n\} = S_N \cdot \underbrace{B \times \cdots \times B}_n \times \underbrace{B^c \times \cdots \times B^c}_{N-n}
\]
Likewise, if $B$ and $E$ are disjoint sets and $n$ and $m$ non-negative integers with $n + m \leq N$, then
\[
\{N_B =n, N_E = m\} = S_N \cdot \underbrace{B \times \cdots \times B}_n \times \underbrace{E \times \cdots \times E}_m \times \underbrace{(B\cup E)^c \times \cdots \times (B\cup E)^c}_{N-n-m},
\]
and this pattern continues. Suppose $\mathbf B = (B_1, \ldots, B_M)$ is a finite open cover of $\mf o$, and $\mathbf n = (n_1, \ldots, n_M)$ is an {\em occupation vector} such that $n_1 + \cdots + n_M = N$. We define the event
\[
\{N_{\mathbf B} = \mathbf n\} = \{N_{B_1} = n_1, \ldots, N_{B_M} = n_M\} = S_N \cdot B_1^{n_1} \times \cdots \times B_M^{n_M}.
\]
All cylinder sets can be described (via union and intersection) in terms of such sets.

We will be especially interested in such cylinder sets where each of the $B_m$ is a ball. Each $B_m$ can then be written as $\zeta_m + \pi^{r_m} \mf o$ where $\zeta_m \in B_m$ and $r_m$ is a non-negative integer. The strong triangle inequality implies that any $\zeta$ in $B_m$ has equal claim to being its center, and we use this fact to our advantage in the proof of the next theorem.
\begin{thm}
  \label{complete cyl thm}
  Let $(B_m = \zeta_m + \pi^{r_m} \mf o$ : $m = 1, \ldots, M)$ be a collection of disjoint balls, and let $\mathbf n$ be an occupation vector which sums to $N$. Then,
  \begin{align*}
  \mathbb P_N\{N_{\mathbf B} = \mathbf n\} &= \frac{N!}{Z(N, \mf o, \beta)}\bigg\{\prod_{k < \ell}^M |\zeta_{\ell} - \zeta_k|^{\beta n_{\ell} n_k} \bigg\} \prod_{m=1}^M \frac{Z(n_m, \pi^{r_m} \mf o, \beta)}{n_m!} \\
  &= \frac{N!}{Z(N, \mf o, \beta)}\bigg\{\prod_{k < \ell}^M |\zeta_{\ell} - \zeta_k|^{\beta n_{\ell} n_k} \bigg\} \prod_{m=1}^M q^{-r_m \beta {n_m \choose 2} - r_m n_m}\frac{Z(n_m, \mf o, \beta)}{n_m!}.
  \end{align*}
\end{thm}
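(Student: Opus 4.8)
The plan is to reduce the event $\{N_{\mathbf B} = \mathbf n\}$ to a union of rectangles, factor the Vandermonde integrand using the disjointness of the balls, and then recognize each piece as a scaled partition function via Lemma~\ref{push down lemma} (equivalently Lemma~\ref{lemma:1}). First I would write the event as $S_N \cdot (B_1^{n_1} \times \cdots \times B_M^{n_M})$, as recorded just before the statement. The symmetrized rectangle is, up to a null set (where two coordinates coincide, contributing nothing to the integral), a disjoint union of $N!/(n_1! \cdots n_M!)$ translates of the ordered rectangle by coset representatives of $S_N / (S_{n_1} \times \cdots \times S_{n_M})$; since the integrand $|\Delta_N(\bs\alpha)|^\beta$ and the measure $\mu^N$ are both $S_N$-invariant, each translate contributes equally. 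Hence
\[
\mathbb P_N\{N_{\mathbf B} = \mathbf n\} = \frac{1}{Z(N,\mf o,\beta)} \cdot \frac{N!}{n_1! \cdots n_M!} \int_{B_1^{n_1} \times \cdots \times B_M^{n_M}} |\Delta_N(\bs\alpha)|^\beta \, d\mu^N(\bs\alpha).
\]

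The next step is the key algebraic manipulation: on the rectangle $B_1^{n_1} \times \cdots \times B_M^{n_M}$, split the product $\Delta_N(\bs\alpha) = \prod_{i<j}(\alpha_j - \alpha_i)$ into intra-block factors (both indices in the same block $B_m$) and inter-block factors (indices in distinct blocks $B_k, B_\ell$). For the intra-block factors we get $\prod_{m=1}^M |\Delta_{n_m}(\bs\alpha^{(m)})|^\beta$ where $\bs\alpha^{(m)}$ ranges over $B_m^{n_m}$. For an inter-block pair with $\alpha_i \in B_k = \zeta_k + \pi^{r_k}\mf o$ and $\alpha_j \in B_\ell = \zeta_\ell + \pi^{r_\ell}\mf o$, the strong triangle inequality gives $|\alpha_j - \alpha_i| = |\zeta_\ell - \zeta_k|$ exactly: indeed $\alpha_j - \alpha_i = (\zeta_\ell - \zeta_k) + (\text{something in } \pi^{r_\ell}\mf o) - (\text{something in } \pi^{r_k}\mf o)$, and since the balls are disjoint, $|\zeta_\ell - \zeta_k|$ strictly exceeds the absolute values of the two perturbation terms (this is precisely the remark in the excerpt that any point of a ball has equal claim to being its center). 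So the inter-block contribution is the constant $\prod_{k<\ell} |\zeta_\ell - \zeta_k|^{\beta n_\ell n_k}$, which pulls out of the integral. Fubini's theorem then factors the remaining integral over the blocks as $\prod_{m=1}^M \int_{B_m^{n_m}} |\Delta_{n_m}(\bs\alpha^{(m)})|^\beta \, d\mu^{n_m} = \prod_{m=1}^M Z(n_m, B_m, \beta)$, which is $\prod_{m=1}^M Z(n_m, \pi^{r_m}\mf o, \beta)$ by translation invariance. This establishes the first displayed equality, and the second follows immediately by applying Lemma~\ref{lemma:1} to each factor $Z(n_m, \pi^{r_m}\mf o, \beta) = q^{-r_m\beta\binom{n_m}{2} - r_m n_m} Z(n_m, \mf o, \beta)$.

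The main obstacle is the inter-block evaluation $|\alpha_j - \alpha_i| = |\zeta_\ell - \zeta_k|$ and making precise that this holds for \emph{all} choices of centers simultaneously — one must check that disjointness of the balls $B_k, B_\ell$ forces $|\zeta_\ell - \zeta_k| > \max\{|\pi|^{r_k}, |\pi|^{r_\ell}\} = \max\{q^{-r_k}, q^{-r_\ell}\}$, so that the ultrametric inequality is an equality rather than just a bound. (If the balls merely overlapped or one contained the other this would fail, which is why disjointness is hypothesized.) Everything else — the combinatorial bookkeeping of the symmetrization factor, the null-set argument, and the Fubini factorization — is routine, and the final line is a one-step invocation of an already-proved lemma.
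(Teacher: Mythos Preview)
Your proposal is correct and follows essentially the same route as the paper's own proof: symmetrize to pick up the multinomial factor $N!/(n_1!\cdots n_M!)$, split the Vandermonde into intra- and inter-block pieces, use the ultrametric inequality (disjointness forcing $|\zeta_\ell-\zeta_k|>\max\{q^{-r_k},q^{-r_\ell}\}$, hence equality) to make the inter-block factor constant, then Fubini, translation invariance, and Lemma~\ref{lemma:1}. The paper's argument is identical in structure and in the handling of the strong triangle inequality; your write-up is, if anything, slightly more explicit about why the ultrametric step requires disjointness.
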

Before the proof, a couple of remarks are in order. First, this gives an exact method of computing the probabilities of these special cylinder sets in finite time. In fact, since any finitely-described cylinder set is a disjoint union of these special cylinder sets, in fact we can now compute exactly, and in finite time, the probability for any cylinder set we may care about. Finally we remark that the $\prod_{k < \ell}^M |\zeta_{\ell} - \zeta_k|^{\beta n_m n_k}$ term that appears is the Boltzmann factor for a system at inverse temperature $\beta$ and a particle at each of the $\zeta_m$ with integer charge $n_m$. This connects to the {\em multi-component ensemble} where we allow particles to have different integer multiple charges, and is considered in Section~\ref{multi-comp}.
\begin{proof}
There are $N \choose n_1, \ldots, n_M$ images of $B_1^{n_1} \times \cdots \times B_M^{n_M}$ under the action of $S_N$. Thus,
\begin{align*}
  \mathbb P_N\{N_{\mathbf B} = \mathbf n\} &= \frac{1}{Z(N, \mf o, \beta)} \frac{N!}{n_1! \cdots n_M!} \int_{B_1^{n_1}} \cdots \int_{B_M^{n_M}}
\bigg\{\prod_{k < \ell}^M
\prod_{i=1}^{n_{\ell}} \prod_{j=1}^{n_k} |\alpha^{\ell}_i - \alpha^k_j|^{\beta} \bigg\} \\
& \hspace{4cm}\times \prod_{m=1}^M |\Delta_{n_m}(\bs \alpha^m)|^{\beta} d\mu^{n_1}(\bs \alpha^1) \cdots d\mu^{n_M}(\bs \alpha^M).
\end{align*}
But now, if $\alpha_{\ell} \in B_{\ell}$ and $\alpha_k \in B_k$ then,
\[
|\alpha_{\ell} - \alpha_k| = |\alpha_{\ell} - \zeta_{\ell} + \zeta_{\ell} - \zeta_k + \zeta_k - \alpha_k| \leq \max\{|\alpha_{\ell} - \zeta_{\ell}|,|\zeta_{\ell} - \zeta_k|,|\zeta_k - \alpha_k|\}.
\]
Notice also that since $\zeta_k \not \in B_{\ell}$ we must have
\[
|\zeta_{\ell} - \zeta_k| > |\alpha_{\ell} - \zeta_{\ell}| \qq{and similarly}
|\zeta_{\ell} - \zeta_k| > |\alpha_{k} - \zeta_{k}|.
\]
The strong triangle inequality is thus an {\em equality}, and
$|\alpha_{\ell} - \alpha_k| = |\zeta_{\ell} - \zeta_k|$. It follows that
\[
\prod_{k < \ell}^M \prod_{i=1}^{n_{\ell}} \prod_{j=1}^{n_k} |\alpha^{\ell}_i - \alpha^k_j|^{\beta} = \prod_{k < \ell}^M |\zeta_{\ell} - \zeta_k|^{\beta n_{\ell} n_k},
\]
and
\[
\mathbb P_N\{N_{\mathbf B} = \mathbf n\} = \frac{N!}{Z(N, \mf o, \beta)}\bigg\{\prod_{k < \ell}^M |\zeta_{\ell} - \zeta_k|^{\beta n_m n_k} \bigg\} \prod_{m=1}^M \frac{Z(n_m, B_m, \beta)}{n_m!}.
\]
To arrive at the expressions given in the statement of the theorem, we first notice that translating the ball $B_m = \zeta_m + \pi^{r_m} \mf o$ to $\pi^{r_m} \mf o$ does not change the partition function. That is $Z(n_m, B_m, \beta) = Z(n_m, \pi^{r_m} \mf o, \beta)$. The second expression follows from the first by using the fact that $\pi^{r_m} \mf o$ is a contraction of $\mf o$ and the integral defining $Z(n_m, \pi^{r_m} \mf o, \beta)$ can be expressed in terms of $Z(n_m, \mf o, \beta)$ accordingly.
\end{proof}

To derive probabilities of more general cylinder sets $\{N_{\mathbf B} = \mathbf n\}$ where $n_1 + \cdots + n_M = N' < N$, we note that $B$ being a finite union of balls implies that $B^c$ too can be expressed as the disjoint union of finitely many balls in $\mf o$. That is there exist balls $C_1, \ldots, C_K$ so that $B^c = C_1 \sqcup \cdots \sqcup C_K$. There is more than one way of doing this, but there is a unique set of balls (up to reordering) with minimal $K$. We then write
\[
\{N_{\mathbf B} = \mathbf n\} = \bigsqcup_{\Sigma \ell_k = N-N'} \{N_{\mathbf B} = \mathbf n, N_{\mathbf C} = \bs \ell\},
\]
where the union is over all occupation vectors $\bs \ell = (\ell_1, \ldots, \ell_K)$ summing to $N - N'$. Events like $\{N_{\mathbf B} = \mathbf n, N_{\mathbf C} = \bs \ell\}$ are now computable by Theorem~\ref{complete cyl thm}.

\begin{figure}[h!]
\centering
\includegraphics[scale=.4]{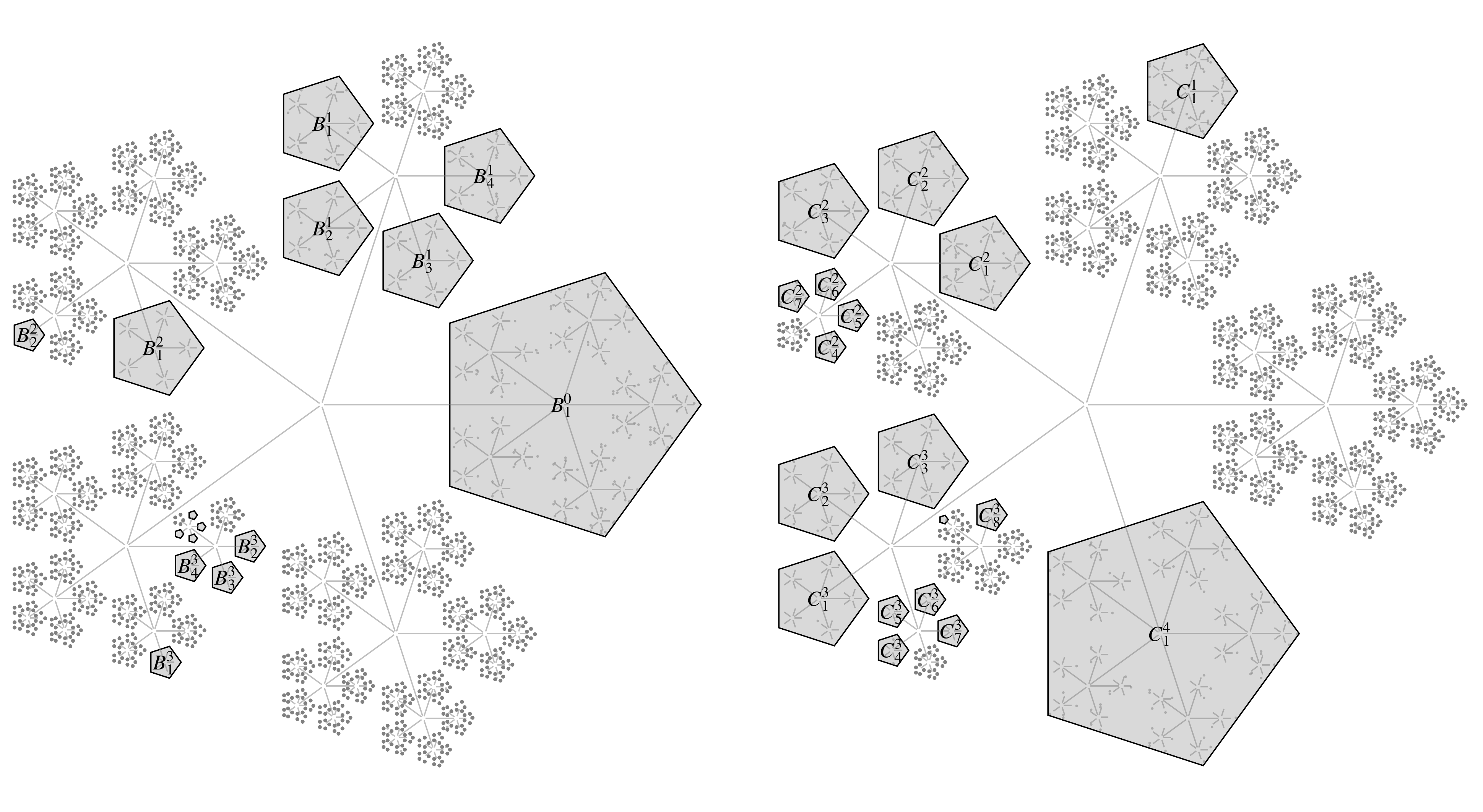}
\caption{An example of a finite disjoint union of balls $\mathbf B$ in $\mf o = \mathbb Z_5$, and its complementary disjoint union $\mathbf C$. These are labeled as in Corollary~\ref{cyl cor 2}, with labels suppressed for smaller diameter balls.}
\end{figure}

\begin{cor}
\label{cyl cor}
Suppose $B = B_1 \sqcup \cdots \sqcup B_M$ is a proper subset of $\mf o$ with each $B_m = \zeta_m + \pi^{r_m} \mf o$, and $B^c = C_1 \sqcup \cdots \sqcup C_K$ where each $C_k = \xi_k + \pi^{t_k} \mf o$. Then, if $\mathbf n = (n_1, \ldots, n_M)$ is an occupation vector with $n_1 + \cdots + n_M = N' < N$. Then,
\begin{align*}
\mathbb P_N\{N_{\mathbf B} = \mathbf n\} &= \frac{N!}{Z(N, \mf o, \beta)}  \bigg[ \frac{Z(N', \mf o, \beta)}{N'!}\mathbb P_{N'}\{N_{\mathbf B} = \mathbf n\} \\ & \sum_{\Sigma \ell = N-N'} \bigg\{ \prod_{m=1}^M \prod_{k=1}^K |\zeta_m - \xi_k|^{\beta n_m \ell_k}\bigg\} \frac{Z(N-N', \mf o, \beta)}{(N-N')!} \mathbb P_{N-N'}\{N_{\mathbf C} = \bs \ell\} \bigg], \nonumber
\end{align*}
where the sum is over all occupation vectors $\bs \ell = (\ell_1, \ldots, \ell_K)$ summing to $N - N'$.
\end{cor}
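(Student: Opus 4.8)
The plan is to reduce the statement to repeated use of Theorem~\ref{complete cyl thm}: one application in the full $N$-particle ensemble against the disjoint ball cover $(B_1,\dots,B_M,C_1,\dots,C_K)$ of $\mf o$, and one application each in the smaller $N'$- and $(N-N')$-particle ensembles to identify the blocks that assemble into $\mathbb P_{N'}\{N_{\mathbf B}=\mathbf n\}$ and $\mathbb P_{N-N'}\{N_{\mathbf C}=\bs\ell\}$. No analytic input is needed; the whole argument is a rearrangement of the closed form already proved.

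First I would use the disjoint decomposition recorded just before the corollary, namely $\{N_{\mathbf B}=\mathbf n\}=\bigsqcup_{\bs\ell}\{N_{\mathbf B}=\mathbf n,\ N_{\mathbf C}=\bs\ell\}$, the union ranging over occupation vectors $\bs\ell=(\ell_1,\dots,\ell_K)$ with $\ell_1+\cdots+\ell_K=N-N'$, so that $\mathbb P_N\{N_{\mathbf B}=\mathbf n\}=\sum_{\bs\ell}\mathbb P_N\{N_{\mathbf B}=\mathbf n,\ N_{\mathbf C}=\bs\ell\}$. Since $(B_1,\dots,B_M,C_1,\dots,C_K)$ is a collection of disjoint balls with union $\mf o$ and $(\mathbf n,\bs\ell)$ is an occupation vector summing to $N$, Theorem~\ref{complete cyl thm} evaluates each summand. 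In that formula the product of pair Boltzmann factors over the full cover splits into three groups according as both balls lie in $\mathbf B$, both lie in $\mathbf C$, or one in each: these contribute $\prod_{k<\ell}^M|\zeta_\ell-\zeta_k|^{\beta n_\ell n_k}$, the analogous product $\prod_{i<j}^K|\xi_i-\xi_j|^{\beta\ell_i\ell_j}$, and the cross factor $\prod_{m=1}^M\prod_{k=1}^K|\zeta_m-\xi_k|^{\beta n_m\ell_k}$, while the single-ball factor splits as $\bigl(\prod_{m=1}^M Z(n_m,B_m,\beta)/n_m!\bigr)\bigl(\prod_{k=1}^K Z(\ell_k,C_k,\beta)/\ell_k!\bigr)$.

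Next I would recognise the two ``pure'' blocks. Applying Theorem~\ref{complete cyl thm} in the $N'$-particle ensemble to the same cover with occupation vector $(\mathbf n,\mathbf 0)$ --- which is legitimate because in that ensemble $\{N_{\mathbf B}=\mathbf n\}$ with $\sum n_m=N'$ already forces every $C_k$ empty, and $Z(0,C_k,\beta)=1$ while every $\mathbf C$-involving absolute value factor carries exponent $0$ --- gives $\dfrac{Z(N',\mf o,\beta)}{N'!}\,\mathbb P_{N'}\{N_{\mathbf B}=\mathbf n\}=\bigl(\prod_{k<\ell}^M|\zeta_\ell-\zeta_k|^{\beta n_\ell n_k}\bigr)\prod_{m=1}^M\dfrac{Z(n_m,B_m,\beta)}{n_m!}$. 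Symmetrically, running Theorem~\ref{complete cyl thm} in the $(N-N')$-particle ensemble with occupation $(\mathbf 0,\bs\ell)$ yields $\dfrac{Z(N-N',\mf o,\beta)}{(N-N')!}\,\mathbb P_{N-N'}\{N_{\mathbf C}=\bs\ell\}=\bigl(\prod_{i<j}^K|\xi_i-\xi_j|^{\beta\ell_i\ell_j}\bigr)\prod_{k=1}^K\dfrac{Z(\ell_k,C_k,\beta)}{\ell_k!}$. Substituting both identities into the sum and pulling the $\bs\ell$-independent $\mathbf B$-block out of the summation over $\bs\ell$ produces exactly the asserted identity; if the un-factored form is undesirable one may finally rewrite each $Z(n_m,B_m,\beta)$ as $q^{-r_m\beta{n_m\choose2}-r_mn_m}Z(n_m,\mf o,\beta)$ by Lemma~\ref{lemma:1}.

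The only step that needs care is the bookkeeping in the previous paragraph: checking that the pair-factor product over the $(M+K)$-ball cover decomposes cleanly into the $\mathbf B$--$\mathbf B$, $\mathbf C$--$\mathbf C$, and $\mathbf B$--$\mathbf C$ parts, and that the ``all particles in $B$'' and ``all particles in $B^c$'' specializations of Theorem~\ref{complete cyl thm} reproduce $\mathbb P_{N'}\{N_{\mathbf B}=\mathbf n\}$ and $\mathbb P_{N-N'}\{N_{\mathbf C}=\bs\ell\}$ exactly. I do not expect any genuine obstacle beyond this index accounting.
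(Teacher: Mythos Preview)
Your proposal is correct and is exactly the argument the paper intends: the corollary is stated without proof, but the text immediately preceding it sets up precisely your decomposition $\{N_{\mathbf B}=\mathbf n\}=\bigsqcup_{\bs\ell}\{N_{\mathbf B}=\mathbf n,\,N_{\mathbf C}=\bs\ell\}$ and says each piece is ``now computable by Theorem~\ref{complete cyl thm}.'' Your additional step of reapplying Theorem~\ref{complete cyl thm} in the $N'$- and $(N-N')$-particle ensembles to repackage the $\mathbf B$--$\mathbf B$ and $\mathbf C$--$\mathbf C$ blocks as $\frac{Z(N',\mf o,\beta)}{N'!}\mathbb P_{N'}\{N_{\mathbf B}=\mathbf n\}$ and $\frac{Z(N-N',\mf o,\beta)}{(N-N')!}\mathbb P_{N-N'}\{N_{\mathbf C}=\bs\ell\}$ is the only nontrivial bookkeeping needed, and you handle it correctly.
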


In many situations we may simplify Corollary~\ref{cyl cor} by using the additivity of energy over cosets: If $\zeta + \pi^r \mf o$ and $\xi + \pi^s \mf o$ are in different cosets of $\mf m$, then $|\zeta - \xi| = 1$. This is the basis for the following formidable-looking simplification.
\begin{cor}
  \label{cyl cor 2}
  Suppose for each $j=0,\ldots,q-1$, $\mathbf B^j = (B^j_1, \ldots, B^j_{M_j})$ is a set of disjoint balls in $j + \mf m$ and $\mathbf C^j = (C^j_1, \ldots, C^j_{K_j})$ is a family of disjoint balls whose union $C^j$ is the complement of $B^j = B^j_1 \sqcup \cdots \sqcup B^j_{M_j}$ in $j + \mf m$. Suppose $B^j_m = \zeta^j_m + \pi^{r^j_m} \mf o$ and $C^j_{\ell} = \xi^j_{\ell} + \pi^{t^j_{\ell}} \mf o$. Set $N^j = n^j_1 + \cdots + n^j_{M_j}$ and suppose $N' = N^0 + \cdots + N^{q-1} \leq N$, and $L = N - N'$.
  \begin{align*}
\mathbb P_N\{N_{\mathbf B^0} = \mathbf n^0, \ldots, N_{\mathbf B^{q-1}} = \mathbf n^{q-1}\} &= \frac{N!}{Z(N, \mf o, \beta)} \bigg\{\prod_{j=0}^{q-1} \frac{Z(N_j, \mf o, \beta)}{N_j!} \mathbb P_{N_j}\{N_{\mathbf B^j} = \mathbf n^j\} \bigg\} \\
& \hspace{-4cm} \sum_{\Sigma L^j = L} \sum_{\Sigma l^0_k = L^0} \cdots \sum_{\Sigma l^{q-1}_k = L^{q-1}} \left\{ \prod_{j=0}^{q-1} \bigg\{\prod_{m=1}^{N_j} \prod_{k=1}^{L_j} | \zeta_m^j - \xi_k^j |^{\beta n_m^j \ell_k^j} \frac{Z(L_j, \mf o, \beta)}{L_j!}\bigg\} \mathbb P_{L_j}\{N_{\mathbf C^j} = \bs \ell^j\} \right\},
\end{align*}
where the sums are over $L^0 + \cdots + L^{q-1} = L$ and all $\bs \ell^j = (\ell^j_1, \ldots, \ell^j_{K_j})$ where $\ell^j_1 + \cdots + \ell^j_{K_j} = L^j$ for $j=0, \ldots, q-1$.
\end{cor}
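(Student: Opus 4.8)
The plan is to obtain Corollary~\ref{cyl cor 2} by iterating Corollary~\ref{cyl cor} once we have reorganized everything according to cosets of $\mf m$. The starting point is the observation already flagged in the text: each ball $B^j_m$ lives inside the coset $j+\mf m$, and the full collection $\mathbf B = \bigsqcup_j \mathbf B^j$ together with the complementary balls $\mathbf C = \bigsqcup_j \mathbf C^j$ forms a disjoint partition of $\mf o$ into balls. So Corollary~\ref{cyl cor} applies directly with $N' = N^0 + \cdots + N^{q-1}$ and $L = N - N'$, expressing $\mathbb P_N\{N_{\mathbf B^0}=\mathbf n^0,\ldots,N_{\mathbf B^{q-1}}=\mathbf n^{q-1}\}$ as $\frac{N!}{Z(N,\mf o,\beta)}$ times $\frac{Z(N',\mf o,\beta)}{N'!}\,\mathbb P_{N'}\{N_{\mathbf B}=\mathbf n\}$ times a sum over occupation vectors $\bs\ell$ for $\mathbf C$ of a product of cross-terms $|\zeta-\xi|^{\beta n\ell}$ with $\frac{Z(L,\mf o,\beta)}{L!}\mathbb P_{L}\{N_{\mathbf C}=\bs\ell\}$.

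The main work is then to \emph{factor} each of the three pieces across the $q$ cosets. First, for the $\mathbf B$-occupancy: since the $\mathbf B^j$ live in distinct cosets of $\mf m$, the additivity of energy over cosets (the Additivity Lemma) gives $|\Delta|$ factoring as $\prod_j |\Delta(\bs\alpha^j)|$ on the event in question, and together with Theorem~\ref{recursionthm} (applied to the set of balls $\mathbf B$ within $\mf o$, respectively within each coset) this yields
\[
\frac{Z(N',\mf o,\beta)}{N'!}\,\mathbb P_{N'}\{N_{\mathbf B}=\mathbf n\} = \prod_{j=0}^{q-1} \frac{Z(N_j,\mf o,\beta)}{N_j!}\,\mathbb P_{N_j}\{N_{\mathbf B^j}=\mathbf n^j\}.
\]
The cleanest route to this identity is to run the proof of Theorem~\ref{complete cyl thm}/Corollary~\ref{cyl cor} one level up: symmetrize, split the integral over the cosets, use that any two balls in different cosets are at distance $1$ so their mutual Boltzmann factor is $1$, and apply Fubini. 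The analogous statement holds verbatim for the $\mathbf C$-part: $\frac{Z(L,\mf o,\beta)}{L!}\,\mathbb P_{L}\{N_{\mathbf C}=\bs\ell\}$ factors as $\prod_j \frac{Z(L_j,\mf o,\beta)}{L_j!}\,\mathbb P_{L_j}\{N_{\mathbf C^j}=\bs\ell^j\}$, with the single sum over $\bs\ell$ (summing to $L$) breaking into the nested sums over $L^0+\cdots+L^{q-1}=L$ and over each $\bs\ell^j$ summing to $L^j$. Finally the cross-terms $\prod_{m,k}|\zeta_m - \xi_k|^{\beta n_m\ell_k}$ over all $B$-balls and all $C$-balls split: a factor is $1$ whenever the two balls sit in different cosets (distance $1$, so $\log$-distance $0$), and what survives is exactly $\prod_j \prod_{m=1}^{N_j}\prod_{k=1}^{L_j}|\zeta^j_m - \xi^j_k|^{\beta n^j_m \ell^j_k}$.

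Assembling these three factorizations inside the formula from Corollary~\ref{cyl cor} gives precisely the displayed identity. The step I expect to be the main obstacle is bookkeeping rather than mathematics: making sure the multinomial/factorial normalizations track correctly through the coset decomposition — i.e.\ that $N!/N'! \cdot (\text{stuff})$ reorganizes into $\prod_j$ of the per-coset normalizations without a stray combinatorial factor — and that the index ranges in the nested sums are matched honestly (each $\bs\ell^j$ ranges over vectors of length $K_j$ summing to $L^j$, with $\sum_j L^j = L$). One clean way to avoid errors is to prove the factorization abstractly first as a lemma: \emph{for any disjoint balls distributed among the $q$ cosets of $\mf m$, and any occupancy vector, $\frac{Z(\,\cdot\,,\mf o,\beta)}{(\cdot)!}\mathbb P$ on the joint cylinder event is multiplicative over cosets} — and then apply it twice, once to $\mathbf B$ and once to $\mathbf C$, inside Corollary~\ref{cyl cor}.
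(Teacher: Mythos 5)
Your proposal is correct and follows exactly the route the paper intends: the remark preceding the corollary says it is obtained from Corollary~\ref{cyl cor} (equivalently Theorem~\ref{complete cyl thm}) by noting that balls in different cosets of $\mf m$ are at distance $1$, so all cross-coset Boltzmann factors equal $1$ and the formula factors coset by coset, with the single sum over occupation vectors of $\mathbf C$ reindexed as the nested sums over $\Sigma L^j = L$ and the $\bs\ell^j$. Your three-way factorization (the $\mathbf B$-part, the $\mathbf C$-part, and the cross-terms) is precisely this argument, so no further commentary is needed.
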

Note that any cylinder set given as in Corollary~\ref{cyl cor}  can be written as $\{N_{\mathbf B^0} = \mathbf n^0, \ldots, N_{\mathbf B^{q-1}} = \mathbf n^{q-1}\}$ for an appropriate choice of the $\mathbf B^j$ and $\mathbf n^j$.

\subsection{Conditioning on $\mathcal C_N(B)$}

Now that we understand the $\sigma$-algebra of cylinder sets, at least in principle, we turn to the conditional distribution of particles given events like $\{N_B = n\}$. Specifically, we consider events of the form
\[
S_N \cdot A_1 \times \cdots \times A_n \times \underbrace{B^c \times \cdots \times B^c}_{N-n},
\]
where $A_1 \times \cdots \times A_n$ is a measurable rectangle in $B^n$. We define $\mathcal L_{N}(B)$ to be the {\em local} $\sigma$-algebra generated by all such sets (over all possible $0 \leq n \leq N$). We also denote the set $\mathcal L_{N}^n(B)$ by
\[
\mathcal L_{N}^n(B) = \{ E \cap \{N_B = n\} : E \in \mathcal L_{N,B}\}
\]
Events in $\mathcal L_{N}^n(B)$ are events on which $N_B = n$ and which contain information about those $n$ particles in $B$. We remark that $\mathcal L_{N}^n(B)$ is itself {\em not} a $\sigma$-algebra. However, it {\em is} the image of the Borel $\sigma$-algebra on $B^n$ under the map
\[
A \mapsto S_N \cdot A \times (B^c)^{N-n}
\]
In fact, we could take $A \subseteq B^n$ to be symmetrized with respect to $S_n$ and we get a bijective correspondence between $\mathcal L_N^n(B)$ and the $\sigma$-algebra of symmetrized Borel subsets of $B^n$ denoted $\mathcal S_n(B)$. The complementary information, about the particles in $B^c$, is given by the $\sigma$-algebras $\mathcal L_{N,B^c}$ and $\mathcal S_{N-n}(B^c)$ (which is in correspondence with $\mathcal L_{N}^{N-n}(B^c)$).

Our main result for this section is the following.
\begin{thm}
  \label{cond-dist-thm}
  Let $B \subseteq \mf o$ be a ball. Then $\mathcal L_N(B)$ and $\mathcal L_N(B^c)$ are conditionally independent given $\mathcal C_N(B)$. That is, $\mathcal L_N(B) ] \mathcal C_N(B) [ \mathcal L_N(B^c)$. Moreover, the conditional distribution on $\mathcal L_N(B)$ given the event $\{N_B = n\}$ is
  \[
  \mathbb P_N\{ S_N \cdot A_1 \times \cdots A_n \times (B^c)^{N-n} | N_B = n\} = \mathbb P_n(S_n \cdot A_1 \times \cdots \times A_n).
  \]
\end{thm}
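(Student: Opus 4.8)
The plan is to compute the conditional probability directly and watch the complicated part cancel. Since $\mathbb P_N$ and $\mathbb P_n$ are Borel measures, by the reduction already noted it suffices to treat events in which the sets $A_i$ (and, for the independence claim, the balls $D_j$ below) are pairwise disjoint balls; equal $A_i$'s are handled the same way, with multiplicity factors that cancel between the two sides. Write $B=\zeta+\pi^r\mf o$ and fix $0\le n\le N$. The crux is a factorization of the energy on $\{N_B=n\}$. For $\bs\alpha$ with $N_B(\bs\alpha)=n$, collect the coordinates lying in $B$ into a block $\bs\alpha'\in B^n$ and those lying in $B^c$ into $\bs\alpha''\in(B^c)^{N-n}$. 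Two coordinates both in $B$ differ by an element of $\pi^r\mf o$; two both in $B^c$ differ by an element of $\mf o$; and if $\alpha'\in B$, $\alpha''\in B^c$ then $|\alpha'-\zeta|\le q^{-r}<|\zeta-\alpha''|$, so---exactly the strong triangle inequality argument from the proof of Theorem~\ref{complete cyl thm}---$|\alpha'-\alpha''|=|\zeta-\alpha''|$. Hence on $\{N_B=n\}$,
\[
|\Delta_N(\bs\alpha)|^\beta=f(\bs\alpha')\,g(\bs\alpha''),\qquad f(\bs\alpha'):=|\Delta_n(\bs\alpha')|^\beta,\quad g(\bs\alpha''):=|\Delta_{N-n}(\bs\alpha'')|^\beta\prod_{j=1}^{N-n}|\zeta-\alpha''_j|^{n\beta};
\]
the energy splits into a $B$-part and a $B^c$-part coupled only through the fixed number $n$ and the center $\zeta$. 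Write $J_n:=\int_{(B^c)^{N-n}}g\,d\mu^{N-n}$, and note $\int_{B^n}f\,d\mu^n=Z(n,B,\beta)$.

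Next I would feed this into $\mathbb P_N$, symmetrizing over $S_N$ and using the $S_N$-invariance of $|\Delta_N|^\beta$, Fubini, and the factorization above. Since $\{N_B=n\}$ is, up to relabeling coordinates, a disjoint union of $\binom{N}{n}$ copies of $B^n\times(B^c)^{N-n}$, one gets $\mathbb P_N\{N_B=n\}=\binom{N}{n}Z(n,B,\beta)J_n/Z(N,\mf o,\beta)$. Since $S_N\cdot A_1\times\cdots\times A_n\times(B^c)^{N-n}$ is a union of $N!/(N-n)!$ relabelings of $A_1\times\cdots\times A_n\times(B^c)^{N-n}$, each point covered $(N-n)!$ times,
\[
\mathbb P_N\{S_N\cdot A_1\times\cdots\times A_n\times(B^c)^{N-n}\}=\frac{N!/(N-n)!}{Z(N,\mf o,\beta)}\Big(\int_{A_1\times\cdots\times A_n}f\,d\mu^n\Big)J_n,
\]
and, with also disjoint balls $D_j\subseteq B^c$,
\[
\mathbb P_N\{S_N\cdot A_1\times\cdots\times A_n\times D_1\times\cdots\times D_{N-n}\}=\frac{N!}{Z(N,\mf o,\beta)}\Big(\int_{A_1\times\cdots\times A_n}f\,d\mu^n\Big)\Big(\int_{D_1\times\cdots\times D_{N-n}}g\,d\mu^{N-n}\Big),
\]
\[
\mathbb P_N\{S_N\cdot B^n\times D_1\times\cdots\times D_{N-n}\}=\frac{N!/n!}{Z(N,\mf o,\beta)}\,Z(n,B,\beta)\Big(\int_{D_1\times\cdots\times D_{N-n}}g\,d\mu^{N-n}\Big).
\]

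Finally I would divide. In every ratio the factor $J_n$---which carries all of the coupling between $B$ and $B^c$ and is genuinely complicated---cancels, together with $Z(N,\mf o,\beta)$. Using $N!/(N-n)!=n!\binom{N}{n}$,
\[
\mathbb P_N\{S_N\cdot A_1\times\cdots\times A_n\times(B^c)^{N-n}\mid N_B=n\}=\frac{n!}{Z(n,B,\beta)}\int_{A_1\times\cdots\times A_n}|\Delta_n|^\beta\,d\mu^n=\mathbb P_n\{S_n\cdot A_1\times\cdots\times A_n\},
\]
the right-hand side being the value of this event under the canonical $n$-particle ensemble inside $B$---which is what $\mathbb P_n$ denotes here, agreeing with the ensemble on $\mf o^n$ after the rescaling $B\cong\mf o$ by Lemma~\ref{push down lemma}/Lemma~\ref{lemma:1}. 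This is the ``moreover'' part. Dividing the remaining two displays by $\mathbb P_N\{N_B=n\}$ shows that $\mathbb P_N\{S_N\cdot A_1\times\cdots\times A_n\times D_1\times\cdots\times D_{N-n}\mid N_B=n\}$ factors as the product of $\mathbb P_N\{S_N\cdot A_1\times\cdots\times A_n\times(B^c)^{N-n}\mid N_B=n\}$ and $\mathbb P_N\{S_N\cdot B^n\times D_1\times\cdots\times D_{N-n}\mid N_B=n\}$. Since $\mathcal C_N(B)=\sigma(N_B)$ is the finite $\sigma$-algebra with atoms $\{N_B=n\}$ and the two families of events just named generate $\mathcal L_N(B)$ and $\mathcal L_N(B^c)$ respectively, a routine monotone-class argument promotes this to $\mathcal L_N(B) ] \mathcal C_N(B) [ \mathcal L_N(B^c)$.

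I do not expect a serious obstacle; the content is the factorization trick of Theorem~\ref{complete cyl thm} plus elementary bookkeeping. The one load-bearing step is the cross-term identity $|\alpha'-\alpha''|=|\zeta-\alpha''|$: it makes the joint density a genuine product, so that simultaneously the $B$- and $B^c$-configurations decouple and the coupling integral $J_n$ depends on the $B$-block only through $n$, and hence cancels from every conditional probability. The thing to be careful about is resisting the urge to evaluate $J_n$ itself---that would amount to re-expanding over a ball decomposition of $B^c$ via Theorem~\ref{complete cyl thm}, and it is irrelevant to the conclusion.
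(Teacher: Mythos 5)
Your proposal is correct and follows essentially the same route as the paper: the strong triangle inequality makes the cross-interaction between the $B$-block and the $B^c$-block constant in the $B$-coordinates (your $g$ and $J_n$ are exactly the paper's $R_B(\bs\gamma)^{\beta}|\Delta_{N-n}(\bs\gamma)|^{\beta}$ and its integral over $(B^c)^{N-n}$), so the joint density factors and the coupling integral cancels from every conditional probability. You are also right, and slightly more explicit than the paper, about the disjointness bookkeeping and about reading $\mathbb P_n$ on the right-hand side as the $n$-particle canonical ensemble in $B$ (normalized by $Z(n,B,\beta)$), which matches the paper's concluding density formula.
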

\begin{proof}
On the event $\{N_B = n\}$, a generic set $\mathcal L_N^n(B)$ is a union of events which look like $A = S_N \cdot A_1 \times \cdots \times A_n \times (B^c)^{N-n}$. Similarly, a generic event in $\mathcal L_N^{N-n}(B^c)$ looks like $E = S_N \cdot B^n \times E_1 \times \cdots \times E_{N-n}$. The intersection of $A$ and $E$ is then given by $S_N \cdot A_1 \times \cdots \times A_n \times E_1 \times \cdots \times E_{N-n}$.

It follows that
\begin{align*}
\mathbb E[\bs 1_{\{N_B=n\}} \bs 1_A \bs 1_E] &= \frac{N!}{Z(N,\mf o,\beta)}  \int_{A_1} \cdots \int_{A_n} \int_{E_1} \cdots \int_{E_{N-n}} \\ & \hspace{3cm} |\Delta_n(\bs \alpha)|^{\beta} |\Delta_{N-n}(\bs \gamma)|^{\beta} \prod_{j=1}^n \prod_{\ell=1}^{N-n} |\alpha_j - \gamma_{\ell}|^{\beta} \, d\mu^n(\bs \alpha) d\mu^{N-n}(\bs \gamma).
\end{align*}
Suppose now $\alpha \in B$ and $\gamma \in B^c$. Since $B$ is assumed to be a ball, there is some radius on which $B = \{ \alpha' : |\alpha' - \alpha| \leq r\}$. Since $\gamma \not \in B$ we must also have $|\alpha - \gamma| > r$. It follows then that if $\alpha'$ is any other element of $B$,
\[
|\alpha' - \gamma| = |\alpha' - \alpha + \alpha - \gamma| \leq \max\{|\alpha - \alpha'|, |\alpha - \gamma|\} = |\alpha - \gamma|
\]
by the strong triangle inequality. But since $|\alpha - \alpha'| < |\alpha - \gamma|$, in fact the inequality becomes an equality, and we conclude $|\alpha' - \gamma| = |\alpha - \gamma|$. It follows that, as a function of $\bs \alpha$,
\[
\prod_{j=1}^n \prod_{\ell=1}^{N-n} |\alpha_j - \gamma_{\ell}|
\]
is constant on $B^n$, and we write $R_B(\bs \gamma)$ for the resulting function on $(B^c)^{N-n}$. Consequently,
\begin{align*}
\mathbb E[\bs 1_{\{N_B=n\}} \bs 1_A \bs 1_E] &= \frac{N!}{Z(N,\mf o,\beta)}  \int_{A_1} \cdots \int_{A_n} |\Delta_n(\bs \alpha)|^{\beta} d\mu^n(\bs \alpha) \int_{E_1} \cdots \int_{E_{N-n}}  R_B(\bs \gamma)^{\beta} |\Delta_{N-n}(\bs \gamma)|^{\beta} d\mu^{N-n}(\bs \gamma).
\end{align*}
Similar reasoning shows
\begin{align*}
\mathbb E[\bs 1_{\{N_B=n\}} \bs 1_A] &= \frac{N!}{(N-n)! Z(N,\mf o,\beta)}  \int_{A_1} \cdots \int_{A_n} |\Delta_n(\bs \alpha)|^{\beta} d\mu^n(\bs \alpha) \int_{(B^c)^{N-n}}  R_B(\bs \gamma)^{\beta} |\Delta_{N-n}(\bs \gamma)|^{\beta} d\mu^{N-n}(\bs \gamma), \\
\mathbb E[\bs 1_{\{N_B=n\}} \bs 1_E] &= \frac{N!}{n! Z(N,\mf o,\beta)}  \int_{B^n}  |\Delta_n(\bs \alpha)|^{\beta} d\mu^n(\bs \alpha) \int_{E_1} \cdots \int_{E_{N-n}}  R_B(\bs \gamma)^{\beta} |\Delta_{N-n}(\bs \gamma)|^{\beta} d\mu^{N-n}(\bs \gamma),
\end{align*}
and
\[
\mathbb E[\bs 1_{\{N_B=n\}}] = \frac{N!}{n!(N-n)! Z(N,\mf o, \beta)} \int_{B^n}  |\Delta_n(\bs \alpha)|^{\beta} d\mu^n(\bs \alpha) \int_{(B^c)^{N-n}}  R_B(\bs \gamma)^{\beta} |\Delta_{N-n}(\bs \gamma)|^{\beta} d\mu^{N-n}(\bs \gamma).
\]
It follows that
\[
\frac{\mathbb E[\bs 1_{\{N_B=n\}} \bs 1_A]}{\mathbb E[\bs 1_{\{N_B = n\}}]} = \frac{n!}{Z(n, B, \beta)} \int_{A_1} \cdots \int_{A_n} |\Delta_n(\bs \alpha)|^{\beta} \, d\mu^n(\bs \alpha).
\]
and
\[
\frac{\mathbb E[\bs 1_{\{N_B=n\}} \bs 1_A \bs 1_E]}{\mathbb E[\bs 1_{\{N_B = n\}}]} = \frac{\mathbb E[\bs 1_{\{N_B=n\}} \bs 1_E]}{\mathbb E[\bs 1_{\{N_B = n\}}]} \frac{\mathbb E[\bs 1_{\{N_B=n\}} \bs 1_A]}{\mathbb E[\bs 1_{\{N_B = n\}}]}
\]
That is, $\bs 1_A$ and $\bs 1_{E}$ are independent given the event $\{N_B = n\}$. Since $\mathcal C_N(B)$ consists of disjoint unions of the sets $\{N_B = n\}$ we conclude that $\mathcal L_N(B)$ and $\mathcal L_N({B^c})$ are conditionally independent given $\mathcal C_N(B)$.
\end{proof}
That is, the conditional distribution on $L_N(B)$ is given by
\[
d\mathbb P_N(\bs \alpha | N_B=n) = \frac{|\Delta_n(\bs \alpha)|^{\beta}}{Z(n,B,\beta)} d\mu^n(\bs \alpha).
\]

Note that if $C \subseteq B^c$, then $\mathcal L_N(C) \subseteq \mathcal L_N({B^c})$. It follows that if we condition on the event $\{N_B = n, N_C = m\}$ the distribution of the $n$ particles in $B$ is independent of the $m$ particles in $C$. In particular, if $C$ is also a ball, then the conditional density of $\bs \alpha \in B^n$ and $\bs \gamma \in C^m$ is proportional to $|\Delta_n(\bs \alpha)|^{\beta}|\Delta_m(\bs \gamma)|^{\beta}$. We may view this as giving a probability distribution on $B^n \times C^m$ with the $\sigma$-algebra $\mathcal S(B^n) \otimes \mathcal S(C^m)$.

This is the basis for the following corollary.
\begin{cor}
  Suppose $\mathbf B = (B_1, \ldots, B_M)$ is a vector of pairwise disjoint balls in $\mf o$, and let $\mathbf n = (n_1, \ldots, n_M)$ be a vector of non-negative integers with $n_1 + \cdots + n_M \leq N$. Write
  \[
\{N_{\mathbf B} = \mathbf n\} = \{N_{B_1} = n_1, \ldots, N_{B_M} = n_M\}.
  \]
  Then the conditional distribution given $\{N_{\mathbf B} = \mathbf n\}$ on  $\mathcal S(B_1^{n_1}) \otimes \cdots \otimes \mathcal S(B_M^{n_M})$ is given by
  \[
  d\mathbb P_{N}(\bs \alpha_1, \ldots, \bs \alpha_M | N_{\mathbf B} = \mathbf n) = \frac{|\Delta_{n_1}(\bs \alpha_1)|^{\beta}}{Z(n_1, B, \beta)} d\mu^{n_1}(\bs \alpha_1) \cdots \frac{|\Delta_{n_M}(\bs \alpha_M)|^{\beta}}{Z(n_M, B, \beta)} d\mu^{n_M}(\bs \alpha_M).
  \]
\end{cor}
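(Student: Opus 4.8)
\section*{Proof proposal for the final Corollary}

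The plan is to run the argument from the proof of Theorem~\ref{cond-dist-thm} with the single ball $B$ replaced by the disjoint family $B_1, \ldots, B_M$, importing the factorization of cross–interactions established in the proof of Theorem~\ref{complete cyl thm}. Write $B^{\ast} = B_1 \sqcup \cdots \sqcup B_M$ and let $L = N - (n_1 + \cdots + n_M)$ be the number of particles forced into $(B^{\ast})^c$. The $\sigma$-algebra $\mathcal S(B_1^{n_1}) \otimes \cdots \otimes \mathcal S(B_M^{n_M})$ is generated by the $\pi$-system of rectangles $A_1 \times \cdots \times A_M$ with each $A_m \subseteq B_m^{n_m}$ symmetrized under $S_{n_m}$, and such a rectangle corresponds, inside the event $\{N_{\mathbf B} = \mathbf n\}$, to $A = S_N \cdot A_1 \times \cdots \times A_M \times ((B^{\ast})^c)^L$. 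Since both the conditional law I want and the proposed product law are probability measures on this space, it suffices to check they agree on every such $A$.

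First I would compute $\mathbb E[\bs 1_{\{N_{\mathbf B} = \mathbf n\}} \bs 1_A]$ by unfolding the $S_N$-orbit: because the $B_m$ and $(B^{\ast})^c$ are pairwise disjoint and each $A_m$ is $S_{n_m}$-symmetrized, the stabilizer of the generating rectangle is exactly $S_{n_1} \times \cdots \times S_{n_M} \times S_L$, so $A$ is a disjoint union of $\binom{N}{n_1,\ldots,n_M,L}$ congruent tiles, each contributing equally to the integral of $|\Delta_N|^{\beta}$. Then I would factor $|\Delta_N(\bs\alpha)|^{\beta}$ into intra-ball pieces $\prod_m |\Delta_{n_m}(\bs\alpha_m)|^{\beta}$, an exterior piece $|\Delta_L(\bs\gamma)|^{\beta}$, ball-to-ball cross pieces, and ball-to-exterior cross pieces. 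The crucial input—the strong-triangle-inequality computation already performed in Theorem~\ref{complete cyl thm} (balls at distinct locations) and Theorem~\ref{cond-dist-thm} (a ball against its complement)—is that each ball-to-ball cross piece equals the constant $|\zeta_{\ell} - \zeta_k|^{\beta n_{\ell} n_k}$, and the ball-to-exterior cross pieces multiply to a function $R(\bs\gamma)$ depending on $\bs\gamma$ alone (constant in each $\bs\alpha_m$). Hence the integrand separates and Fubini produces a product of: the multinomial constant, the factor $\prod_{k<\ell}|\zeta_{\ell}-\zeta_k|^{\beta n_{\ell} n_k}$, one integral $\int_{A_m}|\Delta_{n_m}(\bs\alpha_m)|^{\beta}\,d\mu^{n_m}(\bs\alpha_m)$ for each $m$, and one exterior integral $\int_{((B^{\ast})^c)^L} R(\bs\gamma)^{\beta}|\Delta_L(\bs\gamma)|^{\beta}\,d\mu^L(\bs\gamma)$, all divided by $Z(N,\mf o,\beta)$.

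The last step is to specialize $A_m = B_m^{n_m}$ for every $m$ to obtain $\mathbb E[\bs 1_{\{N_{\mathbf B} = \mathbf n\}}]$: this has the identical multinomial constant, Boltzmann factor and exterior integral, but with each $A_m$-integral replaced by $\int_{B_m^{n_m}}|\Delta_{n_m}|^{\beta}\,d\mu^{n_m} = Z(n_m, B_m, \beta)$. Dividing $\mathbb E[\bs 1_{\{N_{\mathbf B} = \mathbf n\}} \bs 1_A]$ by $\mathbb E[\bs 1_{\{N_{\mathbf B} = \mathbf n\}}]$, the constant prefactor, $Z(N,\mf o,\beta)$, the factor $\prod_{k<\ell}|\zeta_{\ell}-\zeta_k|^{\beta n_{\ell} n_k}$ and the exterior integral all cancel, leaving $\prod_{m=1}^M Z(n_m,B_m,\beta)^{-1}\int_{A_m}|\Delta_{n_m}(\bs\alpha_m)|^{\beta}\,d\mu^{n_m}(\bs\alpha_m)$, which is exactly the asserted product measure evaluated on $A_1 \times \cdots \times A_M$. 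As these rectangles generate $\mathcal S(B_1^{n_1}) \otimes \cdots \otimes \mathcal S(B_M^{n_M})$ and both sides are measures, they coincide.

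Alternatively, one could obtain the corollary by induction on $M$, peeling off one ball at a time with Theorem~\ref{cond-dist-thm} and using that $B_2 \sqcup \cdots \sqcup B_M \subseteq B_1^c$ together with the remark after that theorem, once one notes that the residual interaction $R_{B_1}$ with the $B_1$-particles is constant on each of the remaining balls. Either way, the only real difficulty is notational—keeping the four kinds of Vandermonde factors and the symmetrization count consistent—and no estimate or new idea beyond Theorems~\ref{complete cyl thm} and~\ref{cond-dist-thm} is needed; I expect the ball-to-exterior term to deserve the most care, since one must verify that $R(\bs\gamma)$ is genuinely independent of every $\bs\alpha_m$.
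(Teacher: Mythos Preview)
Your proposal is correct. Your primary route---directly unfolding the $S_N$-orbit and factoring the Vandermonde as in Theorems~\ref{complete cyl thm} and~\ref{cond-dist-thm}---is a valid and self-contained argument, and your identification of the ball-to-exterior cross piece $R(\bs\gamma)$ as the only factor requiring care is accurate. The paper, however, takes the path you sketch as an alternative: it observes that for a second ball $C \subseteq B^c$ one has $\mathcal L_N(C) \subseteq \mathcal L_N(B^c)$, so conditioning on $\{N_B = n, N_C = m\}$ already forces the particles in $B$ and those in $C$ to be conditionally independent with the claimed marginal densities; the corollary is then stated as an immediate consequence (induction on $M$, peeling off one ball at a time). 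Your direct approach has the virtue of producing all the normalizing constants and cross-factors explicitly in one pass, which makes it clearer why everything cancels; the paper's inductive approach is shorter but leans more heavily on the reader to track how the conditional independence propagates through successive conditionings.
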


\section{The Grand Canonical Ensemble}

We now allow the system to exchange not only energy with the reservoir but also particles.  That is $N$ is no longer fixed.  In this {\em grand canonical ensemble}, there is a parameter $\chi$ called the {\em chemical potential} which represents the energy cost/reward per particle. In this setting, the energy of a system with $N$ particles in $\mf o$ is given by
\[
E(N, \bs \alpha) = \chi N - \sum_{n < m} \log|\alpha_n - \alpha_m|.
\]
The density of states is then given by
\[
\frac{1}{Z} e^{-\beta E(N, \bs \alpha)}
\]
where, the grand canonical partition function is given byte
\[
Z = Z(\chi, V, \beta) = \sum_{N=0}^{\infty} \frac{1}{N!} Z(N, V, \beta) e^{\chi \beta N}.
\]
Notice the introduction of the $N!$ term in the summand.  This term is now necessary since we are assuming that our particles are indistinguishable.\footnote{Even though the particles are indistinguishable in the canonical ensemble, there the $N!$ term is unnecessary since it appears in both the Boltzmann factor and the partition function and hence cancels in the probability density of states.}
Sometimes, instead of the chemical potential, the energy cost per particle is encoded in the {\em fugacity} parameter $t := e^{-\beta \chi}$ and we write
\[
Z(t, V, \beta) = \sum_{N=0}^{\infty} \frac{1}{N!} Z(N, V, \beta) t^N.
\]
for this version of the grand canonical potential function. We will view $t$ as independent of $\beta$, so that we may also view $Z(t, V, \beta)$ as the exponential generating function for the $Z(N, V, \beta)$. The physical partition function is then given by $Z(e^{-\beta \chi}, V, \beta)$, and when deriving physical quantities, the latter should be used.

Note that, if $\bs \alpha \in \mf o^N$, then the strong triangle inequality implies that $|\Delta(\bs \alpha)| \leq 1$.  Thus, for fixed $\beta \geq 0$,
\begin{align*}
Z(t, V, \beta) &= \sum_{N=0}^{\infty} \frac{t^N}{N!} \int_{V^N} |\Delta(\bs \alpha)|^{\beta} \, d\mu^N(\bs \alpha) \\
&\leq \sum_{N=0}^{\infty} \frac{t^N}{N!} = e^t,
\end{align*}
with equality when $\beta = 0$.  That is, $Z(t, V, \beta)$ is an entire function of $t$ for all $\beta \geq 0$.

\subsection{The Partition Function in the Grand Canonical Ensemble}

First we consider the zero temperature ($\beta = +\infty$) and infinite temperature ($\beta = 0$) regimes. In these cases we can compute the partition function exactly.
\begin{thm}
  \label{edge case}
At infinite temperature,
\[
Z(t, \mf o, 0) = e^t \qq{and} Z(t, \pi^{\ell} \mf o, 0) = e^{t/q^{\ell}}
\]
At zero temperature,
\[
Z(t, \mf o, +\infty) = \left(1 + \frac{t}{q}\right)^q \qq{and} Z(t, \pi^{\ell} \mf o, +\infty) = 1 + \frac{t}{q^{\ell}}
\]
\end{thm}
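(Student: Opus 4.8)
The plan is to treat the four formulas separately, with the two infinite-temperature cases being immediate consequences of the definitions and the two zero-temperature cases requiring a bit more care. For the infinite-temperature ($\beta = 0$) statements, I would simply invoke the computation already displayed in the excerpt: when $\beta = 0$ the integrand $|\Delta(\bs \alpha)|^{\beta}$ is identically $1$, so $Z(N, V, 0) = \mu(V)^N$. Plugging this into the definition of the grand canonical partition function gives $Z(t, V, 0) = \sum_N \mu(V)^N t^N / N! = e^{\mu(V) t}$, and since $\mu(\mf o) = 1$ and $\mu(\pi^{\ell} \mf o) = q^{-\ell}$ this yields exactly $e^t$ and $e^{t/q^{\ell}}$ respectively.

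For the zero-temperature ($\beta = +\infty$) case with $V = \pi^{\ell} \mf o$, the point is that $Z(n, \pi^{\ell} \mf o, \beta) \to 0$ as $\beta \to \infty$ for every $n \geq 2$: indeed $|\Delta_n(\bs \alpha)| \leq |\pi|^{r}$ for some $r \geq 1$ whenever two coordinates lie in a common coset of the maximal ideal inside $\pi^{\ell}\mf o$ (and generically $|\Delta_n| < 1$ on a ball of this shape unless $n \leq 1$, since a ball of radius $q^{-\ell}$ is a single coset at scale $\ell$ but splits into $q$ cosets one level down, and for $n \geq 2$ at least two of the $\binom{n}{2}$ factors are forced to be small on a positive-measure set... more precisely one uses Lemma~\ref{lemma:1} to write $Z(n, \pi^\ell \mf o, \beta) = q^{-\ell \beta \binom n2 - \ell n} Z(n, \mf o, \beta)$, and $Z(n,\mf o,\beta)$ stays bounded while the prefactor $q^{-\ell\beta\binom n2}\to 0$ for $n\ge 2$). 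Hence only the $N = 0$ and $N = 1$ terms survive the limit, giving $Z(t, \pi^{\ell}\mf o, +\infty) = 1 + q^{-\ell} t$.

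For $V = \mf o$ at zero temperature, I would use the additivity of energy over cosets (Theorem~\ref{recursionthm}): $Z(N, \mf o, \beta) = N! \sum_{\mathbf n} \prod_{r=0}^{q-1} Z(n^r, \mf m, \beta)/n^r!$. Taking $\beta \to \infty$, by the previous paragraph (with $\ell = 1$) each factor $Z(n^r, \mf m, \beta) \to 0$ unless $n^r \in \{0, 1\}$, and $Z(1, \mf m, \beta) = \mu(\mf m) = 1/q$. So in the limit only occupancy vectors $\mathbf n \in \{0,1\}^q$ contribute, and such a vector with $N$ ones contributes $N! \cdot \binom{q}{N} (1/q)^N$ worth of terms. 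Therefore $Z(N, \mf o, +\infty) = N! \binom{q}{N} q^{-N}$ for $0 \le N \le q$ and $0$ for $N > q$, whence
\[
Z(t, \mf o, +\infty) = \sum_{N=0}^{q} \binom{q}{N} (t/q)^N = \left(1 + \frac{t}{q}\right)^q
\]
by the binomial theorem.

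The main obstacle is making the interchange of limit ($\beta \to \infty$) and summation ($\sum_N$) rigorous in the $\mf o$ case: one must justify that $Z(t, \mf o, +\infty) = \lim_{\beta \to \infty} Z(t, \mf o, \beta)$ termwise. This is not hard — the bound $Z(N, \mf o, \beta) \le 1$ from \eqref{Zbound} gives $Z(N,\mf o,\beta) t^N/N! \le t^N/N!$ uniformly in $\beta$, which is summable, so dominated convergence applies — but it is the one step where care is genuinely needed, and I would also want to confirm that $Z(t,\mf o,+\infty)$ is meant as this limit rather than a direct integral (the integral $\int |\Delta|^{\infty}$ being literally $\int \mathbf 1_{\{|\Delta|=1\}}$, which one checks gives the same answer since $\mu^N\{|\Delta_N| = 1\}$ equals the number of injections $[N]\hookrightarrow$ cosets, divided appropriately — consistent with $N!\binom qN q^{-N}$).
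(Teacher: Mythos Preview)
Your proposal is correct. The $\beta=0$ case matches the paper's one-line argument exactly. For $\beta=+\infty$ you take a slightly different route: the paper directly interprets $Z(N,V,+\infty)$ as the $\mu^N$-measure of the zero-energy set $\{|\Delta_N|=1\}$, observes that in $\mf o$ this forces each particle into a distinct coset of $\mf m$ (hence $N\le q$, with volume $N!\binom{q}{N}q^{-N}$), and that in $\pi^\ell\mf o$ with $\ell\ge 1$ it forces $N\le 1$. You instead reach the same endpoint by taking limits: using Lemma~\ref{lemma:1} to kill $Z(n,\mf m,\beta)$ for $n\ge 2$, feeding that into the recursion of Theorem~\ref{recursionthm}, and invoking dominated convergence to pass the limit through the series. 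Your route is a bit more circuitous but has the virtue of making the meaning of $\beta=+\infty$ explicit (and you correctly note that the limit and the indicator-integral interpretations agree); the paper's route is shorter and more combinatorial but silently takes the indicator interpretation for granted. Either way the computation lands on $\sum_{N=0}^q \binom{q}{N}(t/q)^N=(1+t/q)^q$.
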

\begin{proof}
The $\beta = 0$ situation is obvious, since $Z(N, \mf o, 0) = 1$ and $Z(N, \pi^{\ell} \mf o, 0) = q^{-\ell N}$ for all $N$. When $\beta = +\infty$ only states with zero energy contribute to the partition function. Zero energy states can only occur when each of the particles is in a different coset of $\mf m$. Since there are $q$ cosets we must have $0 \leq N \leq q$. For such an $N$ there are ${q \choose N}$ ways of choosing which cosets get the particles, and $N!$ ways of distributing the particles amoungst these cosets. This collection of occupied cosets has volume $1/q^N$ and hence,
\[
Z(t, \mf o, +\infty) = \sum_{N=0}^{q} {q \choose N} \left(\frac{t}{q}\right)^N = \left(1 + \frac{t}{q}\right)^q.
\]
For the neighborhood $\pi^{\ell} \mf o$, the energy is only zero if there are 0 or 1 particles. Since $Z(0, V, +\infty) = 1$ and $Z(1, V, +\infty) = \mu(V)$ the result follows.
\end{proof}

From Theorems~\ref{recursionthm} and \ref{recursion2thm} we quickly find the following
\begin{thm}
  \label{gcz}
  \[
  Z(t,\mf o, \beta) = Z(t, \mf m, \beta)^q.
  \]
\end{thm}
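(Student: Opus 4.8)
The plan is to recognize Theorem~\ref{gcz} as precisely the exponential-generating-function form of the coset recursion in Theorem~\ref{recursionthm}. Recall the definitions
\[
Z(t, \mf o, \beta) = \sum_{N=0}^{\infty} \frac{t^N}{N!}\, Z(N, \mf o, \beta), \qquad Z(t, \mf m, \beta) = \sum_{N=0}^{\infty} \frac{t^N}{N!}\, Z(N, \mf m, \beta),
\]
and note first that both series converge absolutely for every $t \in \C$ and every fixed $\beta \geq 0$: since $|\Delta(\bs \alpha)| \leq 1$ we have $Z(N, V, \beta) = \int_{V^N} |\Delta(\bs \alpha)|^{\beta}\, d\mu^N(\bs \alpha) \leq \mu(V)^N \leq 1$, so each series is dominated termwise by the series for $e^{|t|}$. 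This absolute convergence is the only analytic input; the rest is bookkeeping with generating functions.

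First I would substitute Theorem~\ref{recursionthm}, $Z(N, \mf o, \beta) = N! \sum_{\mathbf n} \prod_{r=0}^{q-1} \frac{Z(n^r, \mf m, \beta)}{n^r!}$, with the sum over occupancy vectors $\mathbf n$ satisfying $n^0 + \cdots + n^{q-1} = N$ (the identity also holds for $N = 0$, both sides being $1$). The factor $N!$ cancels $1/N!$, and since $t^N = t^{n^0}\cdots t^{n^{q-1}}$ whenever $\mathbf n$ sums to $N$,
\[
Z(t, \mf o, \beta) = \sum_{N=0}^{\infty}\ \sum_{\mathbf n:\, n^0 + \cdots + n^{q-1} = N}\ \prod_{r=0}^{q-1} \frac{Z(n^r, \mf m, \beta)\, t^{n^r}}{n^r!}.
\]
Running over $N$ and then over occupancy vectors of sum $N$ is just an enumeration of $\Z_{\geq 0}^q$, so the iterated sum equals the unrestricted sum over $\mathbf n \in \Z_{\geq 0}^q$, which factors as a $q$-fold Cauchy product:
\[
Z(t, \mf o, \beta) = \sum_{\mathbf n \in \Z_{\geq 0}^q}\ \prod_{r=0}^{q-1} \frac{Z(n^r, \mf m, \beta)\, t^{n^r}}{n^r!} = \prod_{r=0}^{q-1}\left( \sum_{n=0}^{\infty} \frac{Z(n, \mf m, \beta)\, t^n}{n!} \right) = Z(t, \mf m, \beta)^q.
\]
Most transparently: the coefficient of $t^N$ in $Z(t, \mf m, \beta)^q$ is $\sum_{\mathbf n} \prod_r \frac{Z(n^r, \mf m, \beta)}{n^r!}$, which Theorem~\ref{recursionthm} identifies with $Z(N, \mf o, \beta)/N!$, the coefficient of $t^N$ in $Z(t, \mf o, \beta)$.

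I do not expect a genuine obstacle. The one point needing care is the rearrangement turning the iterated sum into the $q$-fold product; it is legitimate because the double series converges absolutely (its terms are dominated, via the multinomial theorem and $Z(n, \mf m, \beta) \leq 1$, by those of $\sum_N (q|t|)^N/N! = e^{q|t|}$), so Fubini applies; alternatively one proves the identity in $\C[[t]]$ coefficientwise and lifts it by entireness. As the paper promised, I would close with the remark deriving Theorem~\ref{quad-rec}: substituting the rescaling $Z(n, \mf m, \beta) = q^{-\beta{n \choose 2} - n} Z(n, \mf o, \beta)$ of Lemma~\ref{lemma:1} and taking the logarithmic derivative in $t$ of $Z(t, \mf o, \beta) = Z(t, \mf m, \beta)^q$, that is $\partial_t Z(t, \mf o, \beta)\, Z(t, \mf m, \beta) = q\, Z(t, \mf o, \beta)\, \partial_t Z(t, \mf m, \beta)$, and then extracting the coefficient of $t^{N-1}$, yields exactly the quadratic recursion of Theorem~\ref{quad-rec}.
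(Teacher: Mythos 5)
Your proof is correct and follows essentially the same route as the paper: substitute the coset recursion of Theorem~\ref{recursionthm} into the exponential generating function, trade the double sum (over $N$ and over occupancy vectors summing to $N$) for an unrestricted sum over $\Z_{\geq 0}^q$, and factor it as a $q$-fold product via Fubini. Your added justification of absolute convergence and the closing remark on deriving Theorem~\ref{quad-rec} are consistent with, and slightly more explicit than, what the paper does.
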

\begin{proof}
  \begin{align*}
    Z(t, \mf o, \beta) &= \sum_{N=0}^{\infty} Z(N, \mf o, \beta) \frac{t^N}{N!} \\
    &= \sum_{N=0}^{\infty} \sum_{\mathbf n} \bigg\{\prod_{r=0}^{q-1} Z(n_r, \mf m, \beta) \frac{t^{n_r}}{n_r!}
    \bigg\}.
  \end{align*}
The inner sum over $\mathbf n$ depends on $N$ since we require that $n_0 + \cdots + n_{q-1} = N$. However, since we are summing over all $N$ we may replace the double sum over $N$ and $\mathbf n$ with a sum over all $q$-tuples of non-negative integers. That is,
\begin{align*}
  Z(t, \mf o, \beta) &= \sum_{n_0} \cdots \sum_{n_{q-1}} \bigg\{\prod_{r=0}^{q-1} Z(n_r, \mf m, \beta) \frac{t^{n_r}}{n_r!}
  \bigg\}.
\end{align*}
But then, Fubini's theorem (factoring the product over the sum) implies that
\begin{align*}
  Z(t, \mf o, \beta) &= \sum_{n_0} Z(n_0, \mf m, \beta) \frac{t^{n_0}}{n_0!} \cdots \sum_{n_{q-1}} Z(n_{q-1}, \mf m, \beta) \frac{t^{n_{q-1}}}{n_{q-1}!} = Z(t, \mf m, \beta)^q. \qedhere
\end{align*}
\end{proof}
There is a physical explanation for this formula. The quantity
\[
\Phi(t, \mf o, \beta) = -\frac{1}{\beta} \log Z(t, \mf o, \beta)
\]
is called the grand canonical {\em potential} and plays the role of free energy in the canonical ensemble. Since it is an energy, we expect that if we have two non-interacting systems that we formally join into a single system, then their grand canonical potentials add. In our situation, the system with a variable number of particles in $\mf o$ can be expressed as the union of $q$ independent systems each with a variable number of particles in a coset of $\mf m$. But the interaction between particles within a coset is independent of the identity of the coset, so we may as well assume that we have $q$ independent copies of an ensemble with a variable number of particles in $\mf m$. The grand canonical potentials add, and thus we have
\[
\Phi(t, \mf o, \beta) = q \cdot \Phi(t, \mf m, \beta)
\]
which leads directly to Theorem~\ref{gcz}.

\begin{proof}[Proof of Theorem~\ref{quad-rec}]
This proof follows from an observation of Cauchy about the relationship between two power series, one of which is an integer power of the other.
\end{proof}

\subsection{Algebraic Formalism for Grand Canonical Partition Functions}

We have seen that many important probabilities and physical quantities of our system reduce to expressions involving $Z(t, \mf m, \beta)$. In particular, our main result $Z(t, \mf o, \beta) = Z(t, \mf m, \beta)^q$ demonstrates the primacy of the quantity $Z(t, \mf m, \beta)$. Since $\mf m$ and $\mf o$ are homeomorphic and isomorphic, we might be tempted to think that $Z(t, \mf m, \beta)$ is itself a $q$ power of another function ($Z(t, \pi^2 \mf o, \beta)$ springs to mind). This is not the case, at least not in the traditional way.

To see why, consider a system with all particles in $\mf m$. By counting the number of particles in each coset of $\pi^2 \mf o$ we get an occupation vector $\mathbf n$. Two particles in different cosets of $\pi^2 \mf o$ will be distance $1/q$ from each other, and each such pair will contribute $\log q$ to the energy of the system. That is, if $(\bs \alpha_0, \ldots, \bs \alpha_{q-1})$ is a factored state of the system so that $\bs \alpha_r$ is an $n_r$ vector whose coordinates are the location of particles in $\pi(r + \mf m)$, then
\begin{equation}
\label{menergy}
E(\bs \alpha_0, \ldots, \bs \alpha_{q-1}) = \sum_{0 \leq r < s < q} n_r n_s \log q + \sum_{r=0}^{q-1} E(\bs \alpha_r)
\end{equation}

In contrast, if two particles are in different cosets of $\mf m$ then they are distance 1 from each other, and there is no interaction energy, $\log q$, introduced by such pairs.

We introduce some algebraic formalism to deal with the terms like $q^{-\beta n_r n_s}$ which arise for the factors introduced into $Z(N, \mf m, \beta)$ from the interaction energy between pairs in different cosets of $\pi^2 \mf o$.

Let $R$ be a commutative ring and let $(R[[t]], +)$ be the group of formal power series in the indeterminant $t$ with the usual definition of $+$. Given two series $\sum a_n t^n$ and $\sum b_m t^m$ and non-zero constant $C \in R$, we define their $\star$-product to be
\[
\sum a_n t^n \star \sum b_m t^m = \sum_{m,n} C^{n m} a_n b_m t^{n+m}.
\]
\begin{lemma}
  \label{lemma:3}
  $\star$ is an associative operation on $R[[t]]$.
\end{lemma}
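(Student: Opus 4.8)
The plan is to verify associativity by direct expansion of both $(\sum a_n t^n \star \sum b_m t^m) \star \sum c_k t^k$ and $\sum a_n t^n \star (\sum b_m t^m \star \sum c_k t^k)$ and to check that the two triple sums agree coefficient-by-coefficient. The only subtlety is keeping track of how the constant $C$ enters at each application of $\star$, so I would first unwind the definition carefully: $\sum a_n t^n \star \sum b_m t^m = \sum_{n,m} C^{nm} a_n b_m t^{n+m}$, which I regard as the power series whose coefficient of $t^N$ is $\sum_{n+m=N} C^{nm} a_n b_m$.

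First I would compute the left-associated product. Writing $P = \sum a_n t^n \star \sum b_m t^m$ with coefficients $p_\ell = \sum_{n+m=\ell} C^{nm} a_n b_m$, applying $\star$ again against $\sum c_k t^k$ gives a series whose coefficient of $t^N$ is $\sum_{\ell + k = N} C^{\ell k} p_\ell c_k = \sum_{n+m+k=N} C^{(n+m)k} C^{nm} a_n b_m c_k$. Since $C^{(n+m)k}C^{nm} = C^{nm+nk+mk}$, the left side has coefficient $\sum_{n+m+k=N} C^{nm+nk+mk} a_n b_m c_k$. Next I would do the same for the right-associated product: the inner $\star$ produces coefficients $\sum_{m+k=\ell} C^{mk} b_m c_k$, and $\star$-ing $\sum a_n t^n$ against this yields coefficient of $t^N$ equal to $\sum_{n+\ell=N} C^{n\ell} a_n \big(\sum_{m+k=\ell} C^{mk} b_m c_k\big) = \sum_{n+m+k=N} C^{n(m+k)+mk} a_n b_m c_k$, and again $C^{n(m+k)+mk} = C^{nm+nk+mk}$. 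The two expressions are identical, so $\star$ is associative.

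I would then add a one-line remark that the exponent $nm+nk+mk$ is visibly symmetric in $n,m,k$, which is exactly the combinatorial shadow of the fact that in the application we have in mind (equation \eqref{menergy}), the total cross-coset interaction energy $\sum_{r<s} n_r n_s \log q$ depends only on the unordered collection of occupation numbers; this makes the associativity conceptually transparent and is worth recording even though the formal verification above suffices. There is essentially no obstacle here — the proof is a bookkeeping exercise — so the only thing to be careful about is not to drop the $C^{nm}$ term from the \emph{first} $\star$ when re-applying the operation, i.e. making sure both nested products are expanded with their full weights before comparing.
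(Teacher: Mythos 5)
Your proposal is correct and follows essentially the same route as the paper's proof: both expand the two nested products directly and observe that each reduces to a triple sum weighted by $C^{nm+nk+mk}$, whose symmetry in the three indices gives associativity. The only difference is cosmetic (you organize the computation by coefficients of $t^N$ while the paper keeps the full triple sum), so no further comment is needed.
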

\begin{proof}
Consider
\begin{align*}
\bigg(\sum a_n t^n \star \sum b_m t^m\bigg) \star \sum c_{\ell} t^{\ell} &= \bigg( \sum_{m,n} C^{n m} a_n b_m t^{n+m} \bigg) \star \sum c_{\ell} t^{\ell}.
\end{align*}
Simplify
\begin{align*}
\sum_{m,n} C^{n m} a_n b_m t^{n+m} = \sum_k \bigg[ \sum_{m+n=k} C^{n m} a_n b_m \bigg] t^k
\end{align*}
Hence,
\begin{align*}
\bigg(\sum a_n t^n \star \sum b_m t^m\bigg) \star \sum c_{\ell} t^{\ell} &=
\sum_{\ell, k} C^{k \ell} c_{\ell} \bigg[ \sum_{m+n=k} C^{n m} a_n b_m \bigg] t^{k + \ell} \\
&= \sum_{\ell, m, n} C^{(n m + n \ell + m \ell)} a_n b_m c_{\ell} t^{n +m + \ell} \\
&= \sum a_n t^n \star \bigg( \sum b_m t^m \star \sum c_{\ell} t^{\ell} \bigg).
\end{align*}
where the conclusion follows from symmetry in the penultimate equation.
\end{proof}

\begin{thm}
$R[[t]]$ is a commutative ring under the operations $+$ and $\star$ with unity $1 = 1 + 0 t + 0 t^2 + \cdots$. Moreover, if $R$ is an integral domain, then so too is $(R[[t]], + ,\star)$.
\end{thm}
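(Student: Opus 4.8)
The plan is to verify the ring axioms one at a time --- most of which fall out immediately from the explicit coefficient formula for $\star$ --- and then handle the integral-domain clause separately by a lowest-order-term argument.

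First, associativity of $\star$ is precisely Lemma~\ref{lemma:3}, so that axiom is already established. Commutativity of $\star$ is immediate: the coefficient of $t^k$ in $\sum a_n t^n \star \sum b_m t^m$ is $\sum_{n+m=k} C^{nm} a_n b_m$, and since $C^{nm} = C^{mn}$ and $R$ is commutative this expression is symmetric under interchanging the two series. The element $1 = 1 + 0t + 0t^2 + \cdots$ is a (two-sided, by commutativity) identity, since $1 \star \sum b_m t^m = \sum_m C^{0 \cdot m} b_m t^m = \sum_m b_m t^m$. For distributivity, observe that for fixed $f = \sum a_n t^n$ the assignment $g \mapsto f \star g$ is $R$-linear in the coefficient sequence of $g$ --- the formula $\sum_{m,n} C^{nm} a_n b_m t^{n+m}$ is linear in the $b_m$ --- so $f \star (g+h) = f \star g + f \star h$, and the other distributive law follows from commutativity. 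Since $(R[[t]], +)$ is already an abelian group, this shows $(R[[t]], +, \star)$ is a commutative ring with unity.

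For the second claim, assume $R$ is an integral domain and $C \neq 0$; then $1 \neq 0$ in $R[[t]]$, so it remains only to exclude zero divisors. Given nonzero $f = \sum a_n t^n$ and $g = \sum b_m t^m$, let $n_0$ and $m_0$ be their orders, i.e. the least indices with $a_{n_0} \neq 0$ and $b_{m_0} \neq 0$. The coefficient of $t^{n_0+m_0}$ in $f \star g$ is $\sum_{n+m=n_0+m_0} C^{nm} a_n b_m$; but $a_n b_m = 0$ unless $n \geq n_0$ and $m \geq m_0$, which together with $n+m = n_0+m_0$ forces $(n,m) = (n_0,m_0)$. Hence that coefficient equals $C^{n_0 m_0} a_{n_0} b_{m_0}$, a product of nonzero elements of an integral domain, hence nonzero. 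Therefore $f \star g \neq 0$.

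The only step needing genuine care is this last one, and even there the sole subtlety is recognizing that $C \neq 0$ in an integral domain forces $C^{n_0 m_0} \neq 0$, so the leading coefficient of the $\star$-product cannot collapse; everything else is bookkeeping against the definition of $\star$. As an aside, when $C$ happens to be a unit in $R$ one can bypass the whole argument by checking that $\sum a_n t^n \mapsto \sum C^{-\binom{n}{2}} a_n t^n$ is a ring isomorphism from $(R[[t]], +, \star)$ onto the ordinary power series ring $(R[[t]], +, \cdot)$, using $\binom{n+m}{2} = \binom{n}{2} + \binom{m}{2} + nm$; but since $C$ need not be invertible in general, the direct verification above is the safer route and is the one I would write out.
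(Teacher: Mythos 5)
Your proof is correct and follows essentially the same route as the paper: associativity via Lemma~\ref{lemma:3}, commutativity and the identity read off from the coefficient formula, distributivity by linearity, and the absence of zero divisors via the coefficient of $t^{n_0+m_0}$. If anything, your version is slightly more careful than the paper's, since you take $n_0$ and $m_0$ to be the \emph{minimal} indices with nonzero coefficients (which is needed for the coefficient of $t^{n_0+m_0}$ to reduce to the single term $C^{n_0 m_0}a_{n_0}b_{m_0}$), and your closing aside is exactly the content of the paper's later Lemma~\ref{convolution-lemma}.
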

\begin{proof}
Commutativity and that $1$ is the multiplicative identity are obvious from the definition of  $\star$. To verify the distributive property of $\star$ over $+$, consider
\begin{align*}
\sum a_n t^n \star \left(\sum b_m t^m + \sum b_m' t^m\right) &= \sum a_n t^n \star \sum (b_m + b_m') t^m \\
&= \sum_{n,m} C^{nm} a_n (b_m + b_m') t^{n+m} \\
&= \sum_{n,m} C^{nm} a_n b_m t^{n+m}+\sum_{n,m} C^{nm} a_n b_m' t^{n+m} \\
&= \sum a_n t^n \star \sum b_m t^m + \sum a_n t^n \star \sum b'_m t^m.
\end{align*}
Next suppose $R$ is an integral domain and $\sum a_n t^n$ and $\sum b_m t^m$ are nonzero. Then there exist $n_0$ and $m_0$ so that $a_{n_0}$ and $b_{m_0}$ are non-zero. It follows that the coefficient of $t^{n_0+m_0}$ in $\sum a_n t^n \star \sum b_m t^m$ is $a_{n_0} b_{m_0} C^{n_0 m_0} \neq 0$. Hence, in this situation $(R[[t]], +, \star)$ has no zero divisors.
\end{proof}

\begin{lemma}
  \label{lemma:2}
In this ring, the $J$th power of a power series satisfies
\[
\bigg(\sum a_n t^n\bigg)^{\star J} = \sum_{n_1, \ldots, n_J} \bigg\{\prod_{j < k}^J C^{n_j n_k} \bigg\} \prod_{j=1}^J a_{n_j} t^{n_j}.
\]
\end{lemma}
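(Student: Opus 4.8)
The plan is to prove the identity by induction on $J$, using only the definition of $\star$ together with the associativity established in Lemma~\ref{lemma:3}, which guarantees that $\big(\sum a_n t^n\big)^{\star J}$ is unambiguous without specifying a parenthesization. For the base case $J=1$ the right-hand side is $\sum_{n_1} a_{n_1} t^{n_1}$ with an empty product of $C$-factors, which is exactly $\sum a_n t^n$. For the inductive step I would assume the identity for $J-1$, so that
\[
\bigg(\sum a_n t^n\bigg)^{\star(J-1)} = \sum_m b_m t^m, \qquad b_m := \sum_{n_1 + \cdots + n_{J-1} = m} \bigg\{\prod_{j<k}^{J-1} C^{n_j n_k}\bigg\} \prod_{j=1}^{J-1} a_{n_j},
\]
and then apply the definition of $\star$ to $\big(\sum_m b_m t^m\big) \star \big(\sum_n a_n t^n\big) = \sum_{m,n} C^{mn} b_m a_n t^{m+n}$, substituting the expression for $b_m$.

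The one step that needs care is the regrouping of the $C$-powers. Writing $n_J := n$ and using $m = n_1 + \cdots + n_{J-1}$, the prefactor splits as $C^{mn} = \prod_{j=1}^{J-1} C^{n_j n_J}$, and multiplying this by the inherited factor $\prod_{j<k}^{J-1} C^{n_j n_k}$ yields $\prod_{j<k}^{J} C^{n_j n_k}$, because the pairs $\{(j,J) : 1\le j < J\}$ together with the pairs $\{(j,k) : 1 \le j < k < J\}$ exhaust all pairs $(j,k)$ with $1 \le j < k \le J$. Collecting the factors $a_{n_j}$ and the monomial $t^{n_1 + \cdots + n_J}$ then produces precisely the claimed formula for $\big(\sum a_n t^n\big)^{\star J}$.

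I expect this bookkeeping with the cross-terms to be the only real content of the argument; the remaining manipulations are a direct unwinding of the definitions of $+$ and $\star$. An alternative would be to iterate the definition of $\star$ directly $J-1$ times rather than inducting, but the inductive formulation keeps the index manipulation cleanest and makes the role of associativity transparent.
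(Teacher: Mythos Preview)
Your proposal is correct and follows essentially the same argument as the paper: induction on $J$, collecting the $(J-1)$-fold $\star$-power into coefficients $b_m$ indexed by total degree, applying the definition of $\star$, and then using $C^{m n_J} = \prod_{j=1}^{J-1} C^{n_j n_J}$ to merge the cross-term with $\prod_{j<k}^{J-1} C^{n_j n_k}$ into $\prod_{j<k}^{J} C^{n_j n_k}$. The paper's proof is slightly terser but the logic is identical.
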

\begin{proof}
We induct on $J$. The base case is trivial. Then,
\begin{align*}
\bigg(\sum a_n t^n\bigg)^{\star J} &= \bigg(\sum a_n t^n\bigg)^{\star (J-1)} \star \sum a_n t^n \\
&= \bigg(\sum_{n_1, \ldots, n_{J-1}} \bigg\{\prod_{j < k}^{J-1} C^{n_j n_k} \bigg\} \prod_{j=1}^{J-1} a_{n_j} t^{n_j} \bigg) \star \sum a_n t^n \\
&= \bigg(\sum_m \sum_{n_1 + \cdots + n_{J-1}=m} t^m \bigg\{\prod_{j < k}^{J-1} C^{n_j n_k} \bigg\} \prod_{j=1}^{J-1} a_{n_j} \bigg) \star \sum a_n t^n \\
&= \sum_m \sum_n C^{nm}\bigg(\sum_{n_1 + \cdots + n_{J-1}=m} \bigg\{\prod_{j < k}^{J-1} C^{n_j n_k} \bigg\} \prod_{j=1}^{J-1} a_{n_j}\bigg) a_n t^{n+m} \\
&= \sum_m \sum_n \bigg(\sum_{n_1 + \cdots + n_{J-1}=m} \bigg\{C^{n(n_1 + \cdots + n_{J-1})}\prod_{j < k}^{J-1} C^{n_j n_k} \bigg\} \prod_{j=1}^{J-1} a_{n_j}\bigg) a_n t^{n+m}.
\end{align*}
Renaming the index variable $n$ to $n_J$,
\begin{align*}
\bigg(\sum a_n t^n\bigg)^{\star J} &= \sum_m \bigg(\sum_{n_J} \sum_{n_1 + \cdots + n_{J-1}=m} \bigg\{ \prod_{j < k}^{J} C^{n_j n_k} \bigg\} \prod_{j=1}^{J} a_{n_j} t^{n_j} \bigg).
\end{align*}
The sum over $m$ is superfluous if we remove the restriction on the $n_j$, and we arrive at the formulation in the statement of the lemma.
\end{proof}

\subsection{$\star$ as a convolution operator}

We define a transform on $(R[[t]], +, \star)$ by
\[
\overline{\sum a_n t^n} := \sum C^{n \choose 2} a_n t^n,
\]
where $C^{0 \choose 2} = C^{1 \choose 2} := 1$.
If $C$ is invertible in $R$, then this transform has an inverse given by
\[
\underline{\sum a_n t^n} := \sum C^{-{n \choose 2}} a_n t^n.
\]
Clearly then, using this notation,
\[
\underline{\overline{\sum a_n t^n}} = \sum a_n t^n.
\]
\begin{lemma}
  \label{convolution-lemma}
\[
\overline{\sum a_n t^n} \star \overline{\sum b_m t^n} = \overline{\sum a_n t^n \sum b_m t^m},
\]
where the multiplication on the right hand side is the usual multiplication of power series.
\end{lemma}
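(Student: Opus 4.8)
The plan is to verify the identity by expanding both sides directly from the definitions and comparing coefficients of $t^k$. Write $A(t) = \sum a_n t^n$ and $B(t) = \sum b_m t^m$. By the definition of the overline transform, $\overline{A} = \sum_n C^{n \choose 2} a_n t^n$ and $\overline{B} = \sum_m C^{m \choose 2} b_m t^m$, and then applying the definition of $\star$ gives
\[
\overline{A} \star \overline{B} = \sum_{m,n} C^{nm}\, C^{n \choose 2}\, C^{m \choose 2}\, a_n b_m\, t^{n+m} = \sum_{m,n} C^{\,nm + {n \choose 2} + {m \choose 2}}\, a_n b_m\, t^{n+m}.
\]

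The crux is the elementary identity ${n \choose 2} + {m \choose 2} + nm = {n+m \choose 2}$, which one checks by expanding both sides as quadratic polynomials in $n,m$ (equivalently, by counting the $2$-subsets of an $(n+m)$-element set partitioned into blocks of sizes $n$ and $m$: those within the first block, those within the second, and those straddling the two). Substituting this into the exponent yields
\[
\overline{A} \star \overline{B} = \sum_{m,n} C^{n+m \choose 2}\, a_n b_m\, t^{n+m}.
\]
On the other side, ordinary multiplication of power series is the Cauchy product, $A(t) B(t) = \sum_k \big(\sum_{n+m=k} a_n b_m\big) t^k$, so applying the overline transform term by term gives $\overline{AB} = \sum_k C^{k \choose 2}\big(\sum_{n+m=k} a_n b_m\big) t^k$, which is precisely the previous double sum after collecting terms with $n+m=k$. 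Hence $\overline{A} \star \overline{B} = \overline{AB}$.

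I do not expect a genuine obstacle here: the content is entirely the $C$-exponent bookkeeping together with the binomial identity above, and everything else is routine collection of terms. Conceptually the lemma says that the overline map is a ring isomorphism from $(R[[t]], +, \cdot)$ onto $(R[[t]], +, \star)$ (with inverse the underline map, when $C$ is invertible), and this is the reason to expect the identity; but the direct computation is the shortest route and is what I would write.
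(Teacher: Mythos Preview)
Your proof is correct and follows essentially the same route as the paper: expand both sides from the definitions of the overline transform and $\star$, invoke the identity ${n \choose 2} + {m \choose 2} + nm = {n+m \choose 2}$, and compare coefficients. The only differences are cosmetic---you add a combinatorial justification of the binomial identity and a remark about the ring-isomorphism interpretation, neither of which appears in the paper's proof.
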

\begin{proof}
\begin{align*}
\overline{\sum a_n t^n} \star \overline{\sum b_m t^n} &= \sum C^{n \choose 2} a_n t^n \star \sum C^{m \choose 2} b_m t^m \\
&= \sum_{n,m} C^{n \choose 2} C^{nm} C^{m \choose 2} a_n b_m t^{n + m} \\
&= \sum_N \left(\sum_{n + m = N} C^{n \choose 2} C^{nm} C^{m \choose 2} a_n b_m \right) t^N.
\end{align*}
Finally, since
\[
{n \choose 2} + {m \choose 2} + nm = {n + m \choose 2},
\]
we have
\begin{align*}
\overline{\sum a_n t^n} \star \overline{\sum b_m t^n} &= \sum_N \left(\sum_{n + m = N} a_n b_m \right) C^{N \choose 2} t^N = \overline{\sum a_n t^n \sum b_m t^m}. \qedhere
\end{align*}
\end{proof}

It will be useful to iterate the transform and its inverse. We will denote the $\ell$th iteration of the transform and its inverse (when it exists) by, respectively
\[
\overline{\sum a_n t^n}^{\ell} \qq{and} \underline{\sum a_n t^n}_{\ell}
\]
Note that these are just the transforms formed by replacing $C$ in the original definition with $C^{\ell}$. As such, there is a product $\star^{\ell}$, formed by replacing $C$ in the definition of $\star$ with $C^{\ell}$ so that, for instance
\[
\overline{\sum a_n t^n}^{\ell} \star^{\ell} \overline{\sum b_m t^n}^{\ell} = \overline{\sum a_n t^n \sum b_m t^m}^{\ell}.
\]
To mirror our notation with transforms and inverses, it makes sense to set $\star_{\ell}$ for $\star^{(-\ell)}$. That is $\star_{\ell}$ is the convolution operator formed by replacing $C$ with $C^{-\ell}$. Of course, this only makes sense if $C$ is invertible in $R$. In this situation,
\[
\underline{\sum a_n t^n}_{\ell} \star_{\ell} \underline{\sum b_m t^n}_{\ell} = \underline{\sum a_n t^n \sum b_m t^m}_{\ell}.
\]
Note when $\ell=0$ the transform (and it's inverse) are the identity transform and $\star_0 = \star^0$ is simply the usual multiplication of power series.

\subsection{Partition functions as $\star q$-powers}
In this section, and throughout whenever we are discussing grand canonical partition function, we specify that $C = q^{-\beta}$. The utility of the algebraic constructions we have introduced becomes apparent in the next two results.

\begin{thm}
  \label{gc-thm}
For any Borel set $V$ and $\ell \geq 0$,
\[
Z(t, \pi^{\ell}V, \beta) = \overline{Z}^{\ell}(t/q^{\ell}, V, \beta).
\]
In particular, $Z(t, \mf m, \beta) = \overline{Z}(t/q, \mf o, \beta)$.
\end{thm}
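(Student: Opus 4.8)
The plan is to prove the identity $Z(t, \pi^{\ell}V, \beta) = \overline{Z}^{\ell}(t/q^{\ell}, V, \beta)$ directly from the definition of the grand canonical partition function as an exponential generating function, by computing the coefficient of $t^N$ on both sides. Recall from Lemma~\ref{lemma:1} (the special case $B = \pi^{\ell}\mf o$) and its obvious generalization that rescaling the domain by $\pi^{\ell}$ multiplies the canonical partition function: $Z(N, \pi^{\ell}V, \beta) = q^{-\ell\beta\binom{N}{2} - \ell N} Z(N, V, \beta)$. This is the arithmetic engine of the proof; everything else is bookkeeping to match it against the transform $\overline{\,\cdot\,}^{\ell}$, which (with $C = q^{-\beta}$ and its $\ell$th iterate $C^{\ell} = q^{-\ell\beta}$) sends $\sum a_n t^n \mapsto \sum q^{-\ell\beta\binom{n}{2}} a_n t^n$.

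First I would write out the left-hand side:
\[
Z(t, \pi^{\ell}V, \beta) = \sum_{N=0}^{\infty} \frac{1}{N!} Z(N, \pi^{\ell}V, \beta)\, t^N = \sum_{N=0}^{\infty} \frac{1}{N!} q^{-\ell\beta\binom{N}{2}} q^{-\ell N} Z(N, V, \beta)\, t^N,
\]
using the rescaling lemma. Next I would expand the right-hand side: by definition $Z(t, V, \beta) = \sum_N \frac{1}{N!} Z(N, V, \beta)\, t^N$, so substituting $t \mapsto t/q^{\ell}$ gives $\sum_N \frac{1}{N!} Z(N, V, \beta)\, q^{-\ell N} t^N$, and then applying the $\ell$th iterated transform $\overline{\,\cdot\,}^{\ell}$ multiplies the coefficient of $t^N$ by $(q^{-\beta})^{\ell\binom{N}{2}} = q^{-\ell\beta\binom{N}{2}}$ (with the convention that the $N=0,1$ factors are $1$, which is consistent since $\binom{0}{2} = \binom{1}{2} = 0$). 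The two expressions agree term by term, which proves the general identity. The statement $Z(t, \mf m, \beta) = \overline{Z}(t/q, \mf o, \beta)$ is then just the case $\ell = 1$, $V = \mf o$, since $\pi\mf o = \mf m$.

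I do not expect a genuine obstacle here; the proof is essentially a matching of two explicit formulas. The one point requiring minor care is the order of operations in the right-hand side expression $\overline{Z}^{\ell}(t/q^{\ell}, V, \beta)$: one must be clear that the substitution $t \mapsto t/q^{\ell}$ is performed and \emph{then} the transform is applied to the resulting power series in $t$ (equivalently, that the transform acts on the generating-function variable, not on an internal variable), so that the $q^{-\ell N}$ factor and the $q^{-\ell\beta\binom{N}{2}}$ factor both land on the coefficient of $t^N$. Since the transform $\overline{\,\cdot\,}^{\ell}$ is $R$-linear and acts diagonally on the monomial basis $\{t^n\}$, and scaling $t \mapsto t/q^{\ell}$ also acts diagonally, the two operations commute up to the bookkeeping already accounted for, so there is no ambiguity once the convention is fixed. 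A one-line remark could also record the physical interpretation already implicit in the discussion after Theorem~\ref{gcz}: contracting $\mf o$ onto $\pi^{\ell}\mf o$ both shrinks volume (the $q^{-\ell N}$, absorbed into $t/q^{\ell}$) and uniformly scales all pairwise distances by $q^{-\ell}$ (the energy shift $\ell\beta\binom{N}{2}\log q$, realized by $\overline{\,\cdot\,}^{\ell}$).
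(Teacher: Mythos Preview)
Your proof is correct and is essentially the argument the paper has in mind: the paper does not spell out a proof of Theorem~\ref{gc-thm} but, for the closely related lemma that follows, remarks that it ``follows immediately from Lemma~\ref{push down lemma} and the definition of $\overline{\sum a_n t^n}^{\ell}$,'' which is exactly the coefficient-matching you carry out. Your parenthetical about the order of operations and the physical interpretation are fine additions.
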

This puts Theorem~\ref{gcz} into a new light, as we can now express the $q$-power relationship between $Z(t, \mf m, \beta)$ and $Z(t, \mf o, \beta)$ as a functional equation that $Z(t, \mf o, \beta)$ must satisfy.
\begin{cor}$Z(t, \mf o, \beta) = \overline{Z}(t/q, \mf o, \beta)^q$. That is $Z(t, \mf o, \beta)$ satisfies $\alpha(t) - \left(\overline{\alpha}(t/q)\right)^q = 0$ in $\R[[t]]$.
\end{cor}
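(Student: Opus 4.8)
The statement to prove is the corollary: $Z(t, \mf o, \beta) = \overline{Z}(t/q, \mf o, \beta)^q$, equivalently that $\alpha(t) := Z(t,\mf o,\beta)$ satisfies $\alpha(t) = \big(\overline{\alpha}(t/q)\big)^{q}$ in $\R[[t]]$. The plan is simply to combine Theorem~\ref{gcz} with Theorem~\ref{gc-thm}. By Theorem~\ref{gcz} we have $Z(t,\mf o,\beta) = Z(t,\mf m,\beta)^{q}$, where the power on the right is the \emph{ordinary} product of power series (this is how Theorem~\ref{gcz} was proved, via Fubini factoring the $q$-fold sum). By the ``In particular'' clause of Theorem~\ref{gc-thm}, $Z(t,\mf m,\beta) = \overline{Z}(t/q,\mf o,\beta)$. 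Substituting the second identity into the first gives $Z(t,\mf o,\beta) = \big(\overline{Z}(t/q,\mf o,\beta)\big)^{q}$ with the ordinary product, which is exactly the claimed functional equation. Writing $\alpha = Z(\cdot,\mf o,\beta)$, this reads $\alpha(t) - \big(\overline{\alpha}(t/q)\big)^{q} = 0$ in $\R[[t]]$.

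One should be slightly careful about \emph{which} multiplication is meant in each place. In Theorem~\ref{gcz} the exponent ``$q$'' on $Z(t,\mf m,\beta)$ denotes the ordinary power-series product: this is visible from its proof, where the $q$-fold sum $\sum_{n_0}\cdots\sum_{n_{q-1}}$ factors with no cross-terms because particles in distinct cosets of $\mf m$ contribute no energy. By contrast, the $\star$-product (with $C = q^{-\beta}$) is the one that records the $q^{-\beta n_r n_s}$ cross-terms arising from cosets of $\pi^{2}\mf o$ inside $\mf m$; that is the content of Theorem~\ref{gc-thm} and Lemma~\ref{convolution-lemma}, which together say $\overline{Z}(t/q,\mf o,\beta)$, under $\star$, mirrors ordinary multiplication under the bar transform. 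For the present corollary we only need the plain-multiplication statement of Theorem~\ref{gcz} together with the identification $Z(t,\mf m,\beta)=\overline{Z}(t/q,\mf o,\beta)$ from Theorem~\ref{gc-thm}, so no $\star$-bookkeeping is actually required here — the corollary is a one-line substitution.

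The only ``obstacle'' is therefore essentially bookkeeping: confirming that the substitution is legitimate as an identity of formal power series in $\R[[t]]$ (it is, since all three of $Z(t,\mf o,\beta)$, $Z(t,\mf m,\beta)$, and $\overline{Z}(t/q,\mf o,\beta)$ are honest elements of $\R[[t]]$, and power-series composition with $t\mapsto t/q$ and the coefficientwise bar transform are well-defined ring maps on $\R[[t]]$), and noting that the resulting equation is precisely $\alpha(t) - \overline{\alpha}(t/q)^{q} = 0$. If a slightly more self-contained argument is wanted, one can instead bypass Theorem~\ref{gcz} and argue directly from Theorem~\ref{gc-thm}: apply Theorem~\ref{gc-thm} with $\ell=1$ and $V=\mf o$ to get $Z(t,\mf m,\beta)=\overline{Z}(t/q,\mf o,\beta)$, then apply it again (or use Lemma~\ref{convolution-lemma} with the fact that $\mf o$ is a disjoint union of $q$ copies of $\mf m$) to express $Z(t,\mf o,\beta)$ in terms of $Z(t,\mf m,\beta)$; either route lands on the same functional equation.
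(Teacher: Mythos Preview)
Your proposal is correct and matches the paper's own (implicit) argument: the corollary is stated immediately after Theorem~\ref{gc-thm} with the remark that it ``puts Theorem~\ref{gcz} into a new light,'' and the intended proof is precisely the one-line substitution of $Z(t,\mf m,\beta) = \overline{Z}(t/q,\mf o,\beta)$ from Theorem~\ref{gc-thm} into $Z(t,\mf o,\beta) = Z(t,\mf m,\beta)^q$ from Theorem~\ref{gcz}. Your extra care in distinguishing the ordinary product from the $\star$-product is well-placed but, as you note, not strictly needed here.
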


\begin{thm}
  \label{thm4}
  \[
Z(t, \mf m, \beta) = Z(t, \pi^2 \mf o, \beta)^{\star^1 q}.
  \]
More generally, for any integer $\ell$,
\[
Z(t, \pi^{\ell} \mf o, \beta) = Z(t, \pi^{\ell+1} \mf o, \beta)^{\star^{\ell} q}.
\]
\end{thm}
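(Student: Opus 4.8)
The plan is to run the proofs of Theorem~\ref{recursionthm} and Theorem~\ref{gcz} one level deeper, using the operation $\star^{\ell}$ (i.e.\ $\star$ with $C=q^{-\beta}$ replaced by $C^{\ell}=q^{-\beta\ell}$) precisely to absorb the inter-coset interaction energies recorded in \eqref{menergy}.

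First I would fix an integer $\ell$ and establish the level-$\ell$ analogue of \eqref{menergy}: if $\bs\alpha\in(\pi^{\ell}\mf o)^N$ and we split its coordinates according to the $q$ cosets of $\pi^{\ell+1}\mf o$ inside $\pi^{\ell}\mf o$, obtaining $(\bs\alpha_0,\ldots,\bs\alpha_{q-1})$ with occupancy vector $\mathbf n=(n_0,\ldots,n_{q-1})$ summing to $N$, then
\[
|\Delta(\bs\alpha)| = q^{-\ell\sum_{0\le r<s<q} n_r n_s}\,\prod_{r=0}^{q-1}|\Delta(\bs\alpha_r)|.
\]
This is exactly the strong-triangle-inequality computation already carried out in the proof of Theorem~\ref{complete cyl thm}: if $\alpha$ lies in $\zeta+\pi^{\ell+1}\mf o$ and $\alpha'$ in a different coset $\zeta'+\pi^{\ell+1}\mf o$, then $|\zeta-\zeta'|=q^{-\ell}$ strictly dominates $|\alpha-\zeta|$ and $|\alpha'-\zeta'|$, so the strong triangle inequality is an equality and $|\alpha-\alpha'|=q^{-\ell}$; accumulating this over the $n_rn_s$ cross pairs in each pair of cosets gives the factorization.

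Next I would run the overcounting argument of Theorem~\ref{recursionthm} verbatim (each admissible state with occupancy vector $\mathbf n$ corresponds to $n_0!\cdots n_{q-1}!$ factored states and each factored state to $N!$ states), together with translation invariance of $\mu$ and of $|\Delta|$ to replace each integral over a coset $\zeta+\pi^{\ell+1}\mf o$ by an integral over $\pi^{\ell+1}\mf o$, obtaining
\[
Z(N,\pi^{\ell}\mf o,\beta) = N!\sum_{\mathbf n}\bigg\{\prod_{0\le r<s<q} q^{-\beta\ell n_r n_s}\bigg\}\prod_{r=0}^{q-1}\frac{Z(n_r,\pi^{\ell+1}\mf o,\beta)}{n_r!},
\]
the sum over occupancy vectors summing to $N$. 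Multiplying by $t^N/N!$ and summing over $N\ge0$, the double sum collapses (as in the proof of Theorem~\ref{gcz}) to an unrestricted sum over $q$-tuples, giving
\[
Z(t,\pi^{\ell}\mf o,\beta) = \sum_{n_0,\ldots,n_{q-1}}\bigg\{\prod_{0\le r<s<q} q^{-\beta\ell n_r n_s}\bigg\}\prod_{r=0}^{q-1}\frac{Z(n_r,\pi^{\ell+1}\mf o,\beta)}{n_r!}\,t^{n_r}.
\]
Finally I would apply Lemma~\ref{lemma:2} with $C$ replaced by $C^{\ell}=q^{-\beta\ell}$ to the series $\sum_n\frac{Z(n,\pi^{\ell+1}\mf o,\beta)}{n!}t^n=Z(t,\pi^{\ell+1}\mf o,\beta)$; since $\prod_{j<k}^{q}(C^{\ell})^{n_j n_k}=q^{-\beta\ell\sum_{j<k}n_j n_k}$, the $\star^{\ell}q$-power of this series is precisely the displayed series, so $Z(t,\pi^{\ell}\mf o,\beta)=Z(t,\pi^{\ell+1}\mf o,\beta)^{\star^{\ell}q}$; the case $\ell=1$ is the first displayed identity. (For $\ell<0$ the same argument works, reading $\star^{\ell}$ as $\star_{-\ell}$, which is legitimate since $q^{-\beta}$ is invertible; and for $\ell=0$ it degenerates to Theorem~\ref{gcz}.)

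I expect the only real content to be the level-$\ell$ additivity identity of the first step, and even that is the short strong-triangle-inequality argument already appearing in the proof of Theorem~\ref{complete cyl thm}; the remaining work is the bookkeeping needed to match the combinatorial factors against the definition of $\star^{\ell}$. An alternative, essentially computation-free route is also available: since $\pi^{\ell+1}\mf o=\pi^{\ell}\mf m$, Theorem~\ref{gc-thm} gives $Z(t,\pi^{\ell+1}\mf o,\beta)=\overline{Z}^{\ell}(t/q^{\ell},\mf m,\beta)$ and $Z(t,\pi^{\ell}\mf o,\beta)=\overline{Z}^{\ell}(t/q^{\ell},\mf o,\beta)$, and raising the former to the $\star^{\ell}q$-power and using the iterated form of Lemma~\ref{convolution-lemma} with $Z(t,\mf o,\beta)=Z(t,\mf m,\beta)^{q}$ from Theorem~\ref{gcz} (plus the trivial fact that $\star^{\ell}$ commutes with $t\mapsto t/q^{\ell}$) yields the claim.
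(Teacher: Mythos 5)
Your main argument is exactly the paper's proof: generalize \eqref{menergy} by replacing $\log q$ with $\ell\log q$ (justified by the strong triangle inequality), run the coset recursion of Theorem~\ref{recursionthm} at level $\ell$ to get $Z(N,\pi^{\ell}\mf o,\beta)$ in terms of the $Z(n_r,\pi^{\ell+1}\mf o,\beta)$, collapse the sum over $N$ as in Theorem~\ref{gcz}, and invoke Lemma~\ref{lemma:2} with $C$ replaced by $C^{\ell}$ --- you have simply written out in full the ``familiar maneuvers'' the paper leaves implicit. Your alternative route via Theorem~\ref{gc-thm}, the iterated Lemma~\ref{convolution-lemma} and Theorem~\ref{gcz} is also correct and pleasantly computation-free, but the core approach matches the paper's.
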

\begin{cor}
  \[
  Z(t, \mf m, \beta) = \overline Z(t/q, \mf m, \beta)^{\star^1 q}
  \]
\end{cor}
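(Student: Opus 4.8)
The plan is to deduce this corollary from two theorems already in hand, Theorem~\ref{thm4} and Theorem~\ref{gc-thm}, using only the elementary bookkeeping fact that $\mf m = \pi\mf o$, and hence $\pi^2\mf o = \pi\,\mf m$. No genuinely new computation is needed; the corollary is the statement one gets by feeding the output of one theorem into the other.

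First I would invoke the first displayed identity of Theorem~\ref{thm4}, namely $Z(t,\mf m,\beta) = Z(t,\pi^2\mf o,\beta)^{\star^1 q}$. This already exhibits $Z(t,\mf m,\beta)$ as a $\star^1$-$q$-power, so the only remaining task is to rewrite the base $Z(t,\pi^2\mf o,\beta)$ in terms of $Z(t,\mf m,\beta)$ itself. For that I would apply Theorem~\ref{gc-thm}, in the form $Z(t,\pi^{\ell}V,\beta) = \overline{Z}^{\ell}(t/q^{\ell},V,\beta)$, with $\ell = 1$ and $V = \mf m$: since $\pi\,\mf m = \pi^2\mf o$, this reads $Z(t,\pi^2\mf o,\beta) = \overline{Z}(t/q,\mf m,\beta)$, the superscript $1$ on the transform being the plain transform built from $C = q^{-\beta}$, which is exactly the $\overline{(\cdot)}$ of the corollary. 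Substituting this into the identity from Theorem~\ref{thm4} gives $Z(t,\mf m,\beta) = \overline{Z}(t/q,\mf m,\beta)^{\star^1 q}$, as claimed.

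There is essentially no obstacle here; the one point worth checking carefully is that the conventions align. The operators $\overline{(\cdot)}$ and $\star^1$ in the corollary are the $\ell = 1$ instances of the iterated transform $\overline{(\cdot)}^{\ell}$ and product $\star^{\ell}$, both formed with $C = q^{-\beta}$, so the $\ell = 1$ cases of Theorems~\ref{thm4} and \ref{gc-thm} are indeed stated with precisely the operators appearing in the corollary. The only mildly delicate input is the set-theoretic identity $\pi^2\mf o = \pi\,\mf m$, which is what lets one apply Theorem~\ref{gc-thm} with $V = \mf m$ in place of $V = \mf o$, turning the chain $\pi^2\mf o \to \pi^2\mf o$ in Theorem~\ref{thm4} into the self-referential chain $\mf m \to \pi^2\mf o = \pi\mf m \to \overline{Z}(t/q,\mf m,\beta)$ that the corollary records.
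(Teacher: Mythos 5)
Your proposal is correct and is exactly the derivation the paper intends: the corollary is stated without proof immediately after Theorem~\ref{thm4}, and the evident route is to combine its case $\ell=1$ (i.e.\ $Z(t,\mf m,\beta)=Z(t,\pi^2\mf o,\beta)^{\star^1 q}$) with Theorem~\ref{gc-thm} applied to $V=\mf m$, $\ell=1$, using $\pi\mf m=\pi^2\mf o$. Your check that $\overline{(\cdot)}^{1}$ is the plain transform with $C=q^{-\beta}$ is the right convention to verify, and it holds.
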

Some remarks:
\begin{enumerate}
  \item This says that the grand canonical partition function for any ideal is the $q$th power, using the appropriate operator $\star^{\ell}$, of the grand canonical partition function for its unique maximal ideal;
  \item Theorem~\ref{recursion2thm} corresponds to $\ell = 0$;
  \item This formula is valid for any integer $\ell$, including negative integers. This explains how to extend to fractional ideals outside of $\mf o$;
  \item We may iterate so that, for instance
  \[
  Z(t, \mf o, \beta) = \left(Z(t, \pi^2 \mf o, \beta)^{\star^1 q}\right)^{\star^0 q}.
  \]
\end{enumerate}
\begin{proof}[Proof of Theorem~\ref{thm4}]
In the most general setting, this follows by replacing $\log q$ in \eqref{menergy} with $\ell \log q$. Then, using familiar maneuvers,
\[
Z(N, \pi^{\ell} \mf o, \beta) = \sum_{n_0 + \cdots + n_{q-1}=N} N! \bigg\{\prod_{r < s} q^{-\beta \ell n_r n_s} \bigg\} \prod_{r=0}^{q-1} \frac{1}{n_r!} Z(n_r, \pi^{\ell+1} \mf o, \beta),
\]
and,
\[
Z(t, \pi^{\ell} \mf o, \beta) = \sum_{n_0, \ldots, n_{q-1}} \bigg\{\prod_{r < s} q^{-\beta \ell n_r n_s} \bigg\} \prod_{r=0}^{q-1} \frac{t^{n_r}}{n_r!} Z(n_r, \pi^{\ell+1} \mf o, \beta).
\]
The theorem now follows from Lemma~\ref{lemma:2}.
\end{proof}

\subsection{Probabilities of Cylinder Sets}

We turn to the explicit construction and analysis of the underlying probability space induced by physical considerations. This will be constructed from the probability spaces from the canonical ensemble.

Our probability space
\[
\Omega := \bigsqcup_{N} \mf o^N
\]
with $\sigma$-algebra
\[
\mathcal S = \sigma\{A_1 \sqcup \cdots \sqcup A_M: M \in \mathbb N, A_n \in \mathcal S_n\}.
\]
A generic element in $\mathcal S$ looks like $A = A_1 \sqcup A_2 \sqcup \cdots$ where $A_n \in \mathcal S_n$. The probability measure induced by the Boltzmann factor is then
\[
\mathbb P(A) := \frac{1}{Z(t, \mf o, \beta)} \sum_{N=0}^{\infty} Z(N, A_N, \beta) \frac{t^N}{N!},
\]
or equivalently,
\[
\mathbb P(A | N_{\mf o} = N) = \frac{Z(N, A_N, \beta)}{Z(N, \mf o, \beta)} \qq{and} \mathbb P\{N_{\mf o} = N\} = \frac{Z(N, \mf o, \beta)}{Z(t, \mf o, \beta)} \frac{t^N}{N!}.
\]
$\mathbb P(A)$ depends implicitly on $\beta$ and $t$. If we need to make this dependence explicit, we will write $\mathbb P(t, A, \beta)$. Theorem~\ref{edge case} implies that when $\beta = 0$, $\mathbb P\{N_{\mf o} = N\} = e^{-t} {t^N}/{N!}$, and hence $N_{\mf o}$ is a Poisson random variable with parameter $t$. When $\beta = +\infty$, $N_{\mf o}$ is a binomial random variable with $q$ trials each with probability of success $t/(q + t)$.

The numerator of $\mathbb P(A)$ is an important generating series, and we define
\[
Z(A) = Z(t, A, \beta) := \sum_{N=0}^{\infty}  Z(N, A_N, \beta) \frac{t^N}{N!}.
\]


Given $A$ as above, $\zeta \in \mf o$ and $r \in \N$, define
\begin{equation}
  \label{push down def}
\zeta + \pi^{\ell} A = \bigsqcup_N (\zeta + \pi^{\ell} A_N).
\end{equation}
Like in the canonical ensemble, there is a simple relationship between $Z(t, \zeta + \pi^{\ell} A, \beta)$ and $\overline{Z}^{\ell}(t/q^{\ell}, A, \beta)$. This is recorded in the following lemma, which follows immediately from Lemma~\ref{push down lemma} and the definition of $\overline{\sum a_n t^n}^{\ell}$.
\begin{lemma}
\[
Z(t, \zeta + \pi^{\ell}\mf o, \beta) =  \overline{Z}^{\ell}(t/q^{\ell}, \mf o, \beta).
\]
\end{lemma}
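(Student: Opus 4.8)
The plan is to prove the identity by expanding both sides as exponential generating functions in $t$ and matching the coefficient of $t^n$. On the left, the definition of the grand canonical partition function gives
\[
Z(t, \zeta + \pi^{\ell}\mf o, \beta) = \sum_{n=0}^{\infty} \frac{Z(n, \zeta + \pi^{\ell}\mf o, \beta)}{n!}\, t^n ,
\]
so the whole question reduces to understanding the canonical partition functions $Z(n, \zeta + \pi^{\ell}\mf o, \beta)$. These are exactly what Lemma~\ref{push down lemma} controls (its partition-function form is recorded as Lemma~\ref{lemma:1}): applying it with $E$ equal to all of $\mf o^n$, and using the translation invariance of $\mu^n$ to discard the shift by $\zeta$, yields $Z(n, \zeta + \pi^{\ell}\mf o, \beta) = q^{-\ell\beta{n \choose 2} - \ell n}\, Z(n, \mf o, \beta)$. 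Substituting this into the series above rewrites the left-hand side as $\sum_{n} q^{-\ell\beta{n \choose 2}}\,\frac{Z(n,\mf o,\beta)}{n!}\,(t/q^{\ell})^n$.

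Next I would identify the right-hand side with the same series. Writing $Z(t, \mf o, \beta) = \sum_{n} a_n t^n$ with $a_n = Z(n,\mf o,\beta)/n!$, the definition of the $\ell$th iterate of the transform (with $C = q^{-\beta}$, as fixed in the section) gives $\overline{(\,\cdot\,)}^{\ell}$ acting by $\overline{Z}^{\ell}(t, \mf o, \beta) = \sum_{n} C^{\ell{n \choose 2}} a_n t^n = \sum_{n} q^{-\ell\beta{n \choose 2}} a_n t^n$, and then evaluating at $t/q^{\ell}$ multiplies the $n$th coefficient by the additional factor $q^{-\ell n}$. The result is precisely the series obtained from the left-hand side in the previous paragraph, so the two sides agree term by term. (Alternatively, one can simply invoke Theorem~\ref{gc-thm} with $V = \mf o$ together with the remark that $Z(n, \zeta + \pi^{\ell}\mf o, \beta) = Z(n, \pi^{\ell}\mf o, \beta)$ for every $n$ by translation invariance, so that $Z(t, \zeta + \pi^{\ell}\mf o, \beta) = Z(t, \pi^{\ell}\mf o, \beta)$.)

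There is no genuine obstacle here; the proof is a short unwinding of definitions, which is why the statement is flagged as following immediately from Lemma~\ref{push down lemma}. The one point I would take care to spell out is the bookkeeping of the two separate powers of $q$: the factor $q^{-\ell\beta{n \choose 2}}$ is produced by the transform $\overline{(\,\cdot\,)}^{\ell}$ (it encodes the interaction energy between particles landing in distinct cosets of the maximal ideal of $\pi^{\ell}\mf o$), while the factor $q^{-\ell n}$ is produced by the rescaling $t \mapsto t/q^{\ell}$ (it encodes the Haar-volume contraction $\mu(\pi^{\ell}\mf o) = q^{-\ell}$). Their product is exactly the scaling constant $q^{-\ell\beta{n \choose 2} - \ell n}$ supplied by the push-down lemma, and this coincidence is what makes the identity hold exactly.
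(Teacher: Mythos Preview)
Your proposal is correct and is exactly the argument the paper has in mind: the paper states that the lemma ``follows immediately from Lemma~\ref{push down lemma} and the definition of $\overline{\sum a_n t^n}^{\ell}$,'' and you have simply written out that unwinding termwise. Your alternative route via Theorem~\ref{gc-thm} with $V=\mf o$ plus translation invariance is also valid, since that theorem precedes the lemma in the paper.
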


Given Borel $B \subseteq \mf o$ we define $\mathcal C(B) = \sigma(N_B)$ as before, and we define the $\sigma$-algebra of cylinder sets $\mathcal C \subseteq \mathcal S$ to be that generated by $\{N_B: B \in \mathcal B\}$. Each $C$ in $\mathcal C$ can be written as $C = C_1 \sqcup C_2 \sqcup \cdots$ where $C_N = C \cap \{N_{\mf o} = N\} \in \mathcal C_N$. To get a feel for cylinder sets in the grand canonical setting, consider the event $\{N_B = n\}$,
\[
\{N_B = n\} = \bigsqcup_{N=n}^{\infty} S_N \cdot B^n \times (B^c)^{N-n}
\]
The event that {\em all} particles are in $B$ is given by $\bigsqcup_N B^N$.

As in the canonical ensemble, the cylinder sets whose probabilities are most easy to describe are those of the form $\{N_{\mathbf B} = \mathbf n\}$ where $\mathbf B = (B_1, \ldots, B_M)$ is a family of disjoint balls. Here we do not necessarily assume that the union of the balls is all of $\mf o$, but we do relabel the balls so that $\mathbf B^j = (B^j_1, \ldots, B^j_{M_j})$ are the family of balls contained in $j + \mf m$. By likewise reorganizing the coordinates of $\mathbf n$ we can write
\[
\{N_{\mathbf B} = \mathbf n\} = \{N_{\mathbf B_0} = \mathbf n_0, \ldots, N_{\mathbf B_{q-1}} = \mathbf n_{q-1} \}.
\]

We use the fact that $j + \mf m$ is in bijection with $\mf o$ under the map $\alpha \mapsto j + \pi \alpha$, and write $\underline{B}_k^j$ for the pre-image of $B_k^j$ under this map. If $B_k^j$ has radius $q^{-r_k^j}$ then $\underline B_k^j$ has radius $q^{-r_k^j + 1}$. The $\underline{\mathbf B}^j = (\underline{B}_1^j, \ldots, \underline{B}_{M_j}^j)$ are now disjoint balls in $\mf o$.
If we can write $Z(t,\{N_{\mathbf B} = \mathbf n\},\beta)$ in terms of the $Z(t,\{N_{\underline{\mathbf B}^j} = \mathbf n^j\},\beta)$ and provide a formula for $ Z(t, \{N_{\mf o} = n\}, \beta)$ we will have an inductive formula for $Z(t,\{N_{\mathbf B} = \mathbf n\},\beta)$.

As in the canonical case we take $\mathbf C^j = (C^j_1, \ldots, C^j_{K_j})$ to be a collection of disjoint balls complementary to the $\mathbf B^j$ in $j + \mf m$. Clearly then $\underline{\mathbf C}^j$ is complementary to $\underline{\mathbf B}^j$ in $\mf o$. We remark that it is possible that either $\mathbf B^j$ or $\mathbf C^j$ are empty. If, for instance $\mathbf B^j$ is empty, then necessarily $\mathbf n^j$ is empty and $\mathbf C^j = j + \mf m$. We need to then decipher what the event $\{N_{\mathbf B^j} = \mathbf n^j\}$ means when $\mathbf B^j$ and $\mathbf n^j$ are empty. In this case, we are putting no restriction on the number of particles in $j + \mf m$, and this event is equal to
\[
E := \bigsqcup_N (j + \mf m)^N.
\]
Note that $\underline E$ is the probability that all particles are in $\mf o$ with no restrictions on their number or location---that is with no restriction whatsoever. It follows that $\underline E = \Omega$ and $Z(t, \underline E, \beta) = Z(t, \mf o, \beta)$. In essense, this says that if the event $\{\mathbf N_{\mathbf B} = \mathbf n\}$ makes no specifications on the number of particles in a coset, then that coset contributes a factor of $Z(t, \mf m, \beta)$ to $Z(t,\{N_{\mathbf B} = \mathbf n\},\beta)$. Put another way, this provides one base case necessary for an inductive formula for $Z(t, \{N_{\mathbf B} = \mathbf n\}, \beta)$ in terms of the $Z(t,\{N_{\underline{\mathbf B}^j} = {\mathbf n}^j\},\beta)$. The other base case is given by
\[
Z(t, \{N_{\mf o} = n\}, \beta) = \frac{t^N}{N!} Z(n, \mf o, \beta).
\]

\begin{thm}
  \label{gc-cyl-thm}
  \[
  Z(t,\{N_{\mathbf B} = \mathbf n\},\beta) = \prod_{j=0}^{q-1} \overline{Z}(t/q, \{N_{\underline{\mathbf B}^j} = \mathbf n^j\}, \beta).
  \]
\end{thm}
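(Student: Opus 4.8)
The plan is to factor the event across the $q$ cosets of $\mf m$ and then push each factor down to $\mf o$. First I would unwind the generating function: by definition $Z(t, \{N_{\mathbf B} = \mathbf n\}, \beta) = \sum_N \frac{t^N}{N!}\, Z\big(N, \{N_{\mathbf B} = \mathbf n\}_N, \beta\big)$, where $\{N_{\mathbf B} = \mathbf n\}_N$ denotes the $N$-particle slice of the event. I would then decompose this slice according to how many particles ($N_0, \dots, N_{q-1}$, summing to $N$) fall in each coset $j + \mf m$, exactly as in the proof of Theorem~\ref{recursionthm}. The slice with prescribed coset-occupancy is, up to permutation, $\frac{N!}{N_0! \cdots N_{q-1}!}$ translates under $S_N$ of the ``block'' configuration in $\prod_j (j+\mf m)^{N_j}$, on which $|\Delta_N(\bs\alpha)|^{\beta}$ factors over cosets by the additivity of energy over cosets; Fubini then gives
\[
Z\big(N, \{N_{\mathbf B} = \mathbf n\}_N, \beta\big) = \sum_{N_0 + \cdots + N_{q-1} = N} \frac{N!}{N_0! \cdots N_{q-1}!} \prod_{j=0}^{q-1} Z\big(N_j, \{N_{\mathbf B^j} = \mathbf n^j\}^{(j+\mf m)}_{N_j}, \beta\big),
\]
where $\{N_{\mathbf B^j} = \mathbf n^j\}^{(j+\mf m)}_{N_j} \subseteq (j+\mf m)^{N_j}$ is the event that coset $j$ holds $N_j$ particles with $N_{B^j_m} = n^j_m$ for each $m$ (the summand vanishes unless $N_j \geq |\mathbf n^j|$, and when $\mathbf B^j = \emptyset$ this is just the unconstrained integral over $(j+\mf m)^{N_j}$).

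Next I would substitute this into the generating function: the $N!$ cancels, and the double sum over $N$ and $(N_0, \dots, N_{q-1})$ collapses to an unrestricted sum over $q$-tuples of non-negative integers which factors as a product of single sums — the same maneuver as in the proof of Theorem~\ref{gcz}. This yields
\[
Z(t, \{N_{\mathbf B} = \mathbf n\}, \beta) = \prod_{j=0}^{q-1}\bigg(\sum_{N_j \geq 0} \frac{t^{N_j}}{N_j!}\, Z\big(N_j, \{N_{\mathbf B^j} = \mathbf n^j\}^{(j+\mf m)}_{N_j}, \beta\big)\bigg) = \prod_{j=0}^{q-1} Z\big(t, \{N_{\mathbf B^j} = \mathbf n^j\}^{(j+\mf m)}, \beta\big).
\]

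Finally I would push each coset factor down to $\mf o$ via the bijection $\alpha \mapsto j + \pi\alpha$, under which $\underline{\mathbf B}^j$ maps to $\mathbf B^j$ and the event $\{N_{\underline{\mathbf B}^j} = \mathbf n^j\}_{N_j} \subseteq \mf o^{N_j}$ maps onto $\{N_{\mathbf B^j} = \mathbf n^j\}^{(j+\mf m)}_{N_j} = j + \pi\{N_{\underline{\mathbf B}^j} = \mathbf n^j\}_{N_j}$. The integral form of Lemma~\ref{push down lemma} (with $\zeta = j$, $r = 1$) then gives $Z\big(N_j, \{N_{\mathbf B^j} = \mathbf n^j\}^{(j+\mf m)}_{N_j}, \beta\big) = q^{-\beta{N_j \choose 2} - N_j}\, Z\big(N_j, \{N_{\underline{\mathbf B}^j} = \mathbf n^j\}_{N_j}, \beta\big)$. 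Writing $Z(t, \{N_{\underline{\mathbf B}^j} = \mathbf n^j\}, \beta) = \sum_N a_N t^N$, the factor $q^{-\beta{N_j \choose 2}} = C^{N_j \choose 2}$ (with $C = q^{-\beta}$) is precisely the weight inserted by the transform $\overline{\,\cdot\,}$, while the factor $q^{-N_j}$ is produced by the substitution $t \mapsto t/q$; hence $Z(t, \{N_{\mathbf B^j} = \mathbf n^j\}^{(j+\mf m)}, \beta) = \overline{Z}(t/q, \{N_{\underline{\mathbf B}^j} = \mathbf n^j\}, \beta)$, and substituting into the displayed product proves the theorem. The only delicate points I anticipate are the combinatorial bookkeeping of the multinomial coefficient and the degenerate case $\mathbf B^j = \emptyset$ (where the factor is $Z(t, \mf m, \beta)$, consistent with $\overline{Z}(t/q, \Omega, \beta) = Z(t, \mf m, \beta)$ from Theorem~\ref{gc-thm}); the analytic content is entirely inherited from the earlier results.
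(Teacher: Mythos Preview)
Your proof is correct and in fact takes a cleaner, more direct route than the paper's. The paper proves this theorem by invoking Corollary~\ref{cyl cor 2}: it expands $Z(N, \{N_{\mathbf B} = \mathbf n\}, \beta)$ via the explicit formula involving the complementary balls $\mathbf C^j$, their centers $\xi_k^j$, and the cross-interaction factors $|\zeta_m^j - \xi_k^j|^{\beta n_m^j \ell_k^j}$; it then sums over $N$ to unconstrain the $L^j$, exchanges products and sums, pushes down using the identity ${N^j \choose 2} + {L \choose 2} = {N^j + L \choose 2} - L N^j$, rewrites the cross-terms in the $\underline{\zeta}^j_m, \underline{\xi}^j_k$ variables, and finally applies Corollary~\ref{cyl cor 2} a second time \emph{in reverse} to recombine everything into $Z(N^j + L, \{N_{\underline{\mathbf B}^j} = \mathbf n^j\}, \beta)$. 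Your argument bypasses all of this machinery: you work directly with the coset decomposition and additivity of energy over cosets (mirroring the proofs of Theorems~\ref{recursionthm} and~\ref{gcz}), never introducing the complementary balls or their centers at all, and then apply a single push-down via Lemma~\ref{push down lemma}. What your route buys is brevity and conceptual transparency --- the theorem is revealed as an immediate extension of the factorization behind Theorem~\ref{gcz} to cylinder events. The paper's route, by contrast, keeps the explicit ball-by-ball structure visible throughout, which may be useful computationally but makes for a considerably longer derivation.
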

\begin{proof}
For each $J$, $\mathbf n^j = (n^j_1, \ldots, n^j_{M_j})$, define $N^j := n^j_1 + \cdots + n^j_{M_j}$. On the event $\{ N_{\mathbf B} = \mathbf n\}$, $N_j$ the minimum number of particles in $j + \mf m$.

and using Corollary~\ref{cyl cor 2},
\begin{align*}
Z(t, \{N_{\mathbf B} = \mathbf n\}, \beta) &= \sum_{N=N'}^{\infty} t^N \bigg\{\prod_{j=0}^{q-1} \frac{Z(N^j, \{N_{\mathbf B^j} = \mathbf n^j\}, \beta)}{N^j!} \bigg\} \\
& \sum_{\Sigma L^j = N-N'} \sum_{\Sigma l^0_k = L^0} \cdots \sum_{\Sigma l^{q-1}_k = L^{q-1}} \left\{ \prod_{j=0}^{q-1} \bigg\{\prod_{m=1}^{N^j} \prod_{k=1}^{L^j} | \zeta_m^j - \xi_k^j |^{\beta n_m^j \ell_k^j} \frac{Z(L^j, \{N_{\mathbf C^j} = \bs \ell^j\}, \beta)}{L^j!}\bigg\} \right\},
\end{align*}
where the sums are over $L^0 + \cdots + L^{q-1} = N-N'$ and all $\bs \ell^j = (\ell^j_1, \ldots, \ell^j_{K_j})$ where $\ell^j_1 + \cdots + \ell^j_{K_j} = L^j$ for $j=0, \ldots, q-1$.

The sum over $N$ and frees the constraints on $L^0 + \cdots + L^{q-1} = N-N'$, and allows us to sum over all non-negative $L^0, \ldots, L^{q-1}$. That is,
\begin{align*}
Z(t,\{N_{\mathbf B} = \mathbf n\}, \beta) &=  t^{N'} \bigg\{\prod_{j=0}^{q-1} \frac{Z(N^j, \{N_{\mathbf B^j} = \mathbf n^j\}, \beta)}{N^j!} \bigg\} \\
& \sum_{L^0, \ldots, L^{q-1}} \sum_{\Sigma l^0_k = L^0} \cdots \sum_{\Sigma l^{q-1}_k = L^{q-1}} \left\{ \prod_{j=0}^{q-1} t^{L^j} \frac{Z(L^j, \{N_{\mathbf C^j} = \bs \ell^j\}, \beta)}{L^j!} \bigg\{\prod_{m=1}^{M^j} \prod_{k=1}^{K^j} | \zeta_m^j - \xi_k^j |^{\beta n_m^j \ell_k^j} \bigg\} \right\}.
\end{align*}
This allows us to exchange the (inside) product over $j = 0,\ldots,q-1$ and the sums over the $L^j$. That is,
\begin{align*}
Z(t, \{N_{\mathbf B} = \mathbf n\}, \beta) &=  t^{N'} \bigg\{\prod_{j=0}^{q-1} \frac{Z(N^j, \{N_{\mathbf B^j} = \mathbf n^j\}, \beta)}{N^j!} \\ & \qquad \times \sum_L \sum_{\Sigma l_k = L}  t^{L} \bigg\{\prod_{m=1}^{M^j} \prod_{k=1}^{K^j} | \zeta_m^j - \xi_k^j |^{\beta n_m^j \ell_k^j} \bigg\} \frac{Z(L, \{N_{\mathbf C^j} = \bs \ell\}, \beta)}{L!} \bigg\} \\
&=  \bigg\{\prod_{j=0}^{q-1} \sum_L t^{N^j + L} \frac{Z(N^j, \{N_{\mathbf B^j} = \mathbf n^j\}, \beta)}{N^j!} \\ & \qquad \times   \sum_{\Sigma l_k = L} \bigg\{\prod_{m=1}^{M^j} \prod_{k=1}^{K^j} | \zeta_m^j - \xi_k^j |^{\beta n_m^j \ell_k^j} \bigg\} \frac{Z(L, \{N_{\mathbf C^j} = \bs \ell\}, \beta)}{L!} \bigg\}.
\end{align*}
Note that $\{N_{\mathbf B^j} = \mathbf n^j\}$ accounts for the whereabouts of $N^j$ particles in $j + \mf m$, and hence
\[
Z(N^j,\{N_{\mathbf B^j} = \mathbf n^j\},\beta) = Z({N^j},\{N_{j + \pi \underline{\mathbf B}^j} = \mathbf n^j\},\beta) = q^{-\beta{N^j \choose 2} - N^j} Z(N^j, \{N_{\underline{\mathbf B}^j} = \mathbf n^j\}, \beta).
\]
Similarly,
\[
Z(L,\{N_{\mathbf C^j} = \bs \ell\},\beta) = q^{-\beta{L \choose 2} - L} Z(L,\{N_{\underline{\mathbf C}^j} = \bs \ell\}, \beta).
\]
And, since
\[
{N^j \choose 2} + {L \choose 2} = {N^j + L \choose 2} - LN^j,
\]
we have
\begin{align*}
& t^{N^j + L} Z(L,\{N_{\mathbf C^j} = \bs \ell\},\beta) Z(N^j, \{N_{\mathbf B^j} = \mathbf n^j\}, \beta) \bigg\{\prod_{m=1}^{M^j} \prod_{k=1}^{K^j} | \zeta_m^j - \xi_k^j |^{\beta n_m^j \ell_k^j} \bigg\} \\
& \quad = q^{-\beta {N^j + L \choose 2}} \left(\frac{t}{q}\right)^{N^j + L} Z(N^j,\{N_{\underline{\mathbf B}^j} = \mathbf n^j\},\beta) Z(L,\{N_{\underline{\mathbf C}^j} = \bs \ell\},\beta) \bigg\{\prod_{m=1}^{M^j} \prod_{k=1}^{K^j} q^{\beta} | \zeta_m^j - \xi_k^j |^{\beta n_m^j \ell_k^j} \bigg\}.
\end{align*}
Now note that, if we denote the centers of $\underline{B}^j_{m}$ and $\underline{C}^j_k$ by $\underline{\zeta}^j_m$ and $\underline{\xi}^j_k$, then $|\underline{\zeta}^j_m - \underline{\xi}^j_k| = q|\zeta^j_m - \xi^j_k|$. It follows that
\begin{align*}
& t^{N^j + L} Z(L, \{N_{\mathbf C^j} = \bs \ell\}, \beta) Z(N^j,\{N_{\mathbf B^j} = \mathbf n^j\}, \beta) \bigg\{\prod_{m=1}^{M^j} \prod_{k=1}^{K^j} | \zeta_m^j - \xi_k^j |^{\beta n_m^j \ell_k^j} \bigg\} \\
& \quad = q^{-\beta {N^j + L \choose 2}} \left(\frac{t}{q}\right)^{N^j + L} Z(N^j,\{N_{\underline{\mathbf B}^j} = \mathbf n^j\},\beta) Z(L,\{N_{\underline{\mathbf C}^j} = \bs \ell\},\beta) \bigg\{\prod_{m=1}^{M^j} \prod_{k=1}^{K^j} | \underline{\zeta}_m^j - \underline{\xi}_k^j |^{\beta n_m^j \ell_k^j} \bigg\},
\end{align*}
and hence
\begin{align*}
Z(t, \{N_{\mathbf B} = \mathbf n\}, \beta) &= \bigg\{\prod_{j=0}^{q-1} \sum_L q^{-\beta{N^j + L \choose 2}}\left(\frac{t}{q}\right)^{N^j + L} \frac{Z(N^j, \{N_{\underline{\mathbf B}^j} = \mathbf n^j\}, \beta)}{N^j!} \\ & \qquad \times   \sum_{\Sigma l_k = L} \bigg\{\prod_{m=1}^{M^j} \prod_{k=1}^{K^j} | \underline{\zeta}_m^j - \underline{\xi}_k^j |^{\beta n_m^j \ell_k^j} \bigg\} \frac{Z(L, \{N_{\underline{\mathbf C}^j} = \bs \ell\}, \beta)}{L!}  \bigg\}.
\end{align*}
Since the $\underline{\mathbf B}^j$ and $\underline{\mathbf C}^j$ union to all of $\mf o$, we can use Corollary~\ref{cyl cor 2} again to find
\begin{align*}
Z(t, \{N_{\mathbf B} = \mathbf n\}, \beta) = \bigg\{\prod_{j=0}^{q-1} \sum_L \frac{\left(t/q\right)^{N^j + L}}{(N^j + L)!} q^{-\beta{N^j + L \choose 2}} Z(N^j+L, \{N_{\underline{\mathbf B}^j} = \mathbf n^j\}, \beta) \bigg\}
\end{align*}
In fact, since $Z(K, \{N_{\mathbf B^j} = \mathbf n^j\}, \beta) = 0$ if $K < N^j$,
\begin{align*}
Z(t, \{N_{\mathbf B} = \mathbf n\}, \beta) &= \bigg\{\prod_{j=0}^{q-1} \sum_K \frac{(t/q)^{K}}{K!} q^{-\beta{K \choose 2}}Z(K,\{N_{\underline{\mathbf B}^j} = \mathbf n^j\}, \beta) \bigg\}
\end{align*}
where the last equality follows from Theorem~\ref{gc-thm}. That is,
\[
Z(t,\{N_{\mathbf B} = \mathbf n\},\beta) = \prod_{j=0}^{q-1} \overline{Z}(t/q, \{N_{\underline{\mathbf B}^j} = \mathbf n^j\}, \beta).  \qedhere
\]
\end{proof}

As an example, consider the cylinder set $\{ N_\mathbf B = \mathbf n\}$ in the $5$-adics given diagramatically by
\[
\{N_{\mathbf B} = \mathbf n\} = \foo.
\]
In words, $\{ N_\mathbf B = \mathbf n\}$ is the set of states for which there are 6 particles in a fixed coset of $\mf m = 5 \mathbb Z_5$, and 4 particles in a distinct coset of $\pi^2 \mf o = 25 \mathbb Z_5$. In our case, there are three cosets of $\mf m$ whose occupation number is unspecified. Each of these three will contribute a factor of $\overline{Z}(t/5, \mf o, \beta)$. The coset containing 6 particles will contribute a factor of
\[
\overline{Z}(t/5, \{N_{\mf o} = 6\}, \beta) =  Z(6, \mf o, \beta) 5^{-\beta{6 \choose 2}} \frac{(t/5)^6}{6!}.
\]
The remaining coset contributes a factor of
\begin{align*}
\overline{\overline{Z}(t/{25}, \{\N_{\mf m} = 4\}, \beta) \overline{Z}(t/{25}, \mf o, \beta)^4} &=
\overline{Z}^2(t/{25}, \{\N_{\mf o} = 4\}, \beta) \star \overline{Z}^2(t/{25}, \mf o, \beta)^{\star 4} \\
&= Z(4, \mf o, \beta) 5^{-2 \beta{4 \choose 2}} \frac{(t/25)^4}{4!} \star \overline{Z}^2(t/{25}, \mf o, \beta)^{\star 4}.
\end{align*}
Putting it all together, we have
\[
Z(t, \{N_{\mathbf B} = \mathbf n\}, \beta) = \overline{Z}(t/5, \mf o, \beta)^3 Z(6, \mf o, \beta) 5^{-\beta{6 \choose 2}} \frac{(t/5)^6}{6!} \left(Z(4, \mf o, \beta) 5^{-2 \beta{4 \choose 2}} \frac{(t/25)^4}{4!} \star \overline{Z}^2(t/{25}, \mf o, \beta)^{\star 4} \right).
\]

\section{Multi-Component Ensembles}
\label{multi-comp}
Here we allow particles to have different integer charges. Here we suppose $\mathbf Q = (Q_1, \ldots, Q_M)$ be distinct positive integers representing allowable charges. To eliminate notational complexity, we will view $\mathbf Q$ as fixed once and for all.  Let $\mathbf N = (N_1, \ldots, N_M)$ be positive integers representing the number of each species of particles.

If we suppose that the particles of charge $Q_m$ are identified with the coordinates of $\bs \alpha_{m} = (\alpha_m^1, \ldots, \alpha_m^{N_m})\in \mf o^{N_m}$ then the state of the system is specified by $(\bs \alpha_1, \ldots, \bs \alpha_M)$ and the energy of such a system is given by
\begin{equation}
  \label{multi-charge}
E(\bs \alpha_1, \ldots, \bs \alpha_M) = -\sum_{m=1}^M Q_m^2 \sum_{j < k}^{N_m} \log|\alpha_m^j - \alpha_m^k| - \sum_{\ell < m}^M Q_{\ell} Q_m \sum_{j=1}^{N_{\ell}} \sum_{k=1}^{N_m} \log|\alpha_{\ell}^j - \alpha_m^k|.
\end{equation}
This is the sum of the interaction energies between all pairs of particles.

The partition function in the canonical ensemble is thus
\begin{align*}
Z(\mathbf N, \mf o, \beta) &= \int\limits_{\bs \alpha_1 \in \mf o^{N_1}} \cdots
\int\limits_{\bs \alpha_M \in \mf o^{N_M}} \bigg\{ \prod_{m=1}^M \prod_{j<k}^{N_m} |\alpha_m^j - \alpha_m^k|^{\beta Q_m^2} \bigg\} \\
& \qquad \times \bigg\{\prod_{\ell < m}^M \prod_{j=1}^{N_{\ell}} \prod_{k=1}^{N_m} |\alpha_{\ell}^j - \alpha_m^k|^{\beta Q_m Q_{\ell}} \bigg\} d\mu^{N_1}(\bs \alpha^1) \cdots d\mu^{N_M}(\bs \alpha^M),
\end{align*}
and the partition function for the grand canonical ensemble (in the fugacity variables $\mathbf t = (t_1, \ldots, t_M)$) is given by
\[
Z(\mathbf t, \mf o, \beta) = \sum_{\mathbf N} Z(\mathbf N, \mf o, \beta) \frac{\mathbf t^{\mathbf N}}{\mathbf N!} \qq{where} \frac{\mathbf t^{\mathbf N}}{\mathbf N!} = \prod_{m=1}^M \frac{t_m^{N_m}}{N_m!}
\]

\begin{thm}
  \[
Z(\mathbf t, \mf o, \beta) = Z(\mathbf t, \mf m, \beta)^q
  \]
\end{thm}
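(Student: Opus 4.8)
The plan is to mimic the proof of Theorem~\ref{gcz} almost verbatim, the only new ingredient being a multi-component version of the additivity of energy over cosets of $\mf m$. Since any two points lying in distinct cosets of $\mf m$ are at distance $1$, the factor $|\alpha_\ell^j - \alpha_m^k|^{\beta Q_\ell Q_m}$ contributed by such a pair equals $1$ no matter which charges $Q_\ell, Q_m$ are attached to them. Hence, if we partition the domain $\mf o^{N_1}\times\cdots\times\mf o^{N_M}$ according to the coset of $\mf m$ in which each of the $N_1+\cdots+N_M$ coordinates lies, the integrand defining $Z(\mathbf N,\mf o,\beta)$ splits as a product of $q$ integrands, the $r$th one involving only the particles that landed in coset $r$.

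Next I would record the resulting recursion. Fix a $q\times M$ array of non-negative integers $(n_r^m)$ with $\sum_{r=0}^{q-1} n_r^m = N_m$ for every $m$ --- the entry $n_r^m$ counting the species-$m$ particles in coset $r$ --- and write $\mathbf n_r = (n_r^1,\ldots,n_r^M)$. There are $\prod_{m=1}^M \binom{N_m}{n_0^m,\ldots,n_{q-1}^m}$ assignments of the labelled coordinates realizing a given array, and, exactly as in Theorem~\ref{recursionthm}, translation invariance of $\mu$ and of the pairwise differences identifies the integral over each translated coset with $Z(\mathbf n_r,\mf m,\beta)$ (with the convention $Z(\mathbf 0,\cdot,\beta):=1$). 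Collecting these and dividing by $\mathbf N! = \prod_m N_m!$ turns the multinomial coefficients into reciprocals of factorials and yields
\[
\frac{Z(\mathbf N,\mf o,\beta)}{\mathbf N!} = \sum_{(n_r^m)} \prod_{r=0}^{q-1} \frac{Z(\mathbf n_r,\mf m,\beta)}{\mathbf n_r!},
\]
the sum being over all arrays as above.

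Finally I would feed this into $Z(\mathbf t,\mf o,\beta) = \sum_{\mathbf N} Z(\mathbf N,\mf o,\beta)\,\mathbf t^{\mathbf N}/\mathbf N!$, using $\mathbf t^{\mathbf N} = \prod_{r=0}^{q-1}\mathbf t^{\mathbf n_r}$. As in the proof of Theorem~\ref{gcz}, summing over all $\mathbf N$ dissolves the constraint $\sum_r n_r^m = N_m$, so the summation runs independently over all $q$-tuples $(\mathbf n_0,\ldots,\mathbf n_{q-1})$ of $M$-dimensional occupancy vectors, and Fubini factors the $q$-fold sum:
\[
Z(\mathbf t,\mf o,\beta) = \prod_{r=0}^{q-1}\Bigl(\sum_{\mathbf n}\frac{Z(\mathbf n,\mf m,\beta)}{\mathbf n!}\,\mathbf t^{\mathbf n}\Bigr) = Z(\mathbf t,\mf m,\beta)^q.
\]
Alternatively one can invoke the physical heuristic used after Theorem~\ref{gcz}: the grand canonical potential is additive over the $q$ mutually non-interacting subsystems formed by the cosets of $\mf m$, each a copy of the multi-component ensemble on $\mf m$.

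The only genuine content is the coset-factorization of the integrand together with the combinatorics of the $q\times M$ array; once these are in place the argument is entirely formal, and no idea beyond the single-component case (Theorem~\ref{gcz}) is required. The mild nuisance is the notational overhead of carrying the species index $m$ alongside the coset index $r$ throughout.
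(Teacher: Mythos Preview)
Your proposal is correct and follows essentially the same approach as the paper: additivity of energy over cosets (which still holds in the multi-component case since cross-coset distances are $1$ regardless of the charges involved), the multinomial count $\prod_m \binom{N_m}{n_0^m,\ldots,n_{q-1}^m}$ for distributing each species among cosets, translation invariance to reduce each coset integral to $Z(\mathbf n_r,\mf m,\beta)$, and finally the dissolution of the constraints $\sum_r n_r^m = N_m$ upon summing over all $\mathbf N$ in the exponential generating function. The paper's proof differs only in presentation---it carries out the Fubini-style reorganization of sums and integrals more explicitly by viewing the occupancy data as an $M\times q$ matrix and switching from row sums to column sums---but the underlying argument is the one you have given.
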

\begin{proof}
A bit of strategy is in order. As before we will partition particles according to which coset they reside in. The allowance of different charges doesn't change the fact that the interaction energy between particles in different cosets is always zero.  That is, the energy is still additive over cosets. This implies the integrand in $Z(\mathbf N, \mf o, \beta)$ factors over cosets and Fubini's Theorem allows us to move the product, which is indexed by the cosets of $\mf m$, outside the integral. The translation invariance of the remaining integrands allows us to translate each coset to $\mf m$. Some combinatorial reorganization of the sums appearing in the partition function will then produce the result.

Given a state $(\bs \alpha_1, \ldots, \bs \alpha_M)$, for each $1 \leq m \leq M$
we introduce a factored state for $\bs \alpha_m$ given by $(\bs \alpha_m^0, \ldots \bs \alpha_m^{q-1})$. We will denote the $j$th entry of $\bs \alpha_m^r$ be $\alpha_m^r(j)$. We then define $\mathbf N_m := [N_m^0, \ldots, N_m^{q-1}]$ for the vector of non-negative integers counting how many of the particles with charge $Q_m$ are in each of the cosets. We view $\mathbf N$ as an $M \times q$ matrix of non-negative integers the $m, r$ entry of which $N_m^r$ is the number of particles with charge $Q_m$ in coset $r + \mf m$. Note that $N_m^0 + \cdots + N_m^{q-1} = N_m$. In summary $N_m$ is the number of particles with charge $Q_m$. These particles are located at the coordinates of $\bs \alpha_m$. The vector $\mathbf N_m$ gives the number of particles in each coset and $(\bs \alpha_m^0, \ldots, \bs \alpha_m^{q-1})$ is the factored state with $\bs \alpha_m^r \in \mf o^{N_m^r}$ representing the location of all charge $Q_m$ particles lying in the coset $r + \mf m$.

Then,
\begin{equation}
 \label{multi-charge2}
E(\bs \alpha_1, \ldots, \bs \alpha_M) = -\sum_{r=0}^{q-1} \bigg[ \sum_{m=1}^M Q_m^2 \sum_{j < k}^{N_m^r} \log|\alpha_m^{r}(j) - \alpha_m^{r}(k)| - \sum_{\ell < m}^M Q_{\ell} Q_m \sum_{j=1}^{N_{\ell}^r} \sum_{k=1}^{N_m^r} \log|\alpha_{\ell}^{r}(j) - \alpha_m^{r}(k)| \bigg].
\end{equation}

We will reorganize the integral defining the grand canonical partition function replacing the integral over all state vectors with an integral over factored state vectors. In order to do this correctly we need to account for the number of state vectors corresponding to a factored state. By permuting the corrdinates of $(\bs \alpha_m^0, \ldots, \bs \alpha_m^{q-1})$ we arrive at generic state vector specifying the positions of the particles with charge $Q_m$. Thus a choice of $\bs \alpha_m^r$ for all $m =1, \ldots, M$ and $r = 0, \ldots, q-1$ leads to $N_1! \cdots N_M!$ different states of the system. The set of factored states still overcounts unique states, since permuting the coordinates of any one $\bs \alpha_m^r$ does not actually change the state of the system. Thus, in order to compensate for the overcounting within factored states, we need to introduce terms like $N_m^0! \cdots N_m^{q-1}!$ into the denominator. That is, if we integrate over factored states instead of all state vectors we need to compensate with the combinatorial term
\[
\frac{N_1!}{N_1^0! \cdots N_1^{q-1}!} \cdot \cdots \cdot \frac{N_M!}{N_M^0! \cdots N_M^{q-1}!}
\]
Now,
\begin{align*}
Z(\mathbf t, \mf o, \beta) &= \sum_{N_1=0}^{\infty} \frac{t_1^{N_1}}{N_1!} \cdots \sum_{N_M=0}^{\infty} \frac{t_M^{N_M}}{N_M!} \int\limits_{\bs \alpha_1 \in \mf o^{N_1}} \cdots
\int\limits_{\bs \alpha_M \in \mf o^{N_M}} \bigg\{ \prod_{m=1}^M \prod_{j<k}^{N_m} |\alpha^m_j - \alpha^m_k|^{\beta Q_m^2} \bigg\} \\
& \qquad \times \bigg\{\prod_{\ell < m}^M \prod_{j=1}^{N_{\ell}} \prod_{k=1}^{N_m} |\alpha^{\ell}_j - \alpha^m_k|^{\beta Q_m Q_{\ell}} \bigg\} d\mu^{N_1}(\bs \alpha_1) \cdots d\mu^{N_M}(\bs \alpha_M).
\end{align*}
First we note that by integrating over factored states we replace integrals such as
\[
\int\limits_{\bs \alpha_m \in \mf o^{N_m}} \qq{with} \bigg\{ \int\limits_{\bs \alpha_m^0 \in \mf m^{N_m^0}} \cdots \int\limits_{\bs \alpha_m^{q-1} \in \mf (q-1 + \mf m)^{N_m^{q-1}}} \bigg\}
\]
The combinatorial factor together with the fact that $N_m^{0} + \cdots + N_m^{q-1} = N_m$ allows us to replace the sums
\[
\sum_{N_m} \frac{t_m^{N_m}}{N_m!} = \sum_{N_m} \sum_{\{\mathbf N_m : N_m^{0} + \cdots + N_m^{q-1} = N_m\} } \frac{t_m^{N_m^0} \cdots t_m^{N_m^{q-1}}}{N_m^0! \cdots N_m^{q-1}!},
\]
The sum over $N_m$ however, makes the condition that $N_m^{0} + \cdots + N_m^{q-1} = N_m$ superfluous, and since the summand depends in no other way on $N_m$, we can in fact write
\[
\sum_{N_m} \frac{t_m^{N_m}}{N_m!} = \sum_{\mathbf N_m} \frac{t_m^{N_m^0} \cdots t_m^{N_m^{q-1}}}{N_m^0! \cdots N_m^{q-1}!},
\]
where the sum over the $\mathbf N_m$ is no longer constrained.

It follows then that
\begin{align*}
Z(\mathbf t, \mf o, \beta)
& =  \sum_{\mathbf N_1} \frac{t_1^{N_1^{0}} \cdots t_1^{N_1^{q-1}}}{N_1^0! \cdots N_1^{q-1}!} \cdots \sum_{\mathbf N_M} \frac{t_M^{N_M^{0}} \cdots t_M^{N_M^{q-1}}}{N_M^0! \cdots N_M^{q-1}!}\\
& \qquad \times
\bigg\{ \int\limits_{\bs \alpha_1^0 \in \mf m^{N_1^0}} \cdots \int\limits_{\bs \alpha_1^{q-1} \in \mf (q-1 + \mf m)^{N_1^{q-1}}} \bigg\} \cdots \bigg\{ \int\limits_{\bs \alpha_M^0 \in \mf m^{N_M^0}} \cdots \int\limits_{\bs \alpha_M^{q-1} \in \mf (q-1 + \mf m)^{N_M^{q-1}}} \bigg\}
\\
& \qquad \qquad \times \prod_{r=0}^{q-1} \bigg\{ \prod_{m=1}^M \prod_{j < k}^{N_m^r} |\alpha_m^r(j) - \alpha_m^r(k)|^{\beta Q_m^2} \times \prod_{\ell < m}^M \prod_{j=1}^{N_{\ell}^r} \prod_{k=1}^{N_m^r} |\alpha_{\ell}^r(j) - \alpha_m^r(k)|^{\beta Q_{\ell} Q_m} \bigg\}  \\
& \qquad \qquad \qquad \times d\mu^{N^0_1}(\bs \alpha_1^0) \cdots d\mu^{N^{q-1}_1}(\bs \alpha_1^{q-1}) \cdots d\mu^{N^0_M}(\bs \alpha_M^0) \cdots d\mu^{N^{q-1}_M}(\bs \alpha_M^{q-1}).
\end{align*}
The factorization of the integrand over cosets is the multiplicative version of
the additivity of energy over cosets, \ref{multi-charge}.

Next, we reorder the integrals so that, instead of being grouped by the charge $Q_m$, they are instead grouped by cosets. That is,
\begin{align*}
& \bigg\{ \int\limits_{\bs \alpha_1^0 \in \mf m^{N_1^0}} \cdots \int\limits_{\bs \alpha_1^{q-1} \in \mf (q-1 + \mf m)^{N_1^{q-1}}} \bigg\} \cdots \bigg\{ \int\limits_{\bs \alpha_M^0 \in \mf m^{N_M^0}} \cdots \int\limits_{\bs \alpha_M^{q-1} \in \mf (q-1 + \mf m)^{N_M^{q-1}}} \bigg\} \\
& \qquad = \bigg\{ \int\limits_{\bs \alpha_1^0 \in \mf m^{N_1^0}} \cdots \int\limits_{\bs \alpha_M^0 \in \mf m^{N_M^0}} \bigg\} \cdots \bigg\{ \int\limits_{\bs \alpha_1^{q-1} \in (q-1 +\mf m)^{N_1^{q-1}}} \cdots \int\limits_{\bs \alpha_M^{q-1} \in (q-1+\mf m)^{N_M^{q-1}}} \bigg\}
\end{align*}
We are now in position to use Fubini's Theorem,
\begin{align*}
&  Z(\mathbf t, \mf o, \beta)
   =  \sum_{\mathbf N_1} \frac{t_1^{N_1^{0}} \cdots t_1^{N_1^{q-1}}}{N_1^0! \cdots N_1^{q-1}!} \cdots \sum_{\mathbf N_M} \frac{t_M^{N_M^{0}} \cdots t_M^{N_M^{q-1}}}{N_M^0! \cdots N_M^{q-1}!}\\
& \qquad  \times  \prod_{r=0}^{q-1} \int\limits_{\bs \alpha_1 \in (r +\mf m)^{N_1^r}} \cdots \int\limits_{\bs \alpha_M \in (r+\mf m)^{N_M^r}} \prod_{m=1}^M \prod_{j < k}^{N_m} |\alpha_{m,j} - \alpha_{m,k}|^{\beta Q_m^2} \times \prod_{\ell < m}^M \prod_{j=1}^{N_{\ell}^r} \prod_{k=1}^{N_m^r} |\alpha_{\ell,j} - \alpha_{m,k}|^{\beta Q_{\ell} Q_m} \\
& \qquad \times d\mu^{N_1^r}(\bs \alpha_1) \cdots d\mu^{N_M^r}(\bs \alpha_M).
\end{align*}
Note that each of the integrands appearing in the inner most product is invariant under translation $r + \mf m \mapsto \mf m$, and we find our first major simplification (by appropriate definition, but still!),
\begin{align*}
&  Z(\mathbf t, \mf o, \beta)
   =  \sum_{\mathbf N_1} \frac{t_1^{N_1^{0}} \cdots t_1^{N_1^{q-1}}}{N_1^0! \cdots N_1^{q-1}!} \cdots \sum_{\mathbf N_M} \frac{t_M^{N_M^{0}} \cdots t_M^{N_M^{q-1}}}{N_M^0! \cdots N_M^{q-1}!} \prod_{r=0}^{q-1} Z(\mathbf N^r, \mf m, \beta)
\end{align*}
Now, the sum over $\mathbf N_1, \ldots, \mathbf N_M$ can be thought of as a sum over the rows of all $M \times q$ matrices with positive integer coefficients. We may reindex this sum so that instead of summing over rows, we sum over the columns $\mathbf N^0, \ldots, \mathbf N^{q-1}$. When we do this, we regroup the monomials accordingly
\[
\frac{t_1^{N_1^{0}} \cdots t_1^{N_1^{q-1}}}{N_1^0! \cdots N_1^{q-1}!} \cdots  \frac{t_M^{N_M^{0}} \cdots t_M^{N_M^{q-1}}}{N_M^0! \cdots N_M^{q-1}!} = \frac{t_1^{N_1^{0}} \cdots t_M^{N_M^{0}}}{N_1^0! \cdots N_M^0!} \cdots \frac{t_1^{N_1^{q-1}} \cdots t_M^{N_M^{q-1}}}{N_1^{q-1}! \cdots N_M^{q-1}!} = \frac{\mathbf t^{\mathbf N^0}}{\mathbf N^0!} \cdots \frac{\mathbf t^{\mathbf N^{q-1}}}{\mathbf N^{q-1}!}.
\]
It follows that
\begin{align*}
Z(\mathbf t, \mf o, \beta)
   &=  \sum_{\mathbf N^0} \cdots \sum_{\mathbf N^{q-1}}
\frac{\mathbf t^{\mathbf N^0}}{\mathbf N^0!} \cdots \frac{\mathbf t^{\mathbf N^{q-1}}}{\mathbf N^{q-1}!}
   Z(\mathbf N^0, \mf m, \beta) \cdots Z(\mathbf N^{q-1}, \mf m, \beta) \\
   &=  \sum_{\mathbf N^0}  \frac{\mathbf t^{\mathbf N^0}}{\mathbf N^0!}    Z(\mathbf N^0, \mf m, \beta) \cdots \sum_{\mathbf N^{q-1}} \frac{\mathbf t^{\mathbf N^{q-1}}}{\mathbf N^{q-1}!} Z(\mathbf N^{q-1}, \mf m, \beta) \\
   &= \bigg[\sum_{\mathbf N}  \frac{\mathbf t^{\mathbf N}}{\mathbf N!} Z(\mathbf N, \mf m, \beta) \bigg]^{q} \\
   &= Z(\mathbf t, \mf m, \beta)^q. \qedhere
\end{align*}

\end{proof}

\section{Acknowledgements}

The author has had many conversations about the `$p$-adic Selberg integral' and 'repelling $p$-adic random variables' over the last decade that ultimately lead to the current work. In particular, Oregon colleagues Matt Grimes, Jonathan Wells, Joe Webster and Ben Young have shared their insights on the topic, and a subset of them are working on continutions of the work presented here. Much of the groundwork was laid at a series of small workshops (SQuaRE) hosted by the American Institute of Mathematics where I was joined by Igor Pritsker, Jeff Vaaler and Maxim Yattselev, who encouraged my fascination in the current work. I am indebted to the generosity of these organizations and individuals.

\bibliography{p-electro}

\begin{center}
\noindent\rule{4cm}{.5pt}
\vspace{.25cm}

\noindent {\sc \small Christopher D.~Sinclair}\\
{\small Department of Mathematics, University of Oregon, Eugene OR 97403} \\
email: {\tt csinclai@uoregon.edu}
\end{center}

\end{document}